\newtheorem{observation}{Observation}
\newtheorem{definition}{Definition}
\newtheorem{theorem}{Theorem}
\newtheorem{openproblem}{Open Problem}
\newtheorem{example}{Example}
\newtheorem{corollary}{Corollary}
\newtheorem{remark}{Remark}
\title{Distance-Preserving Graph Compression Techniques\footnote{This work is funded by the Natural Sciences and Engineering Research Council of Canada (NSERC).}}
\newtheorem{lemma}{Lemma}
\newcommand{\nl}{{n_L}}
\newcommand{\nll}{{n^{2}_L}}
\newcommand{\nrr}{{n^{2}_R}}
\newcommand{\nr}{{n_R}}
\newcommand{\leffs}{{\mathcal{L}}}
\newcommand{\riis}{{\mathcal{R}}}
\newcommand{\leff}{{L}}
\newcommand{\rii}{{R}}
\newcommand{\treeleft}{{T_i^{L}}}
\newcommand{\treeright}{{T_j^{R}}}
\newcommand{\merged}{{e^*}}
\newcommand{\mle}{{\text{MARK\_LEFT}}}
\newcommand{\umle}{{\text{UNMARK\_LEFT}}}
\newcommand{\mr}{{\text{MARK\_RIGHT}}}
\newcommand{\unmr}{{\text{UNMARK\_RIGHT}}}
\newcommand{\sumleft}{S_L}
\newcommand{\sumright}{S_R}
\newcommand{\sumleftu}{S_{LU}}
\newcommand{\sumleftm}{S_{LM}}
\newcommand{\sumrightu}{S_{RU}}
\newcommand{\sumrightm}{S_{RM}}
\newcommand{\eij}{E^{(v_i, u_j)}}
\newcommand{\nw}{\mathcal{W}}
\newcommand{\optimalmarking}{M^*}
\newcommand{\markingleft}{M_L}
\newcommand{\markingright}{M_R}
\let\oldequation\equation
\let\oldendequation\endequation
\renewenvironment{equation}
  {\linenomathNonumbers\oldequation}
  {\oldendequation\endlinenomath}
\author{Amirali Madani$^1$ \and Anil Maheshwari$^1$}
\date{%
    $^1$ Carleton University\\%
}
\begin{document}
\maketitle
\begin{abstract}

We study the problem of distance-preserving graph compression 
for weighted paths and trees. The problem entails a weighted 
graph $G = (V, E)$ with non-negative weights, and a subset of edges 
$E^{\prime} \subset E$ which needs to be removed from G (with 
their endpoints merged as a supernode). The goal is to 
redistribute the weights of the deleted edges in a way that 
minimizes the error. The error is defined as the sum of the 
absolute differences of the shortest path lengths between different pairs of nodes before and after contracting $E^{\prime}$. 
Based on this error function, we propose optimal approaches for 
merging any subset of edges in a path and a single edge in a 
tree. Previous works on graph compression techniques aimed at 
preserving different graph properties (such as the chromatic 
number) or solely focused on identifying the optimal set of 
edges to contract. However, our focus in this paper is on 
achieving optimal edge contraction (when the contracted edges are provided 
as input) specifically for weighted trees and paths. 
    
\end{abstract}
\section{Introduction and Related Work}

Graphs have become increasingly relevant for solving real-world 
problems, leveraging their numerous characteristics 
\cite{application1, application2}. However, many of these graphs 
are incredibly large, consisting of trillions of edges and 
vertices, which poses scalability challenges for modern systems 
\cite{scalability, scalability2, scalability3}.  Consequently, 
graph compression techniques have garnered significant research 
interest in recent years, aiming to obtain a smaller graph while 
retaining the essential properties of the original input graph. 
Different names, such as graph compression \cite{shrink}, graph 
summarization \cite{summarization}, graph modification 
\cite{modification}, and graph contraction \cite{contraction}, 
have been used in the literature to describe this problem, each 
within its specific context, leading to various proposed 
approaches. Regardless of the terminology or context, most of 
these problems focus on reducing the size of the graph while 
preserving a specific property \cite{shrink}, while some 
approaches aim to modify a graph to satisfy a given property 
\cite{nphard, nphard2}. Furthermore, graphs can be compressed in 
different ways, such as vertex deletions, edge deletions, and 
edge contractions. It is worth noting that many of these 
resulting graph modification problems are NP-hard, as indicated 
in \cite{nphard}.\\\\
A very relevant problem is commonly referred to in the 
literature as the \textit{blocker} problem \cite{contraction}. 
Given a graph $G$, integers $k$ and $d$, an invariant $\pi: 
\mathcal{G} \xrightarrow{} \mathbb{R}$, and some modification 
operations (such as edge contractions), a blocker problem asks 
whether there exists a set of at most $k$ graph modification 
operators such that in the resulting graph $G^{'}$, 
$\pi\left(G^{\prime}\right) \leq \pi(G)-d$ holds. In recent 
years, blocker problems have been studied for various graph 
properties such as the chromatic number 
\cite{chromatic,indset3}, maximum weight independent set and 
minimum weight vertex cover \cite{indset}, maximum independent 
set \cite{indset2, indset3}, the clique number \cite{indset3}, 
the total domination number \cite{dominate}, diameter 
\cite{diameter}, and maximum weight clique \cite{maximumclique}. 
A lot of these blocker problems are defined as 
\textit{contraction} problems, in which graphs can only be 
modified via edge contractions. More precisely, given a graph 
$G$, integers $k$ and $d$, and an invariant $\pi: \mathcal{G} 
\xrightarrow{} \mathbb{R}$, $\operatorname{CONTRACTION}(\pi)$ 
asks whether there exists a set of at most $k$ edge contractions 
that results in a graph $G^{\prime}$ with 
$\pi\left(G^{\prime}\right) \leq \pi(G)-d$. Galby \textit{et 
al.} \cite{galby2021reducing} studied the contraction problems 
in which a specific edge could be provided as input (in addition 
to $\pi$). As an important contribution, Galby \textit{et al.} 
\cite{galby2021reducing} proved that, unless $\textbf{P=NP}$, 
there exists no polynomial-time algorithm that decides whether 
contracting a given edge reduces the total domination number. 
Biedl \textit{et al.} \cite{biedl} studied the problem of flow-
preserving graph simplification, which is the problem of finding 
a set of edges whose removal does not change the maximum flow of 
the underlying network.\\\\
Shortest path queries are crucial to various domains, including 
search engines \cite{kargar2011keyword}, networks 
\cite{networks, networks2} and transportation 
\cite{transportation, transportation2}. In a more relevant work 
to this paper, Bernstein \textit{et al.} \cite{bernstein} 
studied a slightly different variant of 
$\operatorname{CONTRACTION}$. In their work, Bernstein 
\textit{et al.} \cite{bernstein} focused on compressing a given 
graph as much as possible, while permitting only a limited 
amount of distance distortion among any pair of vertices. Given 
a tolerance function $\varphi(x)=x / \alpha-\beta$, with $\alpha 
\geq 1$ and $\beta \geq 0$, Bernstein \textit{et al.} 
\cite{bernstein} studied the problem of finding the maximum 
cardinality set of edges whose contraction results in a graph 
$G^{\prime}$ such that $\operatorname{d}_{G^{\prime}}(u, v) \geq 
\varphi\left(\operatorname{d}_{G}(u, v)\right)$ for all $ u,v 
\in G$. However, in their work, they only focused on 
\textit{finding} an optimal set instead of optimally 
\textit{redistributing} the weights. More specifically, after 
finding the optimal set $E^{\prime}$, they set the weight of 
each edge $e \in E^{\prime}$ to zero.\\\\
Unlike the work by Bernstein \textit{et al.} \cite{bernstein}, 
the work by Sadri \textit{et al.} \cite{shrink} focuses on 
optimally redistributing the weights. However, they do not 
provide any bound guarantees on the amount of error. Precisely, 
they assess the efficiency of their proposed approach by a set 
of experimental studies. Moreover, their weight redistribution 
approach for trees ignores the size of each subtree rooted at 
the endpoints of a given contracted edge, which is a key factor 
in deciding the optimal assignment as we will show in this 
paper. More recently, Liang \textit{et al.} \cite{reachability} 
studied the problem of reachability-preserving graph compression 
techniques. There have also been other works related to graph 
compression for unweighted and weighted graphs, as listed in 
\cite{unweighted, unweighted2}. Zhou \textit{et al.} 
\cite{zhou2010network} proposed an efficient approach to remove 
a large portion of the edges in a network without affecting the 
overall connectivity by much. Ruan \textit{et al.} 
\cite{unweighted} studied the minimum gate-vertex set discovery
(MGS) problem. The MGS problem is concerned with finding the 
minimum cardinality set of vertices, designating them as gate 
vertices, using which every non-local pair of vertices (whose 
distance is above some threshold) is able to recover its 
distance in the original network. However, the work by Ruan et al. \cite{unweighted} only studies unweighted graphs. \\\\
\textbf{Where our work stands in the literature: }Surprisingly, 
there has been little attention in the literature to one 
particular side (discussed momentarily) of the distance-preserving graph compression problem. To the best of our 
knowledge, all existing works have either only focused on 
finding an optimal set of edges to contract or have not provided 
any bounds on the amount of error. We study a different problem: 
instead of choosing \textit{which} optimal edge to contract, we 
are interested in finding out \textit{how} to contract a given 
edge optimally. Even though we still study distance-preserving 
graph compression, our focus is mainly on optimally modifying 
the graph after a given edge has been contracted. Our primary 
modification operation is changing the edge weights of the 
graph.  It is worth noting that this problem has received 
limited attention in the literature, with the closest existing 
work being the study by Sadri \textit{et al.} \cite{shrink}. 
Their approach involves solving a system of equations to 
determine the new edge weights in the resulting graph. However, 
their analysis of the problem has certain limitations. Firstly, 
they do not offer any optimal guarantees for their weight 
distribution technique. Furthermore, their weight redistribution 
method does not account for the sizes of the individual 
subgraphs connected to a given edge. In contrast, as we will 
demonstrate throughout this paper, the size of a subtree 
(particularly in the context of paths and trees) plays a crucial 
role in achieving optimal weight redistribution.\\\\
\textbf{The organization of the paper:} The remainder of this 
paper is organized as follows. In Section \ref{summary}, we 
present a summary of our main results along with some comments 
and details regarding each contribution. In Section \ref{avalin}, 
we describe the notation used in the paper, using which we 
formally define the scope of our paper in Section \ref{dovomin}. 
In Section \ref{paths}, we study the problem of distance-preserving graph compression for weighted paths, where we prove 
optimal approaches to contracting any set of $k$ edges. In 
Section \ref{treescompressed}, we study the problem of graph 
compression for weighted trees, where we provide an optimal 
linear-time algorithm for contracting a single edge. We present 
the concluding remarks of this paper along with some potential 
avenues of future work in Section \ref{conclusion}.

\subsection{Contributions and Results}
\label{summary}
 In Section \ref{paths}, we study the problem of distance-preserving graph compression for weighted paths. 
\begin{itemize}
    \item As a warm-up, we prove an optimal bound for merging\footnote{As defined in Section \ref{prel}, we use the terms \textit{merging} and \textit{contracting} interchangeably. They both refer to the act of contracting an edge or a set of edges.} a 
    single edge in a path topology in Section \ref{oneedgepath}, 
    whose main result is stated in Theorem \ref{l3aval}.
\end{itemize}
In Section \ref{oneedgepath}, we present a method for 
transforming any weight redistribution for a given merged edge 
$e^{*}$ to another redistribution in which only the weights of 
its neighbouring edges are altered. \\
\begin{itemize} 
    \item We present Algorithm \ref{pathsalg} for merging any 
    set of $k\leq  \frac{n}{2}$ independent edges (edges that 
    have no endpoints in common and induce a matching on the 
    path) in a path of size $n$.
\end{itemize}

We note that Algorithm \ref{pathsalg} produces suboptimal 
solutions when applied to a contiguous subpath (a connected 
subgraph) of the given input path. We relate this suboptimal 
performance to the distinction between merging two regular 
vertices and two \textit{supernodes}. We thoroughly investigate 
this distinction in Lemma \ref{general}, where we present an 
optimal redistribution for merging two supernodes.  \\
\begin{itemize}
    \item Having the suboptimal performance of Algorithm 
    \ref{pathsalg} for merging subpaths in mind, in Section 
    \ref{subp} we study the problem of finding the optimal 
    redistribution for any connected subgraph of a given input 
    path. The optimal method for contracting any contiguous 
    subpath of the input path is presented in Theorem 
    \ref{subp2}.
\end{itemize}
In Section \ref{treescompressed}, we study the problem of 
distance-preserving graph compression for the tree topology 
where we present optimal approaches for merging a single edge in 
a weighted tree. To this end, we define a relevant problem, 
which we refer to as the \textit{marking problem}.  The 
objective of the marking problem is to minimize the error, as 
defined in Section \ref{prel}, by marking a subset of the 
neighbouring edges of the merged edge $e^{*}$ (with weight 
$w^{*}$). For merging an edge $e^*$ with weight $w^*$, an edge 
$e_i$ is said to be marked if its new weight $w^{\prime}
(e_i)=w(e_i)+ w^*$. As a warm-up, in Section \ref{equal}, we 
study the marking problem for a tree in which the neighbouring 
subtrees of $e^{*}$ are of equal sizes. For such edges, we show 
that the optimal marking is achieved when all edges to the left 
or right of (but not both) $e^{*}$ are marked. In Section 
\ref{varying}, we generalize the findings of Section \ref{equal} 
and present an optimal marking for any merged edge $e^{*}$ in a 
weighted tree.

The definition of the marking problem implies that an edge can 
either be fully marked or unmarked. It is non-trivial to see 
whether \textit{fractionally marking} the edges produces better 
results. Therefore, in Section \ref{fract}, we thoroughly 
investigate the distinction between the marking problem 
(Definition \ref{marking}) and the fractional marking problem 
(Definition \ref{fracdef}) and conclude that any solution to the 
latter  can be transformed into another solution to the former 
without worsening the error value. 
\begin{itemize}
    \item We present an $\mathcal{O}(|V|)$-time algorithm for 
    finding an optimal marking for $e^{*}$ in Algorithm 
    \ref{treesalg2}.
\end{itemize}

\section{Preliminaries}
In this section, we first discuss the common notation (Section 
\ref{avalin}) and then present some additional definitions 
(Section \ref{dovomin}) that help describe the scope of our paper. 
In Section \ref{lemmaa}, we present a simple number-theoretic 
lemma that is later used in some proofs of the paper. Throughout 
this section, we use the path in Figure \ref{runningexample} as 
the running example of the definitions.
\label{prel}

\begin{figure*}[h]
    \centering
        \subfloat[]{{\includegraphics[width=350pt]{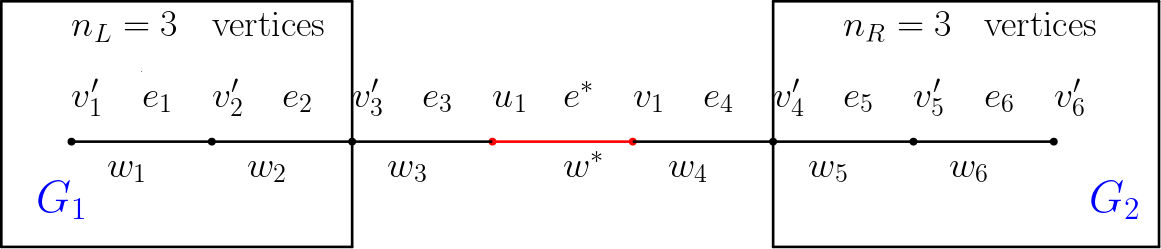}}}
        \hspace{2 cm}
        \subfloat[]{{\includegraphics[width=300pt]{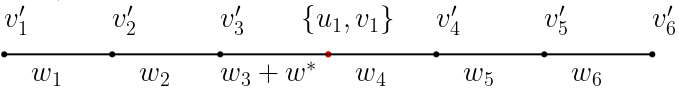}}}
        \hspace{2 cm}
    \caption{The path used as the running example in Section 
    \ref{prel}: (a) A path of 8 vertices, with regular edges 
    denoted by $e_i, w_i=w(e_i)$, 
    and the contracted edge highlighted in red and denoted by 
    $e^*=(u_1,v_1), w^*=w(e^*)$, and (b) The same path after 
    contracting $e^*$ and marking $e_3$ by setting $w^{\prime}
    (e_3)=w_3+w^*$. In this example, $n_L=n_R$, this is not 
    always the case.}
    \label{runningexample}
\end{figure*}
\subsection{Notation} 
\label{avalin}
Let $G=(V,E)$ denote a graph with $V$ and $E$ as its sets of 
vertices and edges respectively. With every edge $e \in E$, we 
associate a weight $w(e)$ $w: E \xrightarrow[]{} 
\mathbb{R}_{\geq 0}$. We sometimes denote an edge $e$ by 
$(u,v)$, where $u,v \in V$ are referred to as the 
\textit{endpoints} of $e$. Throughout this paper, we frequently 
denote edges and vertices using subscripts (for instance $e_i$ 
and $v_i$) and superscripts for merged edges (for instance 
$e^*$). When the context is clear, we sometimes abuse the 
weight notation and denote the weight of $e_i$ and $e^*$ by 
$w_i$ and $w^*$ respectively. We denote the number of vertices 
($|V|$) by $n$ and a path of $n$ vertices by $P_n$. Throughout 
this paper, we frequently use $n_L$ and $n_R$ in different 
contexts to denote different quantities. However, in most cases, 
we denote by $n_L$ and $n_R$ the number of vertices to the left 
and right of a given vertex of a path respectively (including 
itself). For instance, in Figure \ref{runningexample}-(a), $n_L$ 
and $n_R$ denote the number of vertices to the left of (and 
including) $v^{\prime}_3$ and the right of (and including) 
$v^{\prime}_4$, respectively. Formally, let $G_1$ be one of the 
two connected components of $H=G-\{ e_3=(v^{\prime}_3,u_1), e^*=
(u_1,v_1), e_4=(v_1,v^{\prime}_4)\}$ that is adjacent to 
$v^{\prime}_3$, and let $G_2$ be the connected component of $H$ 
that is adjacent to $v^{\prime}_4$. We have 
\begin{linenomath*}$$n_L=\left|\{v| v \in G_1 \}\right|, n_R=\left|\{v| v \in G_2 \}\right|$$\end{linenomath*}
For instance, in Figure \ref{runningexample}-(a), $n_L=3$ 
because $G_1$ includes vertices $\{v^{\prime}_1, v^{\prime}_2, 
v^{\prime}_3\}$, and $n_R=3$ because $G_2$ includes vertices $\{v^{\prime}_4,v^{\prime}_5, v^{\prime}_6\}$. Therefore, in this 
paper, we assume that the graph is laid out in the plane and the 
edge to be merged ($e^*$ in Figure \ref{runningexample}-(a)) is 
horizontal. This assumption will simplify the description of our 
results.
\subsection{Additional Definitions}
\label{dovomin}
We now provide some additional definitions for defining the scope of our paper. 
\begin{definition}
For a weighted graph $G=(V,E)$, the \textit{distance} between 
two vertices $u, v \in V,$ denoted by $d_G(u,v)$, is the length 
of the shortest weighted path between $u$ and $v$ in $G$.
\end{definition}

\begin{definition}
A \textit{merged} edge, or a \textit{contracted} edge, is one 
whose endpoints are merged, and the edge itself is removed from 
the graph.     
\end{definition}
For instance, $e^*=(u_1,v_1)$ in Figure \ref{runningexample}-(a) 
(highlighted in red) is a contracted edge. After contracting 
$e^*,$ the path of Figure \ref{runningexample}-(a) is 
transformed into the one in Figure \ref{runningexample}-(b). 
\begin{definition}
    A \textit{supernode} is a node containing a subset 
    $V^{\prime} \subset V$ of the nodes in the original graph, 
    which is a result of a series of edge contractions. We 
    denote the set of all supernodes by $V_s$.
    \label{superr}
\end{definition}
\begin{definition}
    For a supernode $v\in V_s$, the \textit{cardinality} of $v$, 
    denoted by $\mathcal{C}, \; \mathcal{C}: V_s \rightarrow 
    \mathbb{N},$ is the number of regular vertices it contains.
    \label{superrr}
\end{definition}
In the path of Figure \ref{runningexample}-(b), $\{u_1, v_1\}$ 
is a supernode with cardinality 2. 
\begin{definition}
    Let $G=(V,E)$ be a graph with weight function $w: E 
    \rightarrow \mathbb{R}_{\geq 0}$, and let $e^* \in E$ be the 
    merged edge. A \textit{weight redistribution} is a new 
    weight function $w^{\prime}: E \rightarrow \mathbb{R}_{\geq 
    0}$ in which $w^{\prime}(e_i)=w(e_i) + \epsilon_i, \; 
    \forall e_i \in E, \; \epsilon_i \in \mathbb{R} $.
\end{definition}
In the path of Figure \ref{runningexample}-(b), the weight 
redistribution sets the edge weights of Figure \ref{runningexample}-(a) as $w^{\prime}(e)=w(e)+w(e^*)$ if $e=e_3$, and $w^{\prime}
(e)=w(e)$ otherwise.

\begin{definition}
        With reference to a given merged edge $e^*$ in a graph $G=(V,E)$ with the associated weight function $w: E \rightarrow \mathbb{R}_{\geq 0}$ and a new \textit{weight redistribution} function $w^{\prime}: E \rightarrow \mathbb{R}_{\geq 0}$, an edge $e_i$ is said to be \textit{marked} if $w^{\prime}(e_i)=w(e_i)+w(e^{*})$, \textit{unmarked} if $w^{\prime}(e_i)=w(e_i)$, and \textit{altered} otherwise.
\end{definition}
As shown in Figure \ref{runningexample}-(b), $e_3$ is marked and all other edges are unmarked.
\begin{definition}
    With reference to a set of merged edges $E_m \subset E$, the \textit{set of merged vertices} $V_m$ consists of all vertices with at least one endpoint in $E_m$, or  $V_m=\{v| v,u \in V, \exists e=(u,v) \in E_m\}$. The set of \textit{unmerged vertices} is defined as $\overline{V_{m}}=V- V_m$.
\end{definition}
In the path of Figure \ref{runningexample}, we have $E_m=\{e^*\}$, $V_m=\{u_1,v_1\}$, and $\overline{V_m}=\{v^{\prime}_1, v^{\prime}_2, v^{\prime}_3, v^{\prime}_4, v^{\prime}_5, v^{\prime}_6\}$.
   With reference to a set of merged edges $E_m \subset E$ and a weight redistribution $w^{\prime}$, let $G^{'}$ be the resulting graph after contracting the edges in $E_m$ and setting the new edge weights according to $w^{\prime}$. The error associated with $w^{\prime}$ with respect to $E_m$ is denoted by $|\Delta E|$ and calculated as: 
\begin{linenomath*}\begin{equation}
\label{hehe}
|\Delta E|=\sum_{u \in V_m, v \in \overline{V_m} , \text{ or } u, v \in \overline{V_m}, u\neq v}\left|d_{G}( u, v)-d_{ G^{\prime} }(u, v)\right|
\end{equation}\end{linenomath*}
In other words, the error is equal to the sum of the absolute differences (between $G$ and $G^{\prime}$) of all shortest path lengths between vertices $u,v$, at least one of which is in $\overline{V_m}$.\\
Returning to our example in Figure \ref{runningexample}, the error function of Eq.~(\ref{hehe}) sums up the absolute values of the shortest path differences among the vertices of $\overline{V_m}=\{v^{\prime}_1, v^{\prime}_2, v^{\prime}_3, v^{\prime}_4, v^{\prime}_5, v^{\prime}_6\}$, and between the vertices of $\overline{V_m}$ and the vertices of $V_m=\{u_1, v_1\}$. As the final example, we now explain how the distance difference between one of the aforementioned pairs of vertices is calculated. In Figure \ref{runningexample}, the shortest path value difference between $v^{\prime}_1$ and $u_1$ changes from $w_1+w_2+w_3$ in G (Figure \ref{runningexample}-(a)) to $w_1+w_2+w_3+w^*$ in $G^{\prime} $ (Figure \ref{runningexample}-(b)). The error induced by this change is thus equal to $|w_1+w_2+w_3+w^*-w_1-w_2-w_3|=w^*$.

 We are now ready to present the formal definition of our first studied problem:
\begin{definition}
    \textbf{Distance-Preserving Graph Compression}: Given a graph $G$, and a set of contracted edges $E_m$, the problem of distance-preserving graph compression is to find a weight redistribution $w^{\prime}$ for which $|\Delta E|$ is minimized.
\end{definition}
\subsection{A Number-Theoretic Lemma}
\label{lemmaa}
The following lemma is used in some of the proofs of this paper:
\begin{lemma}
For all real numbers $A, B, C, D, x, y $, let $\alpha_1 = |x-A|+|x-A-B|$ and $\alpha_2=|y-C|+|y-B-C|$. We have $\alpha_1 \geq B$ and $\alpha_2 \geq B$. Furthermore, $\alpha_1=B$, $\alpha_2=B$ for $A \leq x \leq A+B$ and $C \leq y \leq B+C$.
\label{l12}
\end{lemma}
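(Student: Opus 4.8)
The plan is to reduce both statements to a single application of the triangle inequality after a sign flip inside one of the absolute values. Observe that $|x-A-B| = |(A+B)-x|$, so $\alpha_1 = |x-A| + |(A+B)-x|$ is a sum of two absolute values whose arguments, namely $p := x-A$ and $q := (A+B)-x$, satisfy $p+q = B$. Hence by the triangle inequality $\alpha_1 = |p| + |q| \ge |p+q| = |B| \ge B$, which gives the first claimed inequality. The same manoeuvre applied to $\alpha_2$ with $p := y-C$ and $q := (B+C)-y$, for which again $p+q = B$, yields $\alpha_2 = |p|+|q| \ge |B| \ge B$.

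For the equality characterization I would invoke the standard fact that equality in $|p|+|q| \ge |p+q|$ holds precisely when $p$ and $q$ have the same sign in the weak sense (both $\ge 0$ or both $\le 0$). When $B \ge 0$ we moreover have $|B| = B$, so $\alpha_1 = B$ iff $p$ and $q$ are both nonnegative or both nonpositive. The case $p \ge 0$, $q \ge 0$ translates to $x \ge A$ and $x \le A+B$, i.e.\ $A \le x \le A+B$, which is exactly the asserted range. The case $p \le 0$, $q \le 0$ forces $x \le A$ and $x \ge A+B$ simultaneously, which (using $B \ge 0$) collapses to the single point $x = A = A+B$ and is already contained in the previous range (indeed it is nonempty only when $B=0$). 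Thus $\alpha_1 = B$ exactly on $[A, A+B]$. An identical argument with the roles $p = y-C$, $q = (B+C)-y$ shows $\alpha_2 = B$ exactly on $[C, B+C]$.

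I would close by noting two small points that need a sentence of care rather than real work: the variable $D$ in the hypothesis plays no role in the statement (it is presumably consumed elsewhere in the paper), so nothing about it needs to be proved; and the degenerate case $B = 0$ is handled uniformly by the above, since then $\alpha_1 = 2|x-A| \ge 0 = B$ with equality iff $x = A$, matching the degenerate interval $[A,A]$. There is essentially no combinatorial or analytic obstacle here; the only thing to get right is the direction of the inequality $|B| \ge B$ and the precise equality condition for the triangle inequality, so the ``hard part,'' such as it is, is simply stating the same-sign condition cleanly and checking that the second (nonpositive) sign case contributes nothing new.
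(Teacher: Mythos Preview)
Your proof is correct and takes a genuinely different route from the paper's. The paper argues by contradiction and then splits into four cases according to the signs of $x-A$ and $x-A-B$, deriving a contradiction in each. You instead observe directly that $\alpha_1 = |x-A| + |(A+B)-x|$ is a sum $|p|+|q|$ with $p+q=B$, so a single application of the triangle inequality gives $\alpha_1 \ge |B| \ge B$ in one line, and the equality case falls out of the well-known same-sign criterion for equality in the triangle inequality. This is shorter and conceptually cleaner; the paper's case analysis is more elementary in the sense of avoiding any named inequality, but your argument also immediately yields the sharper bound $\alpha_1 \ge |B|$ and the full \emph{characterization} of equality (the paper only asserts sufficiency of $A \le x \le A+B$). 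Your remarks on the unused variable $D$ and the degenerate case $B=0$ are appropriate; you might add one clause noting that for $B<0$ the asserted interval $[A,A+B]$ is empty so the equality claim is vacuous there, but this is cosmetic.
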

\begin{proof}
We prove this using contradiction. Let us prove $\alpha_1 \geq B$, the other proof will be analogous. For the sake of contradiction, assume that $\alpha_ 1< B$. We have four cases depending on whether the values inside the absolute value function ($x-A$ and $x-A-B$) are positive or negative. Note that $|a| = a$ when $a\geq 0$, and $|a|=-a$ otherwise.
\begin{itemize}
    \item \textbf{Case 1:} $x < A$ and $x<A+B$:
\begin{linenomath*}    $$\alpha_1= A-x + A+B -x < B \xrightarrow{} A < x$$\end{linenomath*}
    which contradicts the assumption ($x<A$).
    \item \textbf{Case 2:} $x<A$ and $x \geq A+B$: These two conditions imply that $B<0$, we have:
    \begin{linenomath*}$$\alpha_1 = A-x + x-A-B <B \xrightarrow{} 0 <2B $$\end{linenomath*}
    which is a contradiction since $B<0$. 
    \item \textbf{Case 3:} $x\geq A $ and $x<A+B$:
    \begin{linenomath*}$$\alpha_1 = x-A +A+B-x <B \xrightarrow{} 0 <0$$\end{linenomath*}
    which is impossible.
    \item \textbf{Case 4:} $x\geq A$ and $x\geq A+B$:
    \begin{linenomath*}$$x-A+x-A-B<B\xrightarrow[]{}x<A+B$$\end{linenomath*}
    which contradicts the assumption.
\end{itemize}
Since we get a contradiction for every possible case, we have $\alpha_1\geq B$ and $\alpha_1=B$ for $A \leq x \leq A+B$. Similarly, we have $\alpha_2\geq B$ and $\alpha_2= B$ for $C \leq x \leq B+C$.
\end{proof}
We will also use the following corollary.
\begin{corollary}
Given two real numbers $x, z$ we have $|z|-|x| \leq |z-x|$. 
\label{obs}
\end{corollary}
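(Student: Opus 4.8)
The plan is to derive this directly from the ordinary triangle inequality for absolute values, namely $|a+b|\le |a|+|b|$ for all real $a,b$. The only trick is choosing the right decomposition of $z$.

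First I would write $z = (z-x) + x$, which is an identity in the reals. Applying the triangle inequality with $a = z-x$ and $b = x$ then gives $|z| = |(z-x)+x| \le |z-x| + |x|$. Rearranging this inequality by subtracting $|x|$ from both sides yields $|z| - |x| \le |z-x|$, which is exactly the claimed bound.

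Since the triangle inequality $|a+b|\le|a|+|b|$ is a standard fact about the absolute value function (provable by the same four-case analysis on the signs of $a$ and $b$ used in the proof of Lemma \ref{l12}, or simply quoted), there is no real obstacle here — the statement is the well-known reverse triangle inequality and the proof is a one-line rearrangement. If a fully self-contained argument is preferred, I would instead do a short case split on the signs of $z$ and $x$ (and, where needed, on the sign of $z-x$), but invoking the triangle inequality is the cleanest route. No auxiliary hypotheses on $x$ and $z$ are needed, so the corollary holds for all real numbers as stated.
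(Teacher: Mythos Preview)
Your proof is correct and follows essentially the same route as the paper: both obtain the intermediate triangle inequality $|z|\le |x|+|z-x|$ and then rearrange. The only cosmetic difference is that the paper cites Lemma~\ref{l12} (its own case-analysis lemma) to justify that step, whereas you invoke the standard triangle inequality directly; the underlying argument is identical.
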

\begin{proof}
    Using Lemma \ref{l12}, we have the following for two real numbers $x$ and $z$:
\begin{linenomath*}    $$|z| \leq |x|+|z-x|\xrightarrow[]{} |z|-|x|\leq |z-x|$$\end{linenomath*}
\end{proof}
\section{Graph Compression for Paths}
\label{paths}

    \begin{figure*}[h]
    \centering
        \subfloat[]{{\includegraphics[width=200pt]{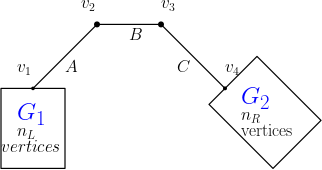}}}
        \subfloat[]{{\includegraphics[width=200pt]{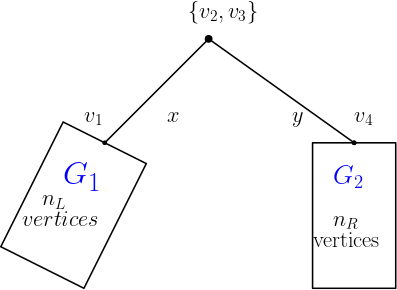}}}\

    \caption{Merging a single edge $e^*=(v_2,v_3)$ with weight $B$ in a path of $n$ vertices, with $n_L \geq 0$, $n_R \geq 0$ vertices on the left of $v_1$ and the right of $v_4$ respectively ($n_L +n_R=n-2$) (a) The original graph before merging $v_2$ and $v_3$ into a supernode. The neighbouring edges of $e^*$ have weights $B$ and $C$.  (b) The modified path after merging $v_2$ and $v_3$ into a supernode. }
    \label{france}
\end{figure*}

In this section, we study the problem of distance-preserving graph compression for a weighted path with non-negative weights. The paths in this section all have $n \geq 3$ vertices since the compression problem for a two-vertex path is trivial.

The remainder of this section is organized as follows. As a warm-up, we provide optimal bounds for merging a single edge in Section \ref{oneedgepath}. In Section \ref{super}, we study the path compression problem for an edge connecting two supernodes, each consisting of a subset of nodes from the path. Two generalizations of the results of Section \ref{oneedgepath} for contracting any subpath (a contiguous subpath of the original graph) and any set of independent edges (that induce a matching in the original path) are provided in Section \ref{subp} and Section \ref{indp} respectively.
\subsection{A Tight Lower Bound for Merging One Edge}
\label{oneedgepath}
This section presents a tight lower bound on the optimal error (Eq. (\ref{hehe})) associated with merging a single edge in a path topology.

 As seen in Figure \ref{france}, the edge between $v_2$ and $v_3$ is merged, and only the immediate edge weights are altered to $x$ and $y$. Later in this section (Lemma \ref{construction}), we show why it is sufficient to alter only the immediate edge weights ($A$ and $B$ in Figure \ref{france}) to get the minimum amount of error. Note that, for merging a single edge (Figure \ref{france}) we have: 
\begin{equation}
    n_L + n_R =n-2= \left|\overline{V_m} \right|
    \label{nlr1}
\end{equation}

The following theorem is now presented:
\begin{theorem}
Let $|\Delta E|$ be the error associated with merging a single edge $e^*=(v_2, v_3)$ (with weight $B$) in a path $P_n, n \geq 3$ (Figure \ref{france}). Furthermore, let $V_m=\{v_2, v_3\}$ and $\overline{V_m}= V- V_m$. We have $|\Delta E| \geq (n-2) B=|\overline{V_m}|B$. Moreover, this lower bound is tight and can be achieved by marking the left neighbour of the merged edge. If the merged edge has no left or right neighbour, the lower bound can be achieved by simply contracting the edge, and no further modifications (weight changes) are required. 
\label{l3aval}
\end{theorem}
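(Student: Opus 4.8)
The plan is to prove the lower bound for an \emph{arbitrary} weight redistribution and then exhibit one that attains it. Relabel the path as $p_1 - p_2 - \cdots - p_n$ with $e^* = (p_k, p_{k+1})$, and write $e_l = (p_l, p_{l+1})$ and $w_l = w(e_l)$, so $V_m = \{p_k, p_{k+1}\}$, there are $n_L = k-1$ vertices strictly to the left of $e^*$ and $n_R = n-k-1$ strictly to its right, and $|\overline{V_m}| = n_L + n_R = n-2$. Write a redistribution as $w'(e_l) = w_l + \epsilon_l$ and let $s$ be the supernode $\{p_k, p_{k+1}\}$ of $G'$. Since contracting an edge of a path again gives a path, in both $G$ and $G'$ every pairwise distance is just the sum of the weights along the unique connecting subpath. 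In particular, for a left vertex $p_i$ ($i \le k-1$), setting $L_i := \sum_{l=i}^{k-1}\epsilon_l$ yields $d_{G'}(p_i, s) = d_G(p_i, p_k) + L_i = d_G(p_i, p_{k+1}) + L_i - B$; for a right vertex $p_j$ ($j \ge k+2$), setting $R_j := \sum_{l=k+1}^{j-1}\epsilon_l$ yields $d_{G'}(p_j, s) = d_G(p_j, p_{k+1}) + R_j = d_G(p_j, p_k) + R_j - B$.

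For the lower bound I would retain, from the sum defining $|\Delta E|$ in \eqref{hehe}, only the $2(n-2)$ pairs with one endpoint in $V_m$ and one in $\overline{V_m}$, and group them by their $\overline{V_m}$-endpoint. For a left vertex $p_i$, the pair $(p_i, p_k)$ contributes $|L_i|$ and the pair $(p_i, p_{k+1})$ contributes $|B - L_i|$, and $|L_i| + |B - L_i| \ge B$ (triangle inequality, or Lemma~\ref{l12}); symmetrically, the pairs $(p_j, p_k)$ and $(p_j, p_{k+1})$ of a right vertex $p_j$ together contribute $|R_j - B| + |R_j| \ge B$. The $n_L + n_R = n-2$ resulting groups are pairwise disjoint and all lie in the index set of \eqref{hehe}, so $|\Delta E| \ge (n_L + n_R) B = (n-2) B = |\overline{V_m}| B$. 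Note that this argument uses nothing about which edges were altered, so no prior reduction to ``only neighbours change'' is needed here.

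For tightness, take the redistribution that marks the left neighbour $e_{k-1}$, i.e. $\epsilon_{k-1} = B$ and $\epsilon_l = 0$ for $l \ne k-1$ (all new weights stay non-negative). Then $L_i = B$ for every left vertex and $R_j = 0$ for every right vertex, so by the identities above the pairs $(p_i, p_k)$ and $(p_j, p_k)$ each contribute exactly $B$ while $(p_i, p_{k+1})$ and $(p_j, p_{k+1})$ contribute $0$. It remains to check that every pair with both endpoints in $\overline{V_m}$ has unchanged distance: for two vertices on the same side of $e^*$ the connecting subpath avoids $e_{k-1}$, and for two vertices on opposite sides the extra weight $B$ put on $e_{k-1}$ exactly replaces the deleted weight $w_k = B$ of $e^*$. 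Hence the total error is exactly $(n_L + n_R) B = (n-2) B$. When $e^*$ is an end edge it has no left (resp. right) neighbour to mark; then the trivial redistribution $\epsilon_l \equiv 0$ already gives $d_{G'}(v, s) = d_G(v, p')$ for the endpoint $p'$ of $e^*$ closer to $\overline{V_m}$ and every $v \in \overline{V_m}$, so only the $n-2$ pairs through the other endpoint carry error, each equal to $B$, and the bound is again met.

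The whole argument turns on the single observation that it suffices to account for the pairs incident to $V_m$: this both makes the lower bound a one-line application of the triangle inequality and characterizes optimal redistributions as exactly those that leave every $\overline{V_m}$-$\overline{V_m}$ distance unchanged. The only step needing genuine care is the verification in the tightness part that marking $e_{k-1}$ preserves all same-side and all cross-side distances (and that splitting $|\Delta E|$ into the $n-2$ groups double counts nothing); this is routine but should be spelled out.
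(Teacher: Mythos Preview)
Your argument is correct and in fact cleaner than the paper's. The core idea---group the $2(n-2)$ pairs $(v,u)$ with $u\in V_m$ by their $\overline{V_m}$-endpoint and apply $|x|+|B-x|\ge B$ to each group---is the same inequality the paper records as Lemma~\ref{l12}. The genuine difference is in scope: the paper's proof of Theorem~\ref{l3aval} only treats redistributions that change the two edges adjacent to $e^*$ (the variables $x,y$ of Figure~\ref{france}) and then defers to the later Lemma~\ref{construction} to argue that any other redistribution can be reduced to one of that special form. Your proof avoids this forward reference entirely: because you discard all $\overline{V_m}$--$\overline{V_m}$ terms from \eqref{hehe} and keep only the $V_m$--$\overline{V_m}$ pairs, the lower bound holds verbatim for an \emph{arbitrary} redistribution, and the tightness witness is then checked directly. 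So your route buys a self-contained proof of the theorem, while the paper's route additionally yields the structural normal-form statement (Lemma~\ref{construction}) that is reused later for supernodes and independent edges. Your treatment of the end-edge case and the verification that marking $e_{k-1}$ preserves all $\overline{V_m}$--$\overline{V_m}$ distances are both fine.
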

\begin{proof}
Figure \ref{france} depicts the situation in which edge $e^*$ with weight $B$ is merged. We first assume that $e^*$ has a left neighbour, and we handle the no-neighbour exception at the end of the proof. As seen in Figure \ref{france}-(b), let $x$ and $y$ denote the new edge weights of the neighbouring edges of $e^*$, and let $G_1$ (with $n_L$ vertices) and $G_2$ (with $n_R$ vertices) denote the subpaths rooted at $v_1$ and $v_4$ respectively. We denote the error by $|\Delta E|$ and classify it into different parts (in accordance with Eq. (\ref{hehe})):
\begin{enumerate}
    \item The error between two vertices $u \in G_1, v\in G_2$ is $|x+y-A-B-C|$. The only affected portion of such a shortest path is the subpath between $v_1$ and $v_4$, the value of which changes from $A+B+C$ (in Figure \ref{france}-(a)) to $x+y$ (in Figure \ref{france}-(b)). Summing over all such pairs $u \in G_1,v \in G_2$, the total amount of error is $n_L n_R |x+y-A-B-C|$.
    \item Between two vertices $u,v \in G_1$, there is no error, because the shortest path value between all such pairs of vertices is unchanged. Similarly, between two vertices $u,v \in G_2$, there exists no error. 
    \item The error between a vertex $u \in G_1$ and the vertices in $V_m=\{v_2, v_3\}$ is $|x-A|+ |x-A-B|$. In the path from $u$ to $v_2$, the only changed (with reference to edge weights) subpath is the subpath between $v_2$ and $v_1$ which changes from $A$ (Figure \ref{france}-(a)) to $x$ (Figure \ref{france}-(b)), inducing an error of $|x-A|$. Similarly, the error between some vertex $u \in G_1$ and $v_3$ is $|x-A-B|$ as that is the amount by which the weight of the subpath from $v_1$ to $v_3$ changes. The total amount of error between all vertices $u \in G_1$ and the vertices in $V_m=\{v_2, v_3\}$ is therefore $n_L (|x-A|+|x-A-B|)$.
    \item By similar reasoning to the one provided above, the total amount of error between all vertices $u \in G_2$ and the vertices in $V_m=\{v_2, v_3\}$ is equal to $n_R(|y-C|+|y-B-C|)$.
\end{enumerate}
Therefore, we can formulate $|\Delta E|$ as
\begin{linenomath*}
$$|\Delta E|=n_L (|x-A|+|x-A-B|) + n_R (|y-C|+|y-B-C|) +n_Ln_R |x+y-A-B-C|= n_L \alpha_1 +n_R \alpha_2  + n_L n_R |x+y-A-B-C|$$
\end{linenomath*}
where $\alpha_1$ and $\alpha_2$ are the values defined in Lemma \ref{l12}. Using Lemma \ref{l12}, we know $\alpha_1 \geq B$ and $\alpha_2 \geq B$: 
\begin{linenomath*}$$|\Delta E| \geq B(n_L +n_R)+  n_L n_R |x+y-A-B-C|$$\end{linenomath*}
Using Eq. (\ref{nlr1}):
\begin{linenomath*}$$|\Delta E| \geq B(n-2)+  n_L n_R |x+y-A-B-C| \geq B(n-2)= |\overline{V_m}| B$$\end{linenomath*}
Which proves the first part of the theorem. As for the second part, we now show that this lower bound is tight. By marking the left neighbouring edge of $e^*$ (effectively setting $x=A+B$ and $y=C$) we get:
\begin{linenomath*}\begin{align*}
    |\Delta E|=& n_L (|x-A|+|x-A-B|) + n_R (|y-C|+|y-B-C|) +n_Ln_R |x+y-A-B-C|\\
    \xrightarrow{\text{setting } x=A+B, y=C} =& (n_L+n_R)B= (n-2)B =|\overline{V_m}|B
\end{align*}\end{linenomath*}
This analysis concludes the proof for the case where $e^*$ has a left neighbour.

If $e^*$ has no left neighbour, $n_L=0$ and no shortest path crossing $e^*$ is affected. For each shortest path starting from $v_4$ (and its right-side vertices) and terminating at $v_2$ and $v_3$ there is an error of $|y-C|+ |y-B-C|$. According to Lemma \ref{l12}, $|y-C|+|y-B-C|$ is minimized as long as $C\leq y\leq B+C$, which is the case if all edges are unmarked, i.e. $y=C$. We can use a similar argument if $e^*$ has neither a right nor a left neighbour.
\end{proof}

Observe that marking the left neighbouring edge is not the only way of achieving the lower bound as it can also be achieved by marking the right neighbouring edge. In fact, any assignment of values to $x$ and $y$ such that $x=A+\epsilon_1$, $y=C + \epsilon_2$, $\epsilon_1 + \epsilon_2= B$ will have the same impact. Therefore, for merging a single edge in a weighted path, the marked neighbour can be chosen arbitrarily, and the error value is \textit{oblivious} to the marking direction. However, this observation (being oblivious to the marking direction) only holds for merging two regular nodes. As we will show in Lemma \ref{general}, for merging two supernodes the optimal error is obtained by marking the edge adjacent to the smaller node with respect to cardinality.
\begin{figure*}
\centering
 \subfloat[The original graph, before contracting $e^*= (v_{n_1 +1}, v_{n_1 +2}$) with weight $w^*$]{{\includegraphics[width=305pt]{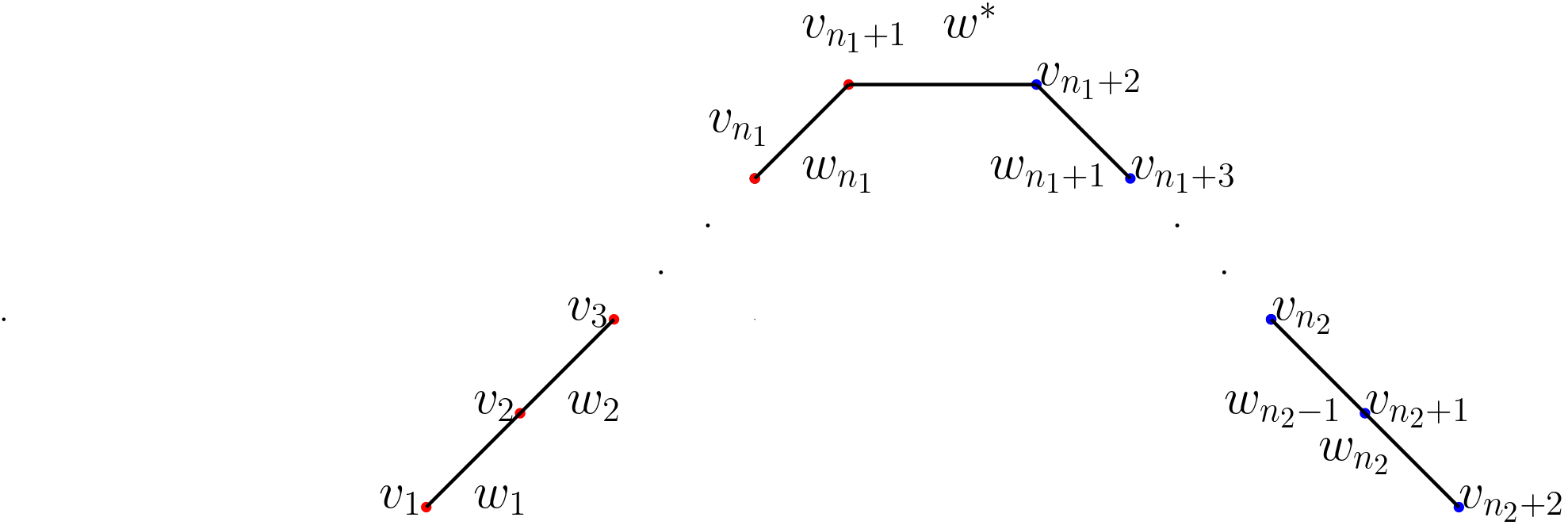}}}
 \qquad
 \subfloat[An arbitrary weight redistribution which is transformed in the proof of Lemma \ref{construction} to the redistribution of Figure \ref{fig12}-(b). ]{{\includegraphics[width=305pt]{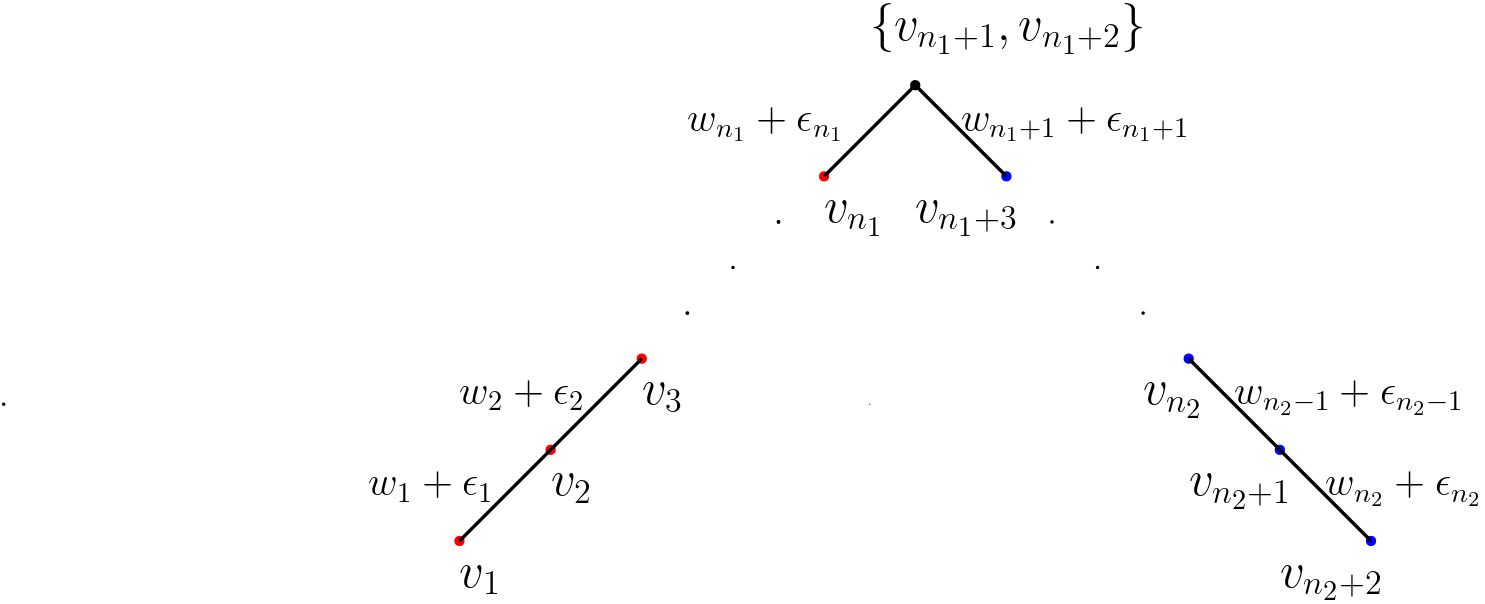}}}
  \caption{The figure used in the proof of Lemma \ref{construction}, the vertices of $V_L$ and $V_R$ are depicted in red and blue respectively. (a) The original graph, before contracting $e^*= (v_{n_1 +1}, v_{n_1 +2}$) with weight $w^*$, (b) An arbitrary weight redistribution which is transformed in the proof of Lemma \ref{construction} to the redistribution of Figure \ref{fig12}-(b). }
\label{fig2}
    \end{figure*} 
    \begin{figure*}[h]
 \subfloat[Before applying the construction method of Lemma \ref{construction}]{{\includegraphics[width=260pt]{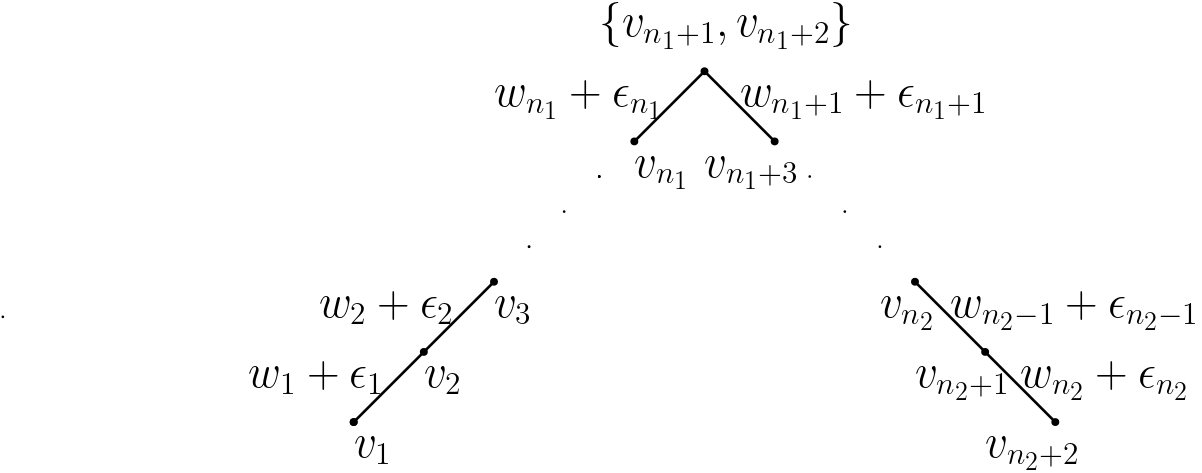}}}
 \subfloat[After applying the construction method of Lemma \ref{construction}]{{\includegraphics[width=260pt]{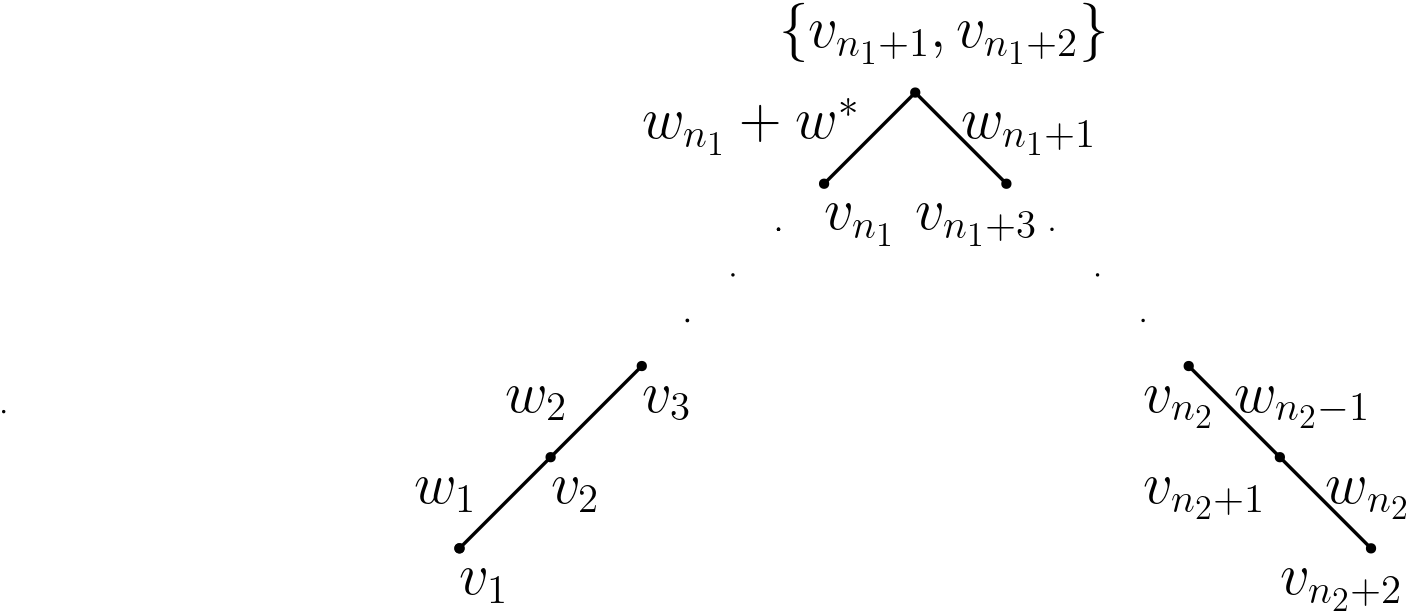}}}
    \caption{The construction method of Lemma \ref{construction}, (a) Before applying the construction and (b) After applying the construction }
    \label{fig12}
\end{figure*} 
 
 \begin{figure*}[h]
\subfloat[The original graph]{{\includegraphics[width=240pt]{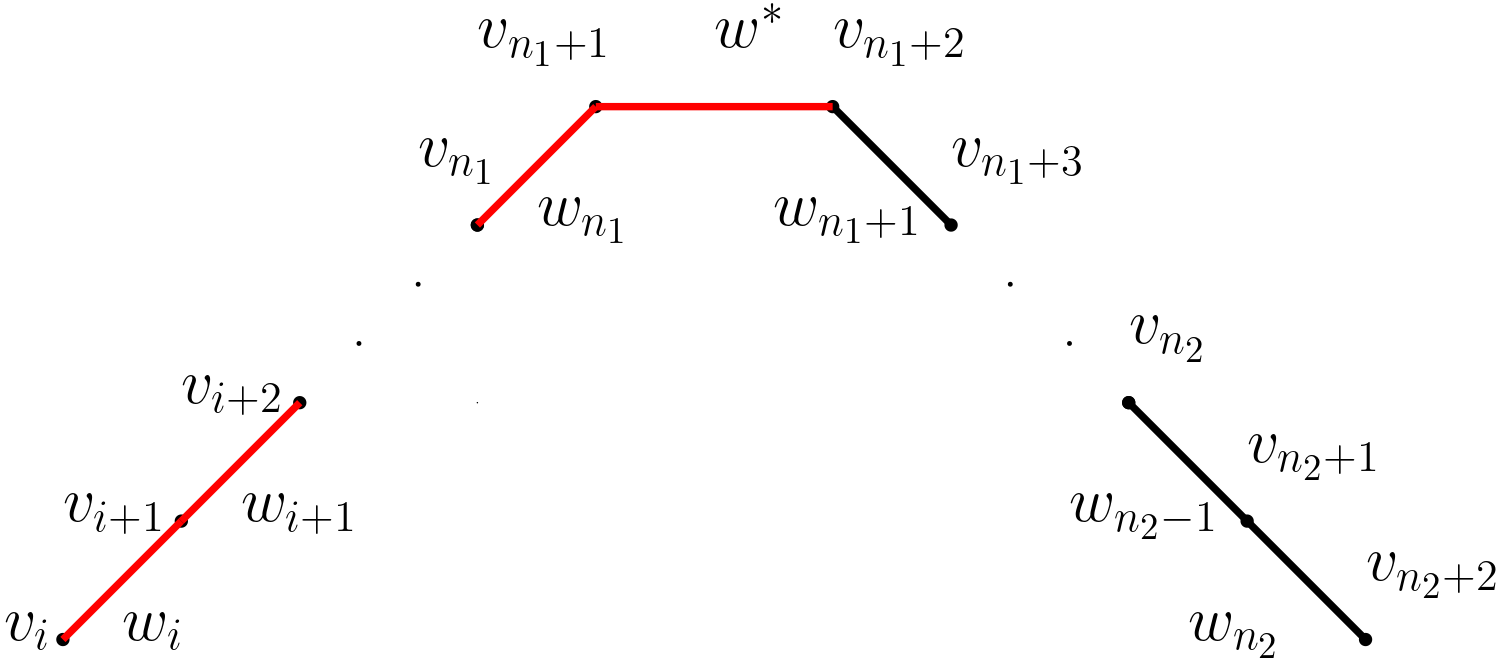}}}
 \subfloat[The original weight redistribution]{{\includegraphics[width=290pt]{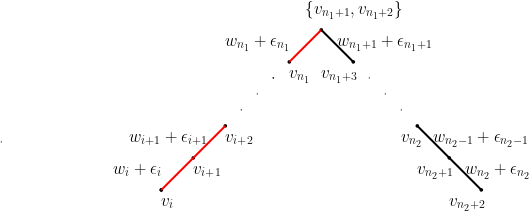}}}

\hspace{6cm}\subfloat[After applying the construction method of Lemma \ref{construction}]{{\includegraphics[width=200pt]{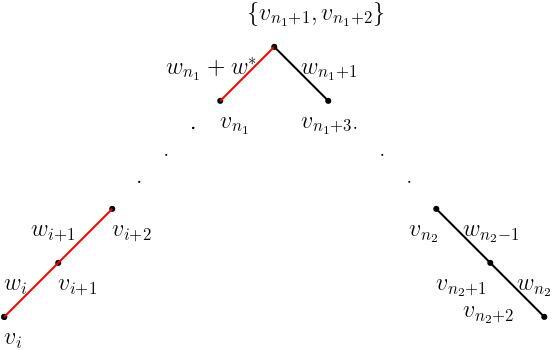}}}
    \caption{Case 2 in the proof of Lemma \ref{construction}: (a) The original graph (b) The original weight redistribution (c) After applying the construction method of Lemma \ref{construction}. The affected shortest path of Case 2 is highlighted in red. }
    \label{fig13}
\end{figure*}
Theorem \ref{l3aval} assumes that to achieve the minimum amount of error, we have to alter only the immediate edges directly connected to the endpoints of the merged edge. We now prove the correctness of this assumption. For modelling the proof, we now define some notation. For any edge $e_i \in E$, we denote its new weight as $w^{\prime}(e_i)=w(e_i) + \epsilon_i$ where $\epsilon_i$ is a real number (see Figure \ref{fig2}). This definition allows us to increase or decrease the weight of any given edge $e_i$ by $\epsilon_i$. We call this assignment of weights a \textit{redistribution} for $e^{*}$. We refer to an edge $e_i$ as \textit{altered} if $\epsilon_i \neq 0$, and \textit{unaltered} otherwise. Moreover, let $V_L, V_R \subset V $ be the vertices to the left and right of the merged edge respectively as depicted in Figure \ref{fig2}. Therefore, the problem is now to show that there exists an optimal solution, with only the immediate edges altered. For simplicity, we slightly abuse the notation and write $w(e_i)$ as $w_i$ and $w(e^{*})$ as $w^{*}$. In the following lemma, we show a construction for transforming any redistribution into another equivalent redistribution in which only the immediate edges are altered.

\begin{lemma}
    (See Figure \ref{fig2} and Figure \ref{fig12}) For a merged edge $e^{*}$ (in a weighted path) that has both left and right neighbouring edges, any weight redistribution can be transformed into another weight redistribution in which only the left neighbouring edge of $e^*$ is altered ($\epsilon_i =0\; \; \forall i \neq n_1 $ in Figure \ref{fig2}). The error associated with this redistribution is no worse than that of the original one. 
    \label{construction}
\end{lemma}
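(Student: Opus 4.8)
The plan is to move the total perturbation mass onto the left neighbouring edge in two stages: first collapse everything that lies strictly on one side of $e^*$ into the single immediate edge on that side, then argue that the remaining two-sided situation reduces to the one-sided one. Concretely, suppose the original redistribution alters edges with perturbations $\epsilon_1,\dots,\epsilon_{n_1-1}$ on the left (reading outward from $e^*$, so $e_{n_1}$ is the immediate left edge), $\epsilon_{n_1+1},\dots$ on the right, and possibly the two immediate edges themselves. I would define a new redistribution in which every $\epsilon_i$ for an edge not incident to a merged endpoint is set to $0$, and the ``lost'' mass is re-accumulated: the new weight of the immediate left edge absorbs $\sum_{i<n_1}\epsilon_i$ (the mass on its side) and similarly on the right, and finally the right-side mass is pushed through $e^*$ onto the left edge as well so that in the end only $\epsilon_{n_1}$ is nonzero. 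The key observation that makes this work is that for a path, the distance $d_{G'}(u,v)$ between any two surviving vertices is just the sum of the new weights of the edges on the unique path between them; so moving perturbation mass along edges that lie on a common subpath does not change that subpath's total, hence does not change that distance.

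The main steps, in order, are: (1) set up the error as a sum over vertex pairs, grouping pairs by where they fall relative to $e^*$ (both left of $e^*$; both right; one on each side; and pairs involving $v_2$ or $v_3$), exactly mirroring the four-part decomposition used in the proof of Theorem~\ref{l3aval}; (2) for a fixed ``outer'' altered edge $e_i$ with $i<n_1$, compare the contribution to $|\Delta E|$ before and after shifting $\epsilon_i$ to the immediate edge $e_{n_1}$. Pairs with both endpoints on the same side of $e_i$ are unaffected, pairs with both endpoints on the same side of $e_{n_1}$ are unaffected, and the only pairs whose distance changes are those straddling $e_i$ but not straddling $e_{n_1}$ — i.e. with one endpoint in the small segment between $e_i$ and $e_{n_1}$. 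For each such pair the distance changes by exactly $\mp\epsilon_i$, and I would use Corollary~\ref{obs} (the reverse triangle inequality $|z|-|x|\le|z-x|$) together with the fact that the pairs straddling $e_{n_1}$ — which do see a change of size $\epsilon_i$ in the opposite direction — are at least as numerous, to conclude the net error does not increase. (This monotonicity-via-counting argument, weighted by $n_L$, $n_R$ and the segment sizes, is the technical heart.) (3) Iterate this edge-by-edge from the outside in, on both sides, until only the two immediate edges $e_{n_1}$ and $e_{n_1+1}$ are altered; (4) finally, handle the transfer of mass from the right immediate edge to the left immediate edge: now every shortest path either stays within $G_2$ (unchanged, since only $e_{n_1}$ and $e_{n_1+1}$ could matter and a within-$G_2$ path uses neither), stays within $G_1$ (unchanged), or crosses $e^*$, in which case it contains both $e_{n_1}$ and $e_{n_1+1}$ so only their sum matters — and we keep that sum fixed while zeroing $\epsilon_{n_1+1}$. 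The pairs involving $v_2,v_3$ need a last separate check, but by Lemma~\ref{l12} the left-side term $|x-A|+|x-A-B|$ is already at its minimum $B$ once $x\ge A$, which the accumulation guarantees, so those terms do not increase either.

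I expect the main obstacle to be step (2): making the bookkeeping of ``which pairs change and by how much'' fully rigorous when the shifted edge $e_i$ is itself an interior edge, so that the segment between $e_i$ and $e_{n_1}$ contains a nonempty set of vertices whose pairings with the far side must be tracked carefully, and checking that the sign of every distance change lines up so that Corollary~\ref{obs} applies in the favourable direction rather than the adverse one. A clean way to sidestep sign headaches is to do the shift in infinitesimal increments (treat $\epsilon_i$ as a continuous parameter $t$ sliding from $\epsilon_i$ to $0$) and show $|\Delta E|$ is a convex, piecewise-linear function of $t$ whose slope at $t=\epsilon_i$ points toward $t=0$ — or, more simply, to invoke that each absolute-value term $|\,\cdot\,|$ is convex and that the relevant linear functional we are varying sends more pairs (the $e_{n_1}$-straddling ones) toward their minima than away from them. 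Either way, once step (2) is nailed down, steps (3) and (4) are routine iteration and a direct substitution, and the conclusion that the resulting error is no larger than the original follows immediately.
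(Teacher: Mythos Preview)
Your approach differs from the paper's in one essential respect: you try to \emph{preserve the total perturbation mass}, first collapsing each side into its immediate edge and then pushing the right-side mass onto the left edge. The paper does something simpler and more direct: it discards the original $\epsilon_i$'s entirely and jumps to the fixed target $\epsilon_{n_1}=w^{*}$ (i.e.\ mark the left neighbour), then shows via a five-case breakdown that every potential increase in error (Cases~3 and~5) is dominated by a matching decrease (Cases~2 and~4) using Corollary~\ref{obs}.

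Your step~(4) has a genuine gap, and in fact the mass-preserving target can be strictly \emph{worse} than the original redistribution. You claim that after collapsing to $x=A+\epsilon'_{n_1}$ and $y=C+\epsilon'_{n_1+1}$, transferring the right mass to the left while keeping $x+y$ fixed leaves the merged-vertex terms harmless because ``$|x-A|+|x-A-B|$ is already at its minimum $B$ once $x\ge A$, which the accumulation guarantees.'' Both assertions are false: Lemma~\ref{l12} gives the minimum only on the interval $A\le x\le A+B$, not for all $x\ge A$; and since the $\epsilon_i$ are arbitrary reals, the accumulated $\epsilon'_{n_1}$ need not lie in $[0,B]$ at all. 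Concretely, take the original redistribution to have $\epsilon_{n_1}=\epsilon_{n_1+1}=2B$ and all other $\epsilon_i=0$, with $n_L=n_R=1$. Steps~(1)--(3) are vacuous. The original error is $3B\,n_L+3B\,n_R+3B\,n_Ln_R=9B$; after your transfer ($x'=A+4B$, $y'=C$) it becomes $7B\,n_L+B\,n_R+3B\,n_Ln_R=11B$. So the redistribution you construct is worse than the one you started from, and no amount of bookkeeping in step~(2) can rescue the overall chain of inequalities.

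The fix is exactly what the paper does: do not try to conserve $\sum_i\epsilon_i$. Compare any redistribution directly to the canonical one with $\epsilon_{n_1}=w^{*}$, which by Theorem~\ref{l3aval} is already optimal among redistributions touching only the immediate edges; the five-case pairing then handles all the sign issues you flag in step~(2) without any iterative shifting.
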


\begin{proof}
    We prove the lemma by presenting a construction method for transforming any arbitrary weight redistribution to another one in which only the left neighbouring edge is altered. Furthermore, we show this transformation does not worsen the error. The illustration is mainly based on Figure \ref{fig2} and Figure \ref{fig12}. Figure \ref{fig2}-(b) depicts an arbitrary weight redistribution for merging $e^*=(v_{n_1 +1}, v_{n_1 +2})$, which is transformed into another weight redistribution (depicted in Figure \ref{fig12}-(b)). 
    
    We now present a simple construction as follows. For 
    illustration, see Figure \ref{fig12}. Set $\epsilon_i 
    =0\; \; \forall i \neq n_1$, and $\epsilon_{n_1}= 
    w^{*}$. Note that this new redistribution may cause 
    some parts of the error to increase. However, we will 
    use Corollary \ref{obs}  to provide an upper bound on 
    any potential error increase and show that there will 
    always be enough decrease in error to counterbalance 
    the increase. In the original redistribution, the error 
    between two vertices $v_i, v_{j} \in V_L (i< j)$ is:
    \begin{linenomath*}\begin{equation}
        \left|\sum_{k=i}^{j-1} (w_k + \epsilon_k)- 
        \sum_{k=i}^{j-1} w_k \right|= \left|\sum_{k=i}^{j-1} 
        \epsilon_k \right|
        \label{sameside}
    \end{equation}\end{linenomath*}
   The indices $i$ and $j$ used in this proof are based on the ones depicted in Figure \ref{fig2} and Figure \ref{fig12}. Assume that the path of Figure \ref{fig2} has $n_2+1$ edges with $V_L=\{v_i| 1 \leq i \leq n_1 +1\}$, $V_R=\{v_i|n_1+2 \leq i\leq n_2 +2\}$, $n_1=n_L$ and $n_2=n_R$. For instance, $v_1$ and $v_3$ are two vertices from $V_L$ in Figure \ref{fig2}. The original shortest path length between $v_1$ to $v_3$ is $w_1+w_2 =\sum_{k=1}^{2}w_k$ (Figure \ref{fig2}-(a)). In the original redistribution of Figure \ref{fig2}-(b) (which we transform into Figure \ref{fig12}-(b)), this length is $w_1+ \epsilon_1 + w_2 + \epsilon_2 =\sum_{k=1}^{2}(w_k +\epsilon_k)$, resulting in Eq. (\ref{sameside}). A similar equation can also be defined for any two vertices $v_i, v_j \in V_R$. Moreover, the error between a vertex $v_i \in V_L$ and a vertex $v_{j} \in V_R$ ($i<j$) in the original redistribution is equal to: 
    \begin{linenomath*}\begin{equation}
         \left|\sum_{k=i}^{j-2} (w_k + \epsilon_k)- w^{*}-\sum_{k=i}^{j-2} w_k \right|= \left| w^{*}-\sum_{k=i}^{j-2} \epsilon_k \right|
         \label{twosides}
    \end{equation}\end{linenomath*}
   Transforming the weight redistribution (as shown in Figure \ref{fig12}) changes the error value. We break this change down into five different cases:
    \begin{itemize}
        \item \textbf{Case 1:} The error between two 
        vertices $v_i, v_{j} \in V_L \;(i<j,\; j \neq n_{1}+1)$ decreases by $-\left |\sum_{k=i}^{j-1} 
        \epsilon_k \right|$, because after the construction 
        this error is equal to zero (compare Figure \ref{fig12}-(b) with Figure \ref{fig2}-(a) for $v_1$ and $v_3$) and using Eq. 
        (\ref{sameside}), the change is equal to $0- 
        \left|\sum_{k=i}^{j-1} \epsilon_k \right|$.  
\item \textbf{Case 2:} The error 
between some vertex $v_i \in V_L, v_i \neq v_{n_1+1}$, and 
every vertex $v_j \in V_R$ decreases. Specifically, the 
error between $v_i$ and $v_{n_1 +2}$ decreases by $-
\left|w^{*}-\sum_{k=i}^{n_1} \epsilon_k \right|$ using Eq. 
(\ref{twosides}) (see Figure \ref{fig13}).

\item \textbf{Case 3:} The error between some vertex $v_i \in V_L, v_i \neq v_{n_1+1}$, and $v_{n_1+1}$ changes by $|w^{*}| - \left| \sum_{k=i}^{n_1} \epsilon_k \right|$, because in the original redistribution, the error is equal to $|\sum_{k=i}^{n_1} \epsilon_k|$ (Eq. (\ref{sameside})) and in the new redistribution, it is equal to $|w^*|$. This change might lead to an increase in error; however, by using Corollary \ref{obs} and setting $z=w^{*}$ and $x=\sum_{k=i}^{n_1} \epsilon_k$ we have:
\begin{linenomath*}$$|w^{*}|-\left| \sum_{k=i}^{n_1} \epsilon_k \right| \leq \left| w^{*}- \sum_{k=i}^{n_1} \epsilon_k \right|$$\end{linenomath*}
In other words, if the construction causes an increase in error, it is at most equal to $\left| w^{*}- \sum_{k=i}^{n_1} \epsilon_k \right|$. However, from Case 2 we know that each such vertex $v_i$ also has an error decrease of $-\left| w^{*}-\sum_{k=i}^{n_1} \epsilon_k \right|$, which will be enough to nullify this increase. 
\item \textbf{Case 4:} The error between some vertex $v_{j} \in V_R, v_{j} \neq v_{n_1 +2}$, and $v_{n_1 +2}$ decreases by $-\left| \sum_{k=n_1+1}^{j-2} \epsilon_k \right|$ using Eq.~(\ref{sameside}). 
\item \textbf{Case 5:} The error between some vertex $v_{j}  \in V_R, v_{j}  \neq v_{n_1 +2}$, and $v_{n_1+1}$ changes by $|w^{*}|- \left| w^{*}-\sum_{k=n_1+1}^{j-2} \epsilon_k \right|$. This change may lead to an increase in error; however, we use Corollary \ref{obs} to bound this increase:
\begin{linenomath*}$$|w^{*}|- \left| w^{*}-\sum_{k=n_1+1}^{j-2} \epsilon_k \right| \leq \left|\sum_{k=n_1+1}^{j-2} \epsilon_k \right|$$\end{linenomath*}
Therefore, we have enough decrease from Case 4 to nullify this increase. 
\end{itemize}
\end{proof} 
\begin{figure*}[h]
    \centering
        \subfloat[]{{\includegraphics[width=250pt]{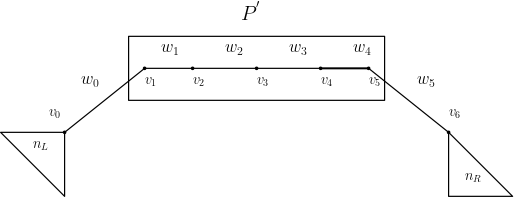}}}
        
        \subfloat[]{{\includegraphics[width=200pt]{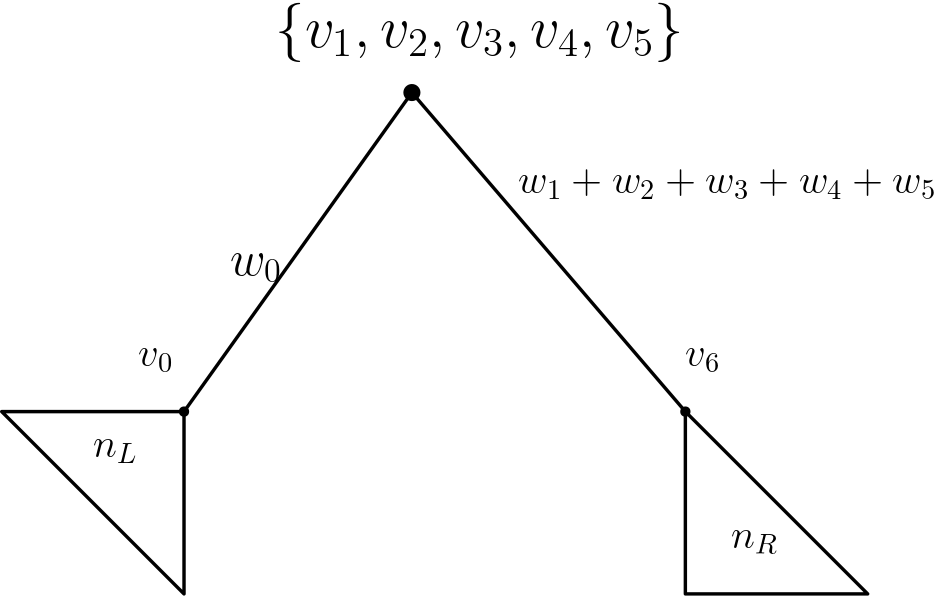}}}
        \subfloat[]{{\includegraphics[width=200pt]{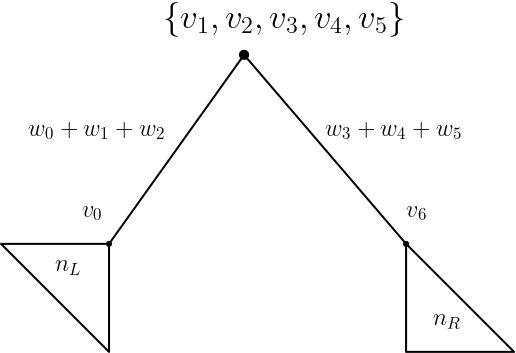}}}
    \caption{(a) An example of merging an entire subpath $P^{\prime} \subset P$ with four edges, (b) A suboptimal solution generated by Algorithm \ref{pathsalg}, and (c) The optimal solution. }
    \label{fig9}
\end{figure*}
\begin{figure*}[h]
    \centering
        \subfloat[Before merging the supernodes]{{\includegraphics[width=220pt]{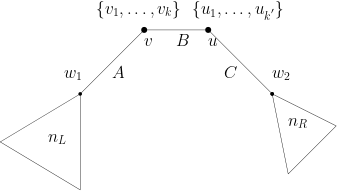}}}
        \subfloat[After merging the supernodes]{{\includegraphics[width=220pt]{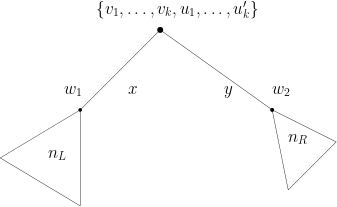}}}\

    \caption{An example of merging two supernodes with cardinalities $k$ and $k^{\prime}$ (a) Before the merge and (b) After the merge}
    \label{fig8}
\end{figure*}
\begin{corollary}
        For a given merged $e^{*}$ with left and right neighbouring edges, there exists an optimal redistribution in which only the left-neighbouring edge of $e^*$ is marked and all other edges are unmarked. 
    \label{maybenot}
\end{corollary}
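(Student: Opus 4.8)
The plan is to obtain this immediately from Lemma \ref{construction}, together with the observation that the construction in that lemma produces not merely an \emph{altered} left neighbour but a \emph{marked} one. First I would take an arbitrary optimal weight redistribution $w^{\prime}$ for the merged edge $e^{*}$ (which exists, since the error is a sum of absolute values bounded below by $|\overline{V_m}|B$ as shown in Theorem \ref{l3aval}, and the infimum is attained). Applying the construction of Lemma \ref{construction} to $w^{\prime}$ yields a new redistribution $w^{\prime\prime}$ whose error is no larger than that of $w^{\prime}$; since $w^{\prime}$ was optimal, $w^{\prime\prime}$ is optimal as well.

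The remaining step is to check that $w^{\prime\prime}$ has exactly the form asserted in the corollary. In the proof of Lemma \ref{construction} the construction sets $\epsilon_i = 0$ for every $i \neq n_1$ and $\epsilon_{n_1} = w^{*}$, where $e_{n_1}$ is the left neighbour of $e^{*}$. Translating into the marking terminology, $\epsilon_{n_1} = w^{*}$ means $w^{\prime\prime}(e_{n_1}) = w(e_{n_1}) + w(e^{*})$, so $e_{n_1}$ is marked, while every other edge $e_i$ satisfies $w^{\prime\prime}(e_i) = w(e_i)$ and is therefore unmarked. Hence $w^{\prime\prime}$ is an optimal redistribution in which only the left neighbour of $e^{*}$ is marked, which is precisely the claim.

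I do not expect a genuine obstacle here: all the analytic work — in particular the case analysis bounding the possible error increases by the guaranteed decreases via Corollary \ref{obs} — was already carried out in Lemma \ref{construction}. The only point worth stressing in the writeup is that Lemma \ref{construction} pins the increment of the single altered edge to be exactly $w^{*}$, which is what upgrades ``altered'' to ``marked'' and makes this corollary the formal justification of the assumption used in Theorem \ref{l3aval} that altering only the immediate neighbouring edges suffices for optimality.
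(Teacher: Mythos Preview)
Your proposal is correct and matches the paper's intended derivation: the corollary is stated there without proof, as an immediate consequence of Lemma \ref{construction}, and you have spelled out precisely the missing observation that the construction fixes $\epsilon_{n_1}=w^{*}$, so the single altered edge is in fact \emph{marked}. One small simplification: since the construction of Lemma \ref{construction} outputs the \emph{same} redistribution regardless of the input, you can skip the step of first choosing an optimal $w'$ and arguing that the infimum is attained---Lemma \ref{construction} applied to an arbitrary redistribution already shows directly that the ``mark the left neighbour'' redistribution dominates every other one, hence is optimal.
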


\begin{algorithm}
\caption{Graph Compression Algorithm for Independent Edges}
\label{pathsalg}
\begin{algorithmic}[1]
\Procedure{PATH\_COMPRESSION}{}    
\State \textbf{Input:} $G=P_n$ (A path of $n$ vertices) with weight function  $w: E \rightarrow \mathbb{R}_{\geq 0}$, a set $E_m$ of edges to merge 
\State $S\xleftarrow[]{} E_m$ 
\While{$S$ is not empty}
\State Pick an arbitrary edge $e \in S$
\If{$e$ has a left neighbouring edge}
\State Let $e^{\prime}$ be the left neighbouring edge of $e$
\State Set $w(e^{\prime}) \xleftarrow[]{} w(e^{\prime}) + w(e)$ \Comment{Mark $e^{\prime}$}
\EndIf
\State Remove $e$ from $G$ and merge its endpoints
\State Remove $e$ from $S$
\EndWhile
\EndProcedure
\end{algorithmic}
\end{algorithm}

Based on Theorem \ref{l3aval}, the algorithm for merging a set $S$ of edges in a path $G=P_n$ is presented in Algorithm \ref{pathsalg}. Algorithm \ref{pathsalg} continuously applies Theorem \ref{l3aval} and marks the left neighbouring edge of each edge $e^* \in S$ taken in an arbitrary order.

Unfortunately, Algorithm \ref{pathsalg} may produce suboptimal results when applied to specific kinds of inputs. Precisely, it may produce suboptimal results when merging a connected subpath of the given path. The reason behind this suboptimal performance lies in the difference between merging two regular nodes and two supernodes. An example of merging a subpath of size four is depicted in Figure \ref{fig9}-(a), for which Algorithm \ref{pathsalg} may produce the suboptimal solution Figure \ref{fig9}-(b). Later in Section \ref{subp}, we shall show that the optimal solution for this example is the one depicted in Figure \ref{fig9}-(c). Furthermore, in Theorem \ref{nobetter}, we prove that when the input to Algorithm \ref{pathsalg} consists of independent edges in $G$ (if $S$ induces a matching in $G$), it produces the optimal results. 

\subsection{Merging Supernodes}
\label{super}
As seen in the previous section, Algorithm \ref{pathsalg} may find suboptimal solutions when given an entire subpath of size $k$. The main reason behind this suboptimal performance lies in the difference between merging regular nodes and \textit{supernodes}. 
 Recall from Definition \ref{superr} that a supernode contains more than one node of the original graph. In this section, we show that merging two supernodes differs from merging two regular vertices, and we provide a generalized version of Theorem \ref{l3aval}. Interestingly, we observe that, unlike merging two regular nodes in which the error value was oblivious to the marking direction, for merging supernodes this direction is directly affected by the cardinality (Definition \ref{superrr}) of each endpoint. In the following lemma, we shall see that for merging an edge $e^*= (u,v)$ connecting two supernodes $u$ and $v$, the optimal solution is obtained by marking the edge adjacent to the lighter vertex (the one with the smaller cardinality) among $u$ and $v$. 
 \newpage
\begin{lemma}    
Suppose we have supernodes v and u (as shown in Figure \ref{fig8}), with $\mathcal{C}(v) =k$ and $\mathcal{C}(u) =k^{\prime}$ (where $k\geq k^{\prime}$), connected to vertices $w_1$ and $w_2$, respectively. The error incurred by merging the edge $e^{*} = (u, v)$ (with weight $B$) is at least $B \times k^{\prime} \times (n-(k+ k^{\prime}))$. Furthermore, this lower bound can be achieved by marking the neighbouring edge adjacent to the smaller vertex among $v$ and $u$ in terms of cardinality ($e = (u, w_2)$ in Figure \ref{fig8}). If the smaller vertex, with reference to cardinality, has no neighbouring edge other than $e^{*} = (u, v)$, then the optimal error can be achieved by contracting $e^{*}$ without any further modifications or weight changes.
    \label{general}
\end{lemma}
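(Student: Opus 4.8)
The plan is to follow the structure of the proof of Theorem~\ref{l3aval}, but to track the cardinalities of the two supernodes carefully, since they now act as ``weighted'' vertices when counting shortest-path pairs. First I would set up notation exactly as in Figure~\ref{fig8}: let $G_1$ be the subpath hanging off $w_1$ (the neighbour of the big supernode $v$, with $\mathcal{C}(v)=k$) and $G_2$ the subpath hanging off $w_2$ (the neighbour of the small supernode $u$, with $\mathcal{C}(u)=k'$), and let $n_L=|G_1|$, $n_R=|G_2|$, so that $n_L+n_R = n-(k+k')$. By (the supernode analogue of) Lemma~\ref{construction} it suffices to consider redistributions that alter only the two edges $e=(v,w_1)$ and $e'=(u,w_2)$ incident to $e^*$; call their new weights $x$ and $y$, and let the original weights be $A=w(v,w_1)$, $C=w(u,w_2)$, with $w^*=B$.

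Next I would decompose $|\Delta E|$ along the same four categories as in Theorem~\ref{l3aval}, but weighting each contribution by the cardinalities. A vertex $p\in G_1$ and a vertex $q\in G_2$ contribute $|x+y-A-B-C|$, for a total of $n_L n_R|x+y-A-B-C|$; pairs inside $G_1$ or inside $G_2$ contribute nothing. A vertex $p\in G_1$ paired against the supernode $u$ contributes $|x+B-A-B| = |x-A|$ wait --- more carefully, the distance from $p$ to $u$ changes from (old subpath)$+A+B$-type expressions; the key observation is that $u$ and $v$ have collapsed into one supernode, so really the error between $p\in G_1$ and the merged supernode must be split: I would instead note that the pairs that matter are $p\in G_1$ or $p\in G_2$ against every regular vertex inside the supernodes $u$ and $v$. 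A vertex $p\in G_1$ against a regular vertex sitting inside $v$ sees a change of $|x-A|$ (there are $k$ such vertices), and against a regular vertex inside $u$ sees a change of $|x-A-B|$ (there are $k'$ such vertices); summing over $p\in G_1$ gives $n_L(k|x-A| + k'|x-A-B|)$. Symmetrically, $G_2$ contributes $n_R(k'|y-C| + k|y-B-C|)$. Hence
\begin{linenomath*}
$$|\Delta E| = n_L\bigl(k|x-A| + k'|x-A-B|\bigr) + n_R\bigl(k'|y-C| + k|y-B-C|\bigr) + n_L n_R|x+y-A-B-C|.$$
\end{linenomath*}

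Now the main obstacle, and the place where the asymmetry $k\geq k'$ enters, is lower-bounding $k|x-A| + k'|x-A-B|$. Unlike the pure Lemma~\ref{l12} situation where the two absolute values have equal weight and any $x\in[A,A+B]$ is optimal, here I would argue that $k|x-A| + k'|x-A-B| \geq k'B$, with equality forced essentially only when $x=A$ (i.e.\ the edge next to the \emph{big} supernode is left unmarked): writing $t = x-A$, the function $k|t| + k'|t-B|$ is piecewise linear, minimized on $[0,B]$ (assuming $B\geq 0$) with value $k|t| + k'(B-t) = k'B + (k-k')t$, which since $k\geq k'\geq 0$ is minimized at $t=0$ giving $k'B$; for $t<0$ or $t>B$ it is strictly larger. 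Symmetrically $k'|y-C| + k|y-B-C| \geq k'B$, minimized at $y = B+C$. Plugging in, $|\Delta E| \geq n_L k'B + n_R k'B + n_L n_R|\cdots| \geq k'B(n_L+n_R) = k'B(n-(k+k'))$, which is the claimed bound. For tightness, set $x=A$ and $y=B+C$ --- that is, mark the edge $e=(u,w_2)$ adjacent to the \emph{smaller} supernode (since $y = C + B$ means $e'$ is marked, and $x=A$ means $e$ is unmarked); wait, I must double-check the labelling against Figure~\ref{fig8}: the statement says mark $e=(u,w_2)$, the edge adjacent to $u$ (the small one), which corresponds to setting $y = C+B$ and $x=A$, and then the cross term $|x+y-A-B-C| = |A + C + B - A - B - C| = 0$, so $|\Delta E| = n_L k'B + n_R k'B = k'B(n-(k+k'))$, matching the lower bound. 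Finally, if $u$ has no neighbour other than $e^*$ then $n_R = 0$ and the only surviving pairs are $p\in G_1$ against vertices inside $u$ and $v$, contributing $n_L(k|x-A| + k'|x-A-B|) \geq n_L k'B$, achieved with $x=A$, i.e.\ contracting $e^*$ with no weight change; the case where $u$ is isolated on both sides is analogous. The one subtlety I would flag is justifying that the supernode version of Lemma~\ref{construction} goes through verbatim --- the construction there only used that the two endpoints of $e^*$ are distinguishable vertices reachable by a unique path, which remains true for supernodes, so the reduction to altering only the two incident edges is safe, though I would state this explicitly rather than just cite Lemma~\ref{construction}.
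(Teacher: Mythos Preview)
Your proof is correct and follows essentially the same route as the paper: both set up the identical error expression $|\Delta E| = n_L(k|x-A| + k'|x-A-B|) + n_R(k'|y-C| + k|y-B-C|) + n_L n_R|x+y-A-B-C|$, lower-bound the two bracketed terms by $k'B$, and verify tightness at $x=A$, $y=B+C$, with the same handling of the degenerate $n_R=0$ case and the same remark that the supernode analogue of Lemma~\ref{construction} goes through with the decrease/increase terms now weighted by $k$ and $k'$. The only cosmetic difference is that the paper obtains $k|x-A|+k'|x-A-B|\geq k'B$ by rewriting it as $(k-k')|x-A| + k'\bigl(|x-A|+|x-A-B|\bigr)$ and invoking Lemma~\ref{l12}, whereas you substitute $t=x-A$ and minimize the piecewise-linear function directly; these are equivalent.
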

\begin{proof}
    The analysis is similar to the case of merging regular vertices, we enumerate all possible error values and then deduce the optimal assignment. We first assume that $u$ (the smaller vertex) is adjacent to another edge $e^{\prime}\neq e^*$, and we handle the other case (only adjacent to $e^*$) later in the proof. We denote the error by $|\Delta E| $. Note that $n_L +n_R= |\overline{V_m}|=n-(k+k^{\prime})$. Let $x$ and $y$ denote the new weights of the edges adjacent to $(v,u)$, we have:
    \begin{linenomath*}
\begin{align*}
       & |\Delta E| =\underbrace{n_L \times |x-A| \times k}_{\text{between the subpath of } w_1 \text{ and the vertices in } v}+ \underbrace{n_L \times|x-A-B|\times k^{\prime}}_{\text{between the subpath of } w_1 \text{ and the vertices in } u}+  \\
    &\underbrace{n_R \times |y-C| \times k^{\prime}}_{\text{between the subpath of } w_2 \text{ and the vertices in } u} + \underbrace{n_R \times|y-B-C|\times k}_{{\text{between the subpath of } w_2 \text{ and the vertices in } v}} + \\
    & \underbrace{n_L \times n_R \times|x+y -A-B-C|}_{\text{between the subpath of } w_1\text{ and }w_2}
\end{align*}
\end{linenomath*}
Because $k\geq k^{\prime}$, we further simplify $|\Delta E|$ as:
\begin{linenomath*}
\begin{align}
    |\Delta E| =& (k-k^{\prime}) \times n_L \times |x-A| + n_L \times k^{\prime}\times \big(|x-A| + |x-A-B|\big)   \nonumber\\
     +&(k-k^{\prime}) \times n_R \times |y-B-C| + n_R \times k^{\prime}\times \big(|y-C| + |y-B-C|\big) \nonumber \\
     +&n_L \times n_R \times|x+y -A-B-C|
     \label{khaste}
\end{align}\end{linenomath*}
Using Lemma \ref{l12}, and the fact that $k-k^{\prime} \geq 0$, we have: 
\begin{linenomath*}\begin{align}
        |\Delta E| \geq &  n_L \times k^{\prime}\times \underbrace{\big(|x-A| + |x-A-B|\big)}_{\geq B \text{ (Lemma \ref{l12})}} +  n_R \times k^{\prime} \times \underbrace{\big(|y-C| + |y-B-C|\big)}_{\geq B \text{ (Lemma \ref{l12})}}  \label{hu}\\
     \geq&  B \times k^{\prime}\times \underbrace{(n_L + n_R )}_{=n-(k+ k^{\prime})} =  B \times k^{\prime} \times (n-(k+ k^{\prime}))\label{hu2}
\end{align}\end{linenomath*}
 We can observe that this lower bound (Eq. (\ref{hu2})) is tight and can be achieved by setting $y=B+C$ and $x=A$ in Eq. (\ref{khaste}).

Now if $u$ is only adjacent to $e^*$, the lower bound can be achieved by just contracting $e^*$ and leaving the weight function unchanged. To see why, suppose $u$ is only adjacent to $e^*$. We have $n_R=0$ and the error is equal to: 
\begin{linenomath*}\begin{align*}
    |\Delta E| =&  (k-k^{\prime}) \times n_L \times |x-A| + n_L \times k^{\prime}\times \big(|x-A| + |x-A-B|\big)   \\
     +&(k-k^{\prime}) \times n_R \times |y-B-C| + n_R \times k^{\prime}\times \big(|y-C| + |y-B-C|\big)\\
     +&n_L \times n_R \times|x+y -A-B-C|\xrightarrow{n_R=0}\\
     =&(k-k^{\prime}) \times n_L \times |x-A| + n_L \times k^{\prime}\times \big(|x-A| + |x-A-B|\big)\xrightarrow{\text{no edge weights are changed, set }x=A}\\
     =&n_L \times k^{\prime} \times B
\end{align*}\end{linenomath*}
On the other hand, note that $n_L+n_R=n-(k+ k^{\prime})$, and $n_R=0$ implies that $n_L=n-(k+ k^{\prime})$. We have:
\begin{linenomath*}$$|\Delta E|=n_L \times k^{\prime} \times B=(n-(k+ k^{\prime})) \times k^{\prime} \times B$$\end{linenomath*}
and the lower bound is achieved without any weight changes.

It is worth noting that similar to Theorem \ref{hehe}, we assume that it is sufficient to only alter the neighbouring edges of the merged edge $e^*=(u,v)$. This proof for this assumption is almost identical to that of Lemma \ref{construction}, where any arbitrary redistribution can be transformed into another redistribution in which only the edge adjacent to the smaller vertex ($e=(u,w_2)$) is marked. Then, similar to the proof of Lemma \ref{construction}, the decrease in error is always sufficient to counterbalance any potential error increase. The only difference is that in the new proof, the decrease in error and any potential error increase are weighted by $k$ and $k^{\prime}$, respectively. Since $k \geq k^{\prime}$, the proof follows. 
\end{proof}

\begin{remark}
 Lemma \ref{general} is a generalization of Theorem \ref{l3aval}. Thinking of each regular vertex as a supernode with cardinality one, we have $k=k^{\prime}=1$ and using Lemma \ref{general}, the error is equal to $ B \times k^{\prime} \times (n-(k+ k^{\prime}))= B \times (n-2)$ by arbitrarily marking one of the neighbouring edges (since the endpoints have equal cardinalities). 
 \end{remark}
 \begin{remark}
 Using Lemma \ref{general}, we can now explain the suboptimal performance of Algorithm \ref{pathsalg} for edges that are not independent and form a contiguous subpath. For inputs of such kind, Algorithm \ref{pathsalg} continuously marks the left neighbouring edges of all edges $e^* \in S$, potentially marking an edge adjacent to the heavier endpoint of some $e^* \in S$ along the way and violating the conditions of Lemma \ref{general}.
\end{remark}
 In the next section, we study the problem of optimally merging an entire contiguous subpath of the path. 
\subsection{Merging Contiguous Subpaths}
\label{subp}
\begin{figure*}[h]
    \centering
        \subfloat[]{{\includegraphics[width=280pt]{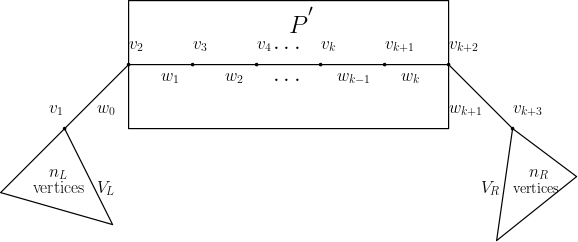}}} \qquad
\subfloat[]{{\includegraphics[width=220pt]{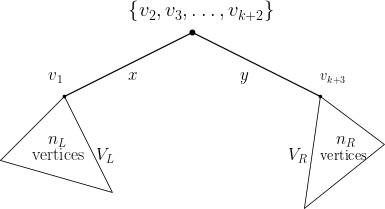}}}
    \caption{The figure used in the proofs of Section \ref{subp} (a) Before merging an entire subpath $P^{\prime} \subset P$ with $k$ edges, (b) After the merge. }
    \label{fig10}
\end{figure*}
This section presents an optimal way of merging any contiguous subpath (or connected subpath) of a given path. For convenience, we refer to contiguous subpaths as subpaths. Let $P ^{'} \subseteq P$ be the desired subpath consisting of $k$ edges (see Figure \ref{fig10} for an illustration). Throughout this section, we assume $k$ is even; otherwise, we can convert $P^{\prime}$ into an equivalent subpath of even length by adding a dummy edge of weight zero. As depicted in Figure \ref{fig10}, we assume $P^{\prime}$ partitions the set of vertices into two subsets, $V_L$ and $V_R$, with $n_L$ and $n_R$ vertices respectively. We denote the error associated with contracting $P^{\prime}$ by $\mathcal{E}$ and break it down into three components:
\begin{itemize}
    \item $\mathcal{E}_L$, the error between the vertices in $V_L$ and the ones inside $P^{\prime}$,
    \item $\mathcal{E}_R$, the error between the vertices in $V_R$ and the ones inside $P^{\prime}$, and
        \item $\mathcal{E}_{LR}$ the error between the vertices of $V_L$ and $V_R$.
\end{itemize}
With this in mind, we formulate $\mathcal{E}$ as: 
\begin{linenomath*}\begin{equation}
    \mathcal{E}=\mathcal{E}_L +\mathcal{E}_R +\mathcal{E}_{LR}
    \label{subpath}
\end{equation}\end{linenomath*}
such that:
\begin{linenomath*}\begin{equation}
    \mathcal{E}_L=n_L \times \big( \underbrace{|x-w_0|}_{\text{between the vertices of } V_L \text{ and } v_2}+ \underbrace{|x-w_0-w_1|}_{\text{between the vertices of } V_L \text{ and } v_3}+\dots + \underbrace{|x-w_0-w_1-\dots -w_k|}_{\text{between the vertices of } V_L \text{ and } v_{k+2}}\big)
    \label{subpath2}
\end{equation}\end{linenomath*}
\begin{linenomath*}\begin{equation}
    \mathcal{E}_R=n_R \times \big( \underbrace{|y-w_{k+1}|}_{\text{between the vertices of } V_R \text{ and } v_{k+2}}+ \underbrace{|y-w_{k+1}-w_{k}|}_{\text{between the vertices of } V_R \text{ and } v_{k+1}}+\dots + \underbrace{|y-w_{k+1}-w_{k}-\dots- w_1|}_{\text{between the vertices of } V_R \text{ and } v_{2}}\big)
    \label{subpath3}
\end{equation}\end{linenomath*}
and 
\begin{equation}
    \mathcal{E}_{LR}=n_L \times n_R \times |x+y - w_0-w_1 - \dots - w_{k+1}|
    \label{subpath4}
\end{equation}
where $x$ and $y$ are the new weights of the neighbouring edges of $P^{\prime}$ (Figure \ref{fig10}-(b)).

We first prove the optimal solution for $\mathcal{E}_L$ and derive the optimal solution for $\mathcal{E}_R$ by symmetry. Let $\mathcal{E}_L ^{(i)}$ denote the value of $\mathcal{E}_L$ when $x=w_0+w_1 +\dots + w_i$ for $0 \leq i \leq k$. We prove the following lemma using induction on $i$.
\begin{lemma}
    $\mathcal{E}_L ^{(i)}= n_L \times \big(\sum_{j=0}^{i} j \; w_j + \sum_{j=i+1}^{k}(k+1-j)\; w_j\big)$
    \label{induction1}
\end{lemma}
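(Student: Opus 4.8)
The plan is to follow the suggested induction on $i$ after rewriting $\mathcal{E}_L$ in terms of the prefix sums $c_m := w_0 + w_1 + \cdots + w_m$ for $0 \le m \le k$. With this notation, $\mathcal{E}_L = n_L \sum_{m=0}^{k} |x - c_m|$ and hence $\mathcal{E}_L^{(i)} = n_L \sum_{m=0}^{k} |c_i - c_m|$. The single structural fact driving everything is that the weights are non-negative, so the prefix sums are monotone, $c_0 \le c_1 \le \cdots \le c_k$; this lets every absolute value be resolved purely by comparing indices.

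For the base case $i = 0$: since $c_0 \le c_m$ for all $m$, we have $|c_0 - c_m| = c_m - c_0 = w_1 + \cdots + w_m$, so $\mathcal{E}_L^{(0)} = n_L \sum_{m=1}^{k}\sum_{l=1}^{m} w_l$. Swapping the order of summation, the coefficient of $w_l$ is the number of $m \in \{l,\ldots,k\}$, namely $k+1-l$, which gives $\mathcal{E}_L^{(0)} = n_L \sum_{j=1}^{k}(k+1-j) w_j$; this matches the claimed formula at $i=0$, the term $\sum_{j=0}^{0} j\, w_j$ being zero.

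For the inductive step (from $i$ to $i+1$, with $i+1 \le k$), I would compute the increment $\mathcal{E}_L^{(i+1)} - \mathcal{E}_L^{(i)} = n_L \sum_{m=0}^{k}\big(|c_{i+1} - c_m| - |c_i - c_m|\big)$ term by term using monotonicity: for $m \le i$ both $c_{i+1}$ and $c_i$ dominate $c_m$, so the bracket equals $c_{i+1} - c_i = w_{i+1}$; for $m = i+1$ it equals $0 - (c_{i+1} - c_i) = -w_{i+1}$; and for $m \ge i+2$ both $c_{i+1}$ and $c_i$ are dominated by $c_m$, so the bracket equals $c_i - c_{i+1} = -w_{i+1}$. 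Counting $i+1$ indices in the first range, one in the second, and $k-i-1$ in the third yields increment $\big((i+1) - 1 - (k-i-1)\big) n_L w_{i+1} = (2i+1-k)\, n_L w_{i+1}$. It then remains to check that the closed-form expression increases by exactly the same amount, i.e. that $\big[\sum_{j=0}^{i+1} j w_j + \sum_{j=i+2}^{k}(k+1-j)w_j\big] - \big[\sum_{j=0}^{i} j w_j + \sum_{j=i+1}^{k}(k+1-j)w_j\big]$ collapses to $(i+1)w_{i+1} - (k-i)w_{i+1} = (2i+1-k)w_{i+1}$; this is immediate, since only the $j=i+1$ term switches between the two sums.

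The calculation is entirely elementary; the only thing to watch is the index bookkeeping — in particular that the third range $\{i+2,\ldots,k\}$ is empty when $i = k-1$, so the step into $i=k$ still goes through, and that one does not double-count the $m = i+1$ index. I expect that to be the only real obstacle, and a minor one.
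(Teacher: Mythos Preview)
Your proof is correct and follows essentially the same approach as the paper: induction on $i$, with the same base case computation and the same increment $\mathcal{E}_L^{(i+1)}-\mathcal{E}_L^{(i)} = (2i+1-k)\,n_L w_{i+1}$ obtained by splitting the terms according to whether the corresponding prefix sum lies below or above $c_i$. The only cosmetic difference is that you introduce the prefix-sum notation $c_m$ explicitly and separate out the $m=i+1$ term, whereas the paper lumps it together with the $m\ge i+2$ terms (both contribute $-w_{i+1}$), giving $i+1$ increasing clauses and $k-i$ decreasing clauses for the same net result.
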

\begin{proof}
     For the base case, $\mathcal{E}_L ^{(0)}$, assume $x=w_0$. By a simple replacement into Eq.~(\ref{subpath2}) we get:
\begin{linenomath*}\begin{align*}
   \mathcal{E}_L ^{(0)}= n_L\times \big(w_1 +w_1 +w_2 +w_1+w_2+w_3 +\dots + w_1 +w_2 +\dots +w_k \big)
   \end{align*} \end{linenomath*}
    In other words, every $w_j$, $1 \leq j \leq k$, is repeated $k+1-j$ times, and:
       \begin{linenomath*} $$\mathcal{E}_L ^{(0)}= n_L \times \big( \sum_{j=1}^{k} (k+1 -j)\; w_j  \big)$$\end{linenomath*}
Now assume the lemma holds for all $j<i+1$. By the inductive hypothesis, we have  $\mathcal{E}_L ^{(i)}= n_L \times \big(\sum_{j=0}^{i} j \; w_j + \sum_{j=i+1}^{k}(k+1-j)\; w_j\big)$. We break Eq. (\ref{subpath2}) into $k+1$ \textit{clauses}, such that $c_j=|x-w_0-w_1-\dots - w_j|$ for $0 \leq j \leq k$. Going from $x=\sum_{j=0}^{i}w_j$ to $x=\sum_{j=0}^{i+1}w_j$, $\mathcal{E}_L ^{(i)}$ first increases by $n_L \times \big((i+1)\; w_{i+1}\big) $ because there are $i+1$ clauses $c_0, c_1, \dots, c_i $ that do not include $w_{i+1}$, and then decreases by $ n_L \times \big((k-i) \; w_{i+1} \big)$ because there are $k-i$ clauses $c_{i+1}, \dots, c_{k}$ that include $w_{i+1}$ and were not covered by the previous assignment of $x$ ($x=\sum_{j=0}^{i}w_j$). Therefore, we have: 
\begin{linenomath*}\begin{align*}
\mathcal{E}_L ^{(i+1)}=& \mathcal{E}_L ^{(i)} + n_L \times \big((i+1)\; w_{i+1} - (k-i) \; w_{i+1}\big) \\
=& 
n_L \times \big(\sum_{j=0}^{i} j \; w_j + \sum_{j=i+1}^{k}(k+1-j)\; w_j +(i+1)\; w_{i+1} - (k-i) \; w_{i+1}\big)\\
=& n_L \times \big(\sum_{j=0}^{i+1} j \; w_j + \sum_{j=i+2}^{k}(k+1-j)\; w_j\big)
\end{align*}\end{linenomath*}
\end{proof}
The following lemma states that the optimal value of $\mathcal{E}_L$ is equal to $\mathcal{E}_L^{(\frac{k}{2})}$.
\begin{lemma}
The optimal value of $\mathcal{E}_L$ is obtained when $x=w_0+w_1+\dots+w_{\frac{k}{2}}$.
    \label{induction2}
\end{lemma}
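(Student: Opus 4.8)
The plan is to treat $\mathcal{E}_L$ as a function of the single variable $x$. Writing $s_j := w_0 + w_1 + \dots + w_j$ for $0 \le j \le k$, Eq.~(\ref{subpath2}) says exactly that $\mathcal{E}_L(x) = n_L \sum_{j=0}^{k} |x - s_j|$. Since all weights are non-negative we have $s_0 \le s_1 \le \dots \le s_k$, so $\mathcal{E}_L(x)/n_L$ is a convex, coercive, piecewise-linear function of $x$ whose breakpoints all lie in the sorted list $s_0, \dots, s_k$. A convex coercive piecewise-linear function attains its global minimum at (at least) one of its breakpoints; hence the minimum of $\mathcal{E}_L$ over all real $x$ equals $\min_{0 \le i \le k} \mathcal{E}_L^{(i)}$, and it remains to show that this finite minimum is $\mathcal{E}_L^{(k/2)}$.

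For that I would use Lemma~\ref{induction1} to compute the consecutive differences. Subtracting the two closed forms (this is precisely the increment already analysed in the proof of Lemma~\ref{induction1}) gives
$$\mathcal{E}_L^{(i+1)} - \mathcal{E}_L^{(i)} = n_L\big((i+1)\,w_{i+1} - (k-i)\,w_{i+1}\big) = n_L\,(2i+1-k)\,w_{i+1}.$$
Because $w_{i+1} \ge 0$, this quantity is $\le 0$ whenever $2i+1 \le k$, i.e.\ for $0 \le i \le \tfrac{k}{2}-1$ (here we use that $k$ is even), and it is $\ge 0$ whenever $2i+1 \ge k$, i.e.\ for $\tfrac{k}{2} \le i \le k-1$. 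Thus the sequence $\mathcal{E}_L^{(0)}, \mathcal{E}_L^{(1)}, \dots, \mathcal{E}_L^{(k)}$ is non-increasing up to index $\tfrac{k}{2}$ and non-decreasing afterwards, so its minimum is $\mathcal{E}_L^{(k/2)}$, attained at $x = w_0 + w_1 + \dots + w_{k/2}$. Finally I would note that this value of $x$ is a non-negative real, hence a legitimate edge weight, and that the analogous statement for $\mathcal{E}_R$ follows by reversing the orientation of $P^{\prime}$ (replacing $x$ by $y$ and the prefixes of $w_0, \dots, w_{k+1}$ by the corresponding suffixes).

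The only delicate point is the first step: reducing ``minimise over all $x \in \mathbb{R}$'' to ``minimise over the breakpoints $s_i$''. This is where convexity is essential — equivalently, where one uses that the slope of $\mathcal{E}_L$ is non-decreasing in $x$, so it changes sign only at a breakpoint — and where the ordering $s_0 \le \dots \le s_k$, a consequence of the non-negativity of the weights, is used; without it the breakpoints would not be sorted and the ``median'' structure that makes $i = \tfrac{k}{2}$ optimal would break down. Everything past that point is the routine difference computation inherited from Lemma~\ref{induction1}.
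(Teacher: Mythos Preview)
Your proposal is correct and shares the paper's core step: both use the increment $\mathcal{E}_L^{(i+1)}-\mathcal{E}_L^{(i)}=n_L(2i+1-k)w_{i+1}$ from Lemma~\ref{induction1} to locate the minimum of the breakpoint values at $i=k/2$. The difference lies in how the reduction to breakpoints is justified. The paper argues by explicit case analysis: it treats $x$ strictly between two consecutive $s_i$'s, then $x<w_0$, then $x>s_k$, each time computing the error directly and comparing to $\mathcal{E}_L^{(k/2)}$. You instead observe once that $\mathcal{E}_L(x)/n_L=\sum_j|x-s_j|$ is convex, coercive, and piecewise linear, so its global minimum is attained at some breakpoint; this is the standard ``median minimises the sum of absolute deviations'' argument and dispatches all non-breakpoint $x$ in one stroke. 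Your route is shorter and makes the structure (an unweighted median of $k+1$ sorted numbers, hence index $\lfloor k/2\rfloor=k/2$) transparent; the paper's route is more elementary in that it avoids invoking convexity, at the cost of three separate sub-cases. Either way the same difference formula does the real work.
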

\begin{proof}
    It suffices to show the optimal value of $\mathcal{E}_L$ is equal to $\mathcal{E}_L ^{(\frac{k}{2})}$. From the proof of Lemma \ref{induction1}, we know that $\mathcal{E}_{L}^{(i+1)}-\mathcal{E}_{L}^{(i)}=n_L \times \big ((i+1) \; w_{i+1} - (k-i) \; w_{i+1} \big)$. Therefore, $\mathcal{E}_{L}^{(i+1)}-\mathcal{E}_{L}^{(i)} < 0 $ if:
    \begin{linenomath*}$$i+1 -k+i <  0 \xrightarrow[]{} 2i < k-1 \xrightarrow[]{} i < \frac{k}{2} - \frac{1}{2} \xrightarrow[\text{since } k \text{ is even}]{} i \leq \frac{k}{2}-1$$\end{linenomath*}
    In other words, $\mathcal{E}_{L}^{(\frac{k}{2})}$ is strictly better than (less than) any $\mathcal{E}_{L}^{(j)}, j\neq \frac{k}{2}$. Note that the optimal solution also cannot happen when $x= \epsilon + \sum_{j=0}^{\frac{k}{2}} w_j  $ for some $0< \epsilon < w_{\frac{k}{2}+1}$, because in that case, the error would be equal to: 
    \begin{linenomath*}$$\mathcal{E}_{L}^{(\frac{k}{2})} + (\frac{k}{2} +1) \; \epsilon - (\frac{k}{2}) \; \epsilon > \mathcal{E}_{L}^{(\frac{k}{2})}$$\end{linenomath*} Using simple replacements, we can deduce that $\mathcal{E}_L ^{(\frac{k}{2})}$ is also smaller than $\mathcal{E}_L$ when $x<w_0$ or $x> w_0 +\dots +w_k$. Let $\mathcal{E}^{(x<w_0)}_L$ denote the value of $\mathcal{E}_L$ for some $x<w_0$. For some $x<w_0$, all clauses in Eq. (\ref{subpath2}) have negative values. Recalling that $|x|=-x$ when $x<0$, we have:
    \begin{linenomath*}\begin{align*}
    \mathcal{E}^{(x<w_0)}_L&=n_L\times(w_0-x + w_0 +w_1 -x + \dots +w_0 +w_1 +\dots + w_k -x )= n_L \times \big(\big(\sum_{j=0}^{k}(k+1-j) w_j\big) - (k+1) \times x\big)\\
    \xrightarrow{0 \leq x<w_0}&>n_L \times \big( \sum_{j=1}^{k}(k+1-j) w_j \big)\xrightarrow{\text{See the proof of Lemma \ref{induction1}}}=\mathcal{E}^{(0)}_L >\mathcal{E}^{(\frac{k}{2})}_L
    \end{align*}\end{linenomath*}
    The other case ($x>w_0 +\dots + w_k$) can be handled analogously.
\end{proof}

\begin{lemma}
The optimal value of $\mathcal{E}_R$ is obtained when $y=w_{\frac{k}{2}+1}+w_{\frac{k}{2}+2}+\dots+w_{k+1}$.
    \label{induction3}
\end{lemma}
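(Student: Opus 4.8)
The plan is to derive this from Lemma \ref{induction2} by the mirror symmetry between $\mathcal{E}_L$ in Eq.~(\ref{subpath2}) and $\mathcal{E}_R$ in Eq.~(\ref{subpath3}). First I would reverse the subpath $P^{\prime}$: relabel its edges $e_0, e_1, \dots, e_{k+1}$ by $\tilde{e}_j := e_{k+1-j}$ (so $\tilde{w}_j = w_{k+1-j}$) and swap the roles of $V_L$ and $V_R$, hence of $n_L$ and $n_R$. Under this relabelling, the sum defining $\mathcal{E}_R$ in terms of the $\tilde{w}_j$, with $y$ playing the role of $x$, becomes syntactically identical to the sum defining $\mathcal{E}_L$ in terms of the $w_j$. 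Lemma \ref{induction2} then applies verbatim and says the optimum of $\mathcal{E}_R$ is attained at $y = \tilde{w}_0 + \tilde{w}_1 + \dots + \tilde{w}_{\frac{k}{2}} = w_{k+1} + w_k + \dots + w_{\frac{k}{2}+1}$, which is exactly the claimed value.

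Alternatively, I would mirror the two-step induction directly, which may read more transparently. Define $\mathcal{E}_R^{(i)}$ as the value of $\mathcal{E}_R$ when $y = w_{k+1} + w_k + \dots + w_{k+1-i}$, for $0 \le i \le k$. The analogue of Lemma \ref{induction1} follows by the same induction on $i$: break Eq.~(\ref{subpath3}) into $k+1$ clauses $c_j = |y - w_{k+1} - w_k - \dots - w_{k+1-j}|$ for $0 \le j \le k$, and track which clauses gain or lose the term $w_{k-i}$ as $y$ moves from $w_{k+1} + \dots + w_{k+1-i}$ to $w_{k+1} + \dots + w_{k-i}$. This yields $\mathcal{E}_R^{(i+1)} - \mathcal{E}_R^{(i)} = n_R \times \big( (i+1)\, w_{k-i} - (k-i)\, w_{k-i} \big)$, which is negative exactly when $i \le \frac{k}{2} - 1$ (here we use that $k$ is even, as $2i < k-1$ iff $i \le \frac{k}{2} - 1$). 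Hence $\mathcal{E}_R^{(\frac{k}{2})}$ is the strict minimum among all breakpoint values, corresponding to $y = w_{\frac{k}{2}+1} + \dots + w_{k+1}$. The remaining cases — $y$ strictly between two consecutive breakpoints, $y < w_{k+1}$, or $y > w_{k+1} + \dots + w_1$ — are excluded exactly as in the proof of Lemma \ref{induction2}, since $\mathcal{E}_R$ is piecewise linear in $y$ with the relevant one-sided slopes strictly positive.

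The main obstacle, such as it is, is purely the index bookkeeping in the reflection: one must check that the $\frac{k}{2}$-th partial sum taken from the right, namely $w_{k+1} + \dots + w_{\frac{k}{2}+1}$, is precisely the breakpoint $i = \frac{k}{2}$ under the relabelling $w_j \mapsto w_{k+1-j}$, and that the parity of $k$ enters at the same place as in Lemma \ref{induction2}. I would therefore keep the proof short, either by literally invoking Lemma \ref{induction2} after the relabelling, or by stating the two mirrored lemmas and observing that their proofs are verbatim those of Lemma \ref{induction1} and Lemma \ref{induction2} under the substitutions $w_j \mapsto w_{k+1-j}$, $x \mapsto y$, and $n_L \mapsto n_R$.
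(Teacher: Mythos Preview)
Your proposal is correct and is essentially the same approach as the paper's own proof, which simply reads ``By symmetry and using Lemma \ref{induction1} and Lemma \ref{induction2}.'' You have merely spelled out the reflection $w_j \mapsto w_{k+1-j}$, $x \mapsto y$, $n_L \mapsto n_R$ more explicitly than the paper does, which is fine but not required.
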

\begin{proof}
By symmetry and using Lemma \ref{induction1} and Lemma \ref{induction2}.
\end{proof}
We now derive the following theorem, which states that the optimal way of contracting an entire subpath is by distributing the left and right halves of the edges in the subpath to the left and right neighbours respectively. 
\begin{theorem}
        Let $P^{\prime} \subseteq P$ be a contiguous subpath of $P$ (a weighted path on $n$ vertices) consisting of $k$ edges $\{e_1, \dots, e_k\}$, and let $e_0$ and $e_{k+1}$ be the left and right neighbouring edges of $P^{\prime}$ respectively. Furthermore, let $w_i=w(e_i) \; \forall i \in \{ 0,\dots , k+1\}$. The optimal error for contracting $P^{\prime}$ is obtained by setting $x=w_0+w_1+\dots+w_{\frac{k}{2}}$ and $y=w_{\frac{k}{2}+1}+w_{\frac{k}{2}+2}+\dots+w_{k+1}$, where $x$ and $y$ are the new edge weights of $e_0$ and $e_{k+1}$ respectively (see Figure \ref{fig10}). If $P^{\prime}$ has no left neighbour ($e_0$ does not exist), the optimal error can be achieved by setting $y=w_{\frac{k}{2}+1}+w_{\frac{k}{2}+2}+\dots+w_{k+1}$. If $P^{\prime}$ has no right neighbour ($e_{k+1}$ does not exist), the optimal error can be achieved by setting $x=w_0+w_1+\dots+w_{\frac{k}{2}}$. Finally, if $P^{\prime}$ has neither a left nor a right neighbour, the optimal error can be achieved by simply contracting $P^{\prime}$ and no further modifications (weight changes) are required.  
        \label{subp2}
\end{theorem}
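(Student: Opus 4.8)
The plan is to assemble the theorem from the three component lemmas just established, exploiting the fact that the error decomposition in Eq.~(\ref{subpath}) is essentially decoupled. Recall $\mathcal{E}=\mathcal{E}_L+\mathcal{E}_R+\mathcal{E}_{LR}$, where by Eqs.~(\ref{subpath2})--(\ref{subpath4}) the term $\mathcal{E}_L$ depends on the redistribution only through the new weight $x$ of $e_0$, the term $\mathcal{E}_R$ only through the new weight $y$ of $e_{k+1}$, and $\mathcal{E}_{LR}$ only through the sum $x+y$; in particular $x$ and $y$ may be chosen independently. Before that I would justify that restricting attention to redistributions which alter only $e_0$ and $e_{k+1}$ loses no generality: this follows by a construction parallel to Lemma~\ref{construction}, pushing any alteration of an edge lying strictly inside $V_L$ or $V_R$ onto $e_0$ or $e_{k+1}$ and bounding the resulting error increase via Corollary~\ref{obs}, the only change being that the two sides of the error are now weighted by $n_L$ and $n_R$. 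Observe also that the prescribed values $x=w_0+w_1+\dots+w_{k/2}$ and $y=w_{k/2+1}+w_{k/2+2}+\dots+w_{k+1}$ are non-negative, since every $w_j\ge 0$, hence form a legal weight redistribution.

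The heart of the argument is then short. By Lemma~\ref{induction2}, $\mathcal{E}_L$ is minimized at $x=w_0+w_1+\dots+w_{k/2}$, and by Lemma~\ref{induction3}, $\mathcal{E}_R$ is minimized at $y=w_{k/2+1}+w_{k/2+2}+\dots+w_{k+1}$. The key observation is that these two independently optimal choices are also compatible with the cross term: their sum equals $w_0+w_1+\dots+w_{k+1}$, so by Eq.~(\ref{subpath4}) we get $\mathcal{E}_{LR}=n_L n_R\cdot 0=0$, which is the global minimum of $\mathcal{E}_{LR}$ because $\mathcal{E}_{LR}$ is a nonnegative multiple of an absolute value. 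Thus at this particular $(x,y)$ all three summands simultaneously attain their individual minima, so $\mathcal{E}$ is minimized there; this settles the main case.

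For the degenerate cases I would simply read off which components are identically zero. If $P'$ has no left neighbour then there are no vertices left of $P'$, so $n_L=0$ and $\mathcal{E}_L\equiv\mathcal{E}_{LR}\equiv 0$, leaving only $\mathcal{E}_R$, minimized by Lemma~\ref{induction3} at $y=w_{k/2+1}+\dots+w_{k+1}$. The no-right-neighbour case is symmetric, with $n_R=0$ and Lemma~\ref{induction2}. If $P'$ has neither neighbour then $n_L=n_R=0$, so $\mathcal{E}\equiv 0$ and contracting $P'$ with no weight change is trivially optimal.

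I do not anticipate any real obstacle: the analytic work is done entirely inside Lemmas~\ref{induction1}, \ref{induction2} and \ref{induction3}, and what remains is the bookkeeping remark that the minimizer of $\mathcal{E}_L+\mathcal{E}_R$ automatically annihilates $\mathcal{E}_{LR}$. The only place that calls for care rather than ingenuity is the reduction to redistributions touching only $e_0$ and $e_{k+1}$, since the theorem asserts optimality over all redistributions; as noted, this closely mirrors the proof of Lemma~\ref{construction}, now with the error contributions weighted by the two subpath sizes $n_L$ and $n_R$.
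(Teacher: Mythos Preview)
Your proposal is correct and follows essentially the same route as the paper: both proofs invoke Lemmas~\ref{induction2} and~\ref{induction3} to minimize $\mathcal{E}_L$ and $\mathcal{E}_R$ independently, observe that the resulting choice makes $\mathcal{E}_{LR}=0$, handle the degenerate cases via $n_L=0$ or $n_R=0$, and reduce to the ``only alter $e_0,e_{k+1}$'' setting by a sketch in the spirit of Lemma~\ref{construction}. The paper's sketch for that last reduction is phrased slightly differently (it argues directly that for any vertex $u\in V_L$ and any effective distance $x$ from $u$ to the supernode, the error to the $k+1$ merged vertices equals $\mathcal{E}_L/n_L$ and is therefore minimized at $\mathcal{E}_L^{(k/2)}/n_L$ by Lemma~\ref{induction2}), but the content is the same.
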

\begin{proof}
    The case with both neighbours existing is immediate from Lemma 
    \ref{induction2}, Lemma \ref{induction3}, Eq. (\ref{subpath}), and the fact 
    that $\mathcal{E}_{LR}=0$ when $x=w_0+w_1+\dots+w_{\frac{k}{2}}$ and 
    $y=w_{\frac{k}{2}+1}+w_{\frac{k}{2}+2}+\dots+w_{k+1}$.\\
    If $P^{\prime}$ has no left neighbour ($e_0$ does not exist), we have $n_L=0$ 
    and consequently $\mathcal{E}_{LR}=E_{L}=0$. It follows that 
    $\mathcal{E}=\mathcal{E}_R$ whose optimal value is obtained by setting 
    $y=w_{\frac{k}{2}+1}+w_{\frac{k}{2}+2}+\dots+w_{k+1}$ using Lemma 
    \ref{induction3}. The other cases can be shown analogously. 

    To prove that it is sufficient to alter only the immediate neighbouring edges 
    of $P^{\prime}$, we only provide a sketch to avoid repetition. The idea is 
    very similar to the proof of Lemma \ref{construction} and Lemma \ref{superr}. 
    Suppose we have any arbitrary weight redistribution, which we transform to the 
    one provided in this theorem. Let $u$ be some vertex in $V_L$ (as in Figure 
    \ref{fig10}). In the original redistribution, let $x$ be the length of the 
    shortest path from $u$ to the super vertex $v^{*}=\{v_2, v_3, \dots, 
    v_{k+2}\}$ in $P^{\prime}$ (Figure \ref{fig10}-(b)). It is easy to see that in 
    the original distribution, the error between $u$ and all of the vertices in 
    $v^{*}$ is equal to:
    \begin{linenomath*}$$\mathcal{E}_1=|x-w_0|+\dots+ |w-w_0-\dots- w_k|$$\end{linenomath*}
    For a fixed $x$, $\mathcal{E}_1$ corresponds to $\frac{\mathcal{E}_L}{n_L}$ 
    (Eq. (\ref{subpath2})). It is easy to see that in the new redistribution, the 
    error between $u$ and all vertices in $v^*$ is equal to 
    $\frac{\mathcal{E}^{(\frac{k}{2})}_L}{n_L}$. Therefore, using Lemma 
    \ref{induction2}, we know that $\frac{\mathcal{E}^{(\frac{k}{2})}_L}{n_L} - 
    \frac{\mathcal{E}_L}{n_L} \leq 0$ for any $x$, and this change in the weight 
    redistribution cannot worsen the error associated with any $u \in V_L$. Other 
    cases can be handled analogously.  
\end{proof}

\subsection{Merging A Set of Independent Edges}
\label{indp}
We now generalize the results of Section \ref{oneedgepath} by proving the correctness of Algorithm \ref{pathsalg} for merging any set of independent edges. The proof of correctness consists of the following lemma and theorem which are similar to Lemma \ref{construction} and Theorem \ref{l3aval} respectively.
\begin{figure*}[h]
    \centering
        \subfloat[]{{\includegraphics[width=300pt]{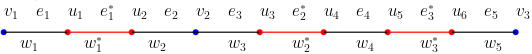}}}
        \hspace{2 cm}
        \subfloat[]{{\includegraphics[width=200pt]{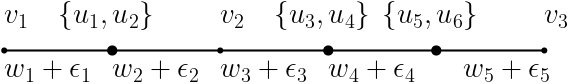}}}
        \hspace{2 cm}
        \subfloat[]{{\includegraphics[width=200pt]{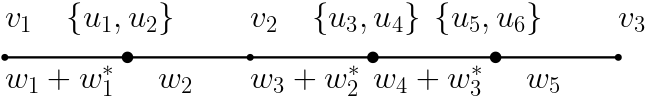}}}
        \hspace{2 cm}
    \caption{The figure used in the proof of Lemma \ref{construction23}. (a) The original graph. The vertices and edges in $V_m$ and $E_m$ are depicted in red, and the vertices in $\overline{V_m}$ are depicted in blue. (b) An arbitrary weight redistribution which assigns $w^{\prime}(e_i) = w(e_i) +\epsilon_i$ to every edge $e_i \in \overline{E_m}=E- E_m$ (c) Another weight redistribution that only marks the left neighbouring edge of each edge in $E_m$ whose associated error is no worse than the one depicted in (b).}
    \label{fig15}
\end{figure*}
\begin{lemma}
    For merging a set of independent edges $E_m$ from a path on $n$ vertices $P_n$, there exists an optimal redistribution in which for each $e \in E_m$, only its left neighbouring edge is marked. If $e^{\prime} \in E_m$ is the leftmost edge on $P_n$, then this optimal solution is obtained by marking the left neighbouring edge of all edges in $E_m$ except for $e^{\prime}$.
    \label{construction23}
\end{lemma}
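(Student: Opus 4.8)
The plan is to mimic the construction of Lemma \ref{construction}, applied locally and independently at each merged edge. Since the edges in $E_m$ are independent (they induce a matching on $P_n$), each merged edge $e^* = (v_{j}, v_{j+1}) \in E_m$ has its own left and right neighbouring edges, and no two merged edges share a neighbour. This lets us process them one at a time. Given an arbitrary weight redistribution $w'(e_i) = w(e_i) + \epsilon_i$ on the non-merged edges $\overline{E_m}$, I would first reduce to the case where no edge is altered except possibly the immediate neighbours of merged edges: for a single merged edge with both neighbours present, Lemma \ref{construction} already shows any redistribution can be pushed onto its left neighbour without increasing the error; because the merged edges are independent, the ``mass'' $w^*$ moved for one merged edge never collides with the analogous move for another, so applying the construction of Lemma \ref{construction} simultaneously (or in any order) at every merged edge that has a left neighbour yields a redistribution in which only left-neighbouring edges are altered, and the total error has not increased.

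Next I would verify that after this reduction the left neighbour of each merged edge is in fact \emph{marked} (i.e. $\epsilon = w^*$ exactly) rather than merely altered. This is where I would re-run the case analysis of Lemma \ref{construction} in the present setting: after setting $\epsilon_{n_1} = w^*$ for the left neighbour of $e^*$ and zeroing all other $\epsilon_i$ relevant to $e^*$, the five-case bookkeeping (Cases 1--5 in the proof of Lemma \ref{construction}) shows every potential error increase on a pair straddling the supernode $\{v_j, v_{j+1}\}$ is bounded, via Corollary \ref{obs}, by a corresponding error decrease. The one new wrinkle is that pairs of vertices $u, v$ lying on opposite sides of \emph{two distinct} merged edges are affected by both constructions at once; here I would note that the construction for each merged edge individually does not worsen the contribution of such a pair (each merged edge sets the crossing-distance discrepancy for ``its'' segment to zero, as in Case 1 and Case 2 of Lemma \ref{construction}), so the combined effect is still non-increasing.

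Finally, I would handle the boundary exception: if the leftmost edge $e' \in E_m$ has no left neighbour on $P_n$, then exactly as in the no-left-neighbour case of Theorem \ref{l3aval}, no shortest path crossing $e'$ is affected by any weight change, and the error contributed by $e'$ is minimized by leaving its right-side weights unmarked; hence the optimal solution marks the left neighbour of every merged edge \emph{except} $e'$. The main obstacle I anticipate is the careful accounting for vertex pairs that straddle more than one merged edge, to make sure the independent per-edge constructions genuinely compose without hidden interaction; once that is pinned down, the rest is a direct adaptation of Lemma \ref{construction}.
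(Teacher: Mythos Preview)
Your destination is right but the route has a genuine gap. You frame the argument as applying Lemma~\ref{construction} locally or iteratively at each merged edge, claiming the constructions compose because the edges are independent. They do not compose that way: the construction in Lemma~\ref{construction} is global---it zeros \emph{every} $\epsilon_i$ and then sets a single one to $w^*$. Running it for $e^*_1$ wipes out whatever you intend to set for $e^*_2$'s left neighbour, so ``in any order'' fails outright, and ``simultaneously'' is not an application of Lemma~\ref{construction} at all but a new claim requiring its own proof. Independence of the merged edges guarantees their left neighbours are distinct edges, but that says nothing about why the total error does not increase when you jump straight to the target; that is exactly the content of the lemma.

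The paper resolves your stated ``main obstacle'' with one global observation rather than per-edge bookkeeping: in the target redistribution (every left neighbour marked, everything else unaltered), the error between any two vertices $v_i,v_j\in\overline{V_m}$ is \emph{exactly zero}, because each removed weight $w^*_k$ is restored by the mark on the edge immediately to its left, so distances between unmerged vertices are unchanged from the original path. Pairs straddling several merged edges are $\overline{V_m}$-pairs, hence their target error is zero and the comparison against any other redistribution can only go down. This leaves only pairs $(v_i,u)$ with $v_i\in\overline{V_m}$ and $u\in V_m$. For each merged edge $e^*_k=(u_j,u_{j+1})$ the paper pairs the contribution of $u_j$ with that of $u_{j+1}$: one of the two target errors is zero and the other is $|w^*_k|$, and Corollary~\ref{obs} shows any increase on the nonzero side is covered by the guaranteed decrease on the zero side. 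Your two-stage plan (first reduce to ``only immediate neighbours altered,'' then separately verify they are marked) is an unnecessary detour; the target is marked by construction, and the paper compares an arbitrary redistribution against it in a single step.
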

\begin{proof}
    The proof is similar to the proof of Lemma \ref{construction} and we will provide a sketch using Figure \ref{fig15}. In Figure \ref{fig15}, the edges in $E_m$ and the vertices in $V_m$ are highlighted in red, and the vertices in $\overline{V_m}$ are depicted in blue. We assign an ordering to the vertices (of $V_m$ and $\overline{V_m}$) and the edges (of $E_m$ and $\overline{E_m}$) from left to right, as illustrated in Figure \ref{fig15}. Let $v_i$ and $u_j$ be the $i$-th and the $j$-th vertex in $\overline{V_m}$ and $V_m$ respectively according to this ordering. Similarly, let $e_i$ and $e^{*}_j$ be the $i$-th and the $j$-th edge in $\overline{E_m}=E- E_m$ and $E_m$ respectively. For convenience, we denote $w(e_i)$ and $w(e^{*}_i)$ by $w_i$ and $w^{*}_i$ respectively. Figure \ref{fig15}-(b) depicts some arbitrary weight redistribution in which the new weight of each edge $e_i$ is set to $w(e_i) + \epsilon_i$. We shall show that the error associated with the weight redistribution of Figure \ref{fig15}-(c) (in which the left neighbours of $E_m$ are marked) is no worse than that of Figure \ref{fig15}-(b). We again assume that all edges in $E_m$ have left neighbours. First, observe how this new weight redistribution removes any error between the vertices in $\overline{V_m}$. For instance, in the path of Figure \ref{fig15}-(c), the shortest path value between $v_1, v_3 \in \overline{V_m}$ is the same as the one in the original path (Figure \ref{fig15}-(a)). Therefore, it suffices to study only the error between all pairs of vertices $(u,v), \; \; u \in V_m,\; v \in \overline{V_m} $. Using our ordering of edges, let $e^{*}_k=(u_{j},u_{j+1} ) \in E_m$ and let $v_i \in \overline{V_m}$ be a vertex to the left of $e^*_k$ (we will explain how the other case can be handled analogously). Continuing with our example of Figure \ref{fig15}, let $e^{*}_k=e^{*}_3=(u_5, u_6)$ and $v_i=v_1$. Observe how between $v_i$ ($v_1$ in Figure \ref{fig15}-(a)) and $u_{j+1}$ ($u_6$ in Figure \ref{fig15}-(c)), there exists no error in the new redistribution as they have equal shortest path values in the original graph (Figure \ref{fig15}-(a)) and the new distribution (Figure \ref{fig15}-(c)). We show that, going from the distribution of Figure \ref{fig15}-(b) to the one in Figure \ref{fig15}-(c), any increase in the error between $v_i$ and the left endpoint of $e^*_k$ ($u_{j}$) can be nullified by the decrease in the error between $v_i$ and $u_{j+1}$. The case where $v_i$ is located on the right of $e^*_k$ can be handled similarly. \\ For any $E^{\prime} \subseteq E$ we define the following quantities:
    
    \begin{linenomath*}\begin{equation*}
        \nw(E^{\prime})=\sum_{e \in E^{\prime} \cap \overline{E_m}}w(e), \; \; \nw^*(E^{\prime})=\sum_{e \in E^{\prime} \cap {E_m}}w(e), \; \; \nw^{\prime}(E^{\prime})=\sum_{e_i \in \overline{E_m} \cap E^{\prime}} \epsilon_i
    \end{equation*}\end{linenomath*}
    where $\nw^{\prime}(E^{\prime})$ denotes the sum of all $\epsilon_i$'s in the distribution of Figure \ref{fig15}-(b). Let $\pi_{v, u}$, $\pi^{\prime}_{v, u}$, and $\pi^{\prime \prime}_{v, u}$ denote the shortest path values between $v$ and $u$ in the original graph (Figure \ref{fig15}-(a)), the first redistribution (Figure \ref{fig15}-(b)), and the second redistribution (Figure \ref{fig15}-(c)) respectively. Moreover, let $E^{(u,v)}$ denote the set of edges on the unique shortest path from $u$ to $v$. We have: 
    
  \begin{linenomath*}  \begin{align}
    \pi_{v_i, u_j}&= \nw(\eij)+ \nw^*(\eij) \label{oneij}\\ 
    \pi^{\prime}_{v_i, u_j}&=\nw(\eij)+\nw^{\prime}(\eij) \label {twoij}\\
    \pi^{\prime \prime}_{v_i, u_j}&=\nw(\eij)+\nw^{*}(\eij)+ w^*_k \label {threeij}
    \end{align}\end{linenomath*}
    \\\\\\
We provide some examples of these quantities in Example \ref{ex} for better readability. 

Note that:
\begin{linenomath*}\begin{equation}
        \pi_{v_i, u_{j+1}}= \pi_{v_i, u_{j}}+w^*_k \text{,  } \pi^{\prime}_{v_i, u_j}=\pi^{\prime}_{v_i, u_{j+1}}\text{, and } \pi^{\prime \prime}_{v_i, u_j}=\pi^{\prime \prime}_{v_i, u_{j+1}}
    \label{fourij}
\end{equation}\end{linenomath*}
The error between $v_i$ and $u_{j+1}$ in the redistribution of Figure \ref{fig15}-(b) is: 
\begin{linenomath*}\begin{equation}
   \mathcal{E}^{v_i, u_{j+1}}_1=\left|\pi_{v_i,u_{j+1}}-\pi^{\prime}_{v_i, u_{j+1}}\right|=\left|  \pi_{v_i, u_{j}}+w^*_k -\pi^{\prime}_{v_i, u_j}\right| = \left|w^{*}_k+\nw^{*}(\eij)- \nw^{\prime}(\eij)\right|
\end{equation}\end{linenomath*}
As mentioned before, the error between $v_i$ and $u_{j+1}$ in the weight redistribution of Figure \ref{fig15}-(c) is equal to zero: 
\begin{equation}
     \mathcal{E}^{v_i, u_{j+1}}_2=0
\end{equation}
    Therefore, transforming Figure \ref{fig15}-(b) into Figure \ref{fig15}-(c) changes the error between $v_i$ to $u_{j+1}$ by: 
    \begin{linenomath*}\begin{equation}
         \Delta_{v_i, u_{j+1}}=\mathcal{E}^{v_i, u_{j+1}}_2-  \mathcal{E}^{v_i, u_{j+1}}_1=-\left|w^{*}_k+\nw^{*}(\eij)- \nw^{\prime}(\eij)\right|
    \end{equation}\end{linenomath*}
    
The error between $v_i$ and $u_{j}$ in the redistribution of Figure \ref{fig15}-(b) is: 
\begin{linenomath*}\begin{equation}
   \mathcal{E}^{v_i, u_{j}}_1=\left|\pi_{v_i,u_{j}}-\pi^{\prime}_{v_i, u_{j}}\right|=\left|\pi^{\prime}_{v_i,u_{j}}-\pi_{v_i, u_{j}}\right|=  \left|\nw^{\prime}(\eij)- \nw^{*}(\eij)\right|
\end{equation}\end{linenomath*}
The error between $v_i$ and $u_{j}$ in the weight redistribution of Figure \ref{fig15}-(c) is equal to:
\begin{linenomath*}\begin{equation}
     \mathcal{E}^{v_i, u_{j}}_2=\left|\pi^{\prime \prime}_{v_i, u_{j}}-\pi_{v_i, u_{j}}\right|=\left|w^*_k\right|
\end{equation}\end{linenomath*}
Transforming Figure \ref{fig15}-(b) into Figure \ref{fig15}-(c) changes the error between $v_i$ to $u_{j}$ by: 
    \begin{linenomath*}\begin{equation}
         \Delta_{v_i, u_{j}}=\mathcal{E}^{v_i, u_{j}}_2-  \mathcal{E}^{v_i, u_{j}}_1=\left|w^*_k\right|-\left|\nw^{\prime}(\eij)- \nw^{*}(\eij)\right|\leq \left|w^*_k-\nw^{\prime}(\eij)+ \nw^{*}(\eij)\right|
    \end{equation}\end{linenomath*}
using Corollary \ref{obs}. Therefore, going from the first redistribution to the second one changes the error between the endpoints of $e^*_k=(u_{j}, u_{j+1})$ and $v_i$ by:
\begin{linenomath*}$$\Delta_{v_i, u_{j}}+\Delta_{v_i, u_{j+1}}\leq \left|w^*_k-\nw^{\prime}(\eij)+ \nw^{*}(\eij)\right|-\left|w^{*}_k+\nw^{*}(\eij)- \nw^{\prime}(\eij)\right| \leq 0$$\end{linenomath*}
Since each $e^*_k$ edge in $E_m$ has exactly two endpoints, this concludes the proof for the first case ($v_i$ is on the left of $e^*_k$). The other case can be handled analogously.
\end{proof}
\begin{example}
\label{ex}
Returning to our example of Lemma \ref{construction23} and Figure \ref{fig15}, let $e^*_k=(u_{j}, u_{j+1})= e^*_3=(u_5, u_6)$, and $v_i=v_1$. Then:
\begin{itemize}
    \item $E^{(v_1, u_5)}=\{e_1, e^*_1, e_2, e_3, e^*_2, e_4\}$
    \item $\nw (E^{(v_1, u_5)})= w_1+w_2+w_3+w_4$
    \item $\nw^* (E^{(v_1, u_5)})= w^*_1+w^*_2$
    \item $\nw^{\prime} (E^{(v_1, u_5)})=\epsilon_1+\epsilon_2+\epsilon_2+\epsilon_3+\epsilon_4$

\end{itemize}
\end{example}
\begin{theorem}
   Let $|\Delta E|$ be the optimal error resulting from merging a set of $k$ independent edges $e_1, e_2,\dots, e_k$ with respective weights $w^*_1, w^*_2, \dots, w^*_k$ from a path on $n$ vertices $P_n$. Let $(u_{2i-1}, u_{2i})$ be the endpoints of $e_i \in E_m, 1\leq i \leq k$. Furthermore, let $V_m=\{u_1,\dots, u_{2k}\}$ and $\overline{V_m}=V-V_m$. We have $|\Delta E|= |\overline{V_m}| (w^*_1+\dots+w^*_k)= (n-2k)(w^*_1+\dots+w^*_k)$. This optimal value can be achieved by marking the left neighbour of each edge in $E_m$ after contraction. If the leftmost edge in $E_m$ has no left neighbour, the optimal error can be achieved by marking the left neighbours of all other edges in $E_m$.
\label{nobetter}
\end{theorem}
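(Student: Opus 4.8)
The plan is to reduce to a single canonical redistribution via Lemma~\ref{construction23} and then compute its error exactly. By Lemma~\ref{construction23}, the redistribution $w'$ that marks the left neighbour of every edge of $E_m$ — or, when the leftmost edge of $E_m$ has no left neighbour, the left neighbours of all the other edges of $E_m$ — is optimal, so the optimal error $|\Delta E|$ equals the error of this $w'$ and it remains to show that this error equals $(n-2k)(w^*_1+\dots+w^*_k)$. Since the $k$ edges of $E_m$ are independent they cover $2k$ distinct vertices, so $|\overline{V_m}| = n-2k$; this is the quantity against which I need the per-vertex contributions to sum. (Independence is also what makes $w'$ well defined: the left neighbour of $e_i$ cannot itself be an edge of $E_m$, so the marked edges are distinct and lie in $\overline{E_m}$.)

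I would first record two distance facts about $w'$. First, shortest-path lengths between any two vertices of $\overline{V_m}$ are unchanged: for each $e_i\in E_m$ between them, $e_i$ now contributes $0$ instead of $w^*_i$ but its marked left neighbour, which also lies between them, contributes an extra $w^*_i$, so the effects cancel edge by edge; hence the $\overline{V_m}$--$\overline{V_m}$ part of $|\Delta E|$ vanishes. Second, fix $v\in\overline{V_m}$ and an edge $e_i=(a_i,b_i)\in E_m$ with $a_i$ its left and $b_i$ its right endpoint, and let $M_i$ be the supernode formed by contracting $e_i$. If $v$ lies to the left of $e_i$, the last edge of the $v$--$M_i$ path in $G'$ is the marked left neighbour of $e_i$, and the same edge-by-edge cancellation for the remaining edges of $E_m$ on the path gives $d_{G'}(v,M_i)=d_G(v,a_i)+w^*_i=d_G(v,b_i)$; if $v$ lies to the right of $e_i$, that marked edge is off the path and $d_{G'}(v,M_i)=d_G(v,b_i)=d_G(v,a_i)-w^*_i$. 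Either way, of the two pairs $(v,a_i)$ and $(v,b_i)$ exactly one is preserved and the other differs by exactly $w^*_i$, so together they contribute precisely $w^*_i$ to $|\Delta E|$.

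Summing this contribution over all $v\in\overline{V_m}$ and all $i$, and discarding the $\overline{V_m}$--$\overline{V_m}$ terms by the first fact, yields $|\Delta E| = |\overline{V_m}|(w^*_1+\dots+w^*_k) = (n-2k)(w^*_1+\dots+w^*_k)$. The no-left-neighbour exception runs through the same computation: if the leftmost edge $e_1$ has no left neighbour then all of $\overline{V_m}$ lies to its right, contracting $e_1$ with no weight change still gives $d_{G'}(v,M_1)=d_G(v,b_1)=d_G(v,a_1)-w^*_1$ for each such $v$, so $e_1$ still contributes exactly $w^*_1$ per vertex of $\overline{V_m}$, and the total is unchanged.

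The only place needing care is the bookkeeping inside the key distance identity of the second fact — verifying, across all positional sub-cases, exactly which contracted edges and which marked left neighbours lie on a given shortest path and that they cancel in pairs, the subtle configuration being two edges of $E_m$ separated by a single edge, whose marking then sits between the two supernodes. This is a direct generalization of the case analysis already carried out for Lemma~\ref{construction23}, so I anticipate index-chasing rather than a genuine obstacle. If one prefers a lower bound not leaning on the full strength of Lemma~\ref{construction23}, the triangle inequality gives it independently: for any redistribution, any $v\in\overline{V_m}$, and any $e_i=(a_i,b_i)\in E_m$, one has $|d_G(v,a_i)-d_{G'}(v,M_i)|+|d_G(v,b_i)-d_{G'}(v,M_i)| \ge |d_G(v,a_i)-d_G(v,b_i)| = w^*_i$, and summing over the pairwise distinct pairs $(v,a_i),(v,b_i)$ already forces $|\Delta E|\ge (n-2k)\sum_i w^*_i$; together with the achievability computation above this reproves the theorem.
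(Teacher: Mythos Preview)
Your proposal is correct and follows essentially the same approach as the paper: invoke Lemma~\ref{construction23} for optimality, observe that $\overline{V_m}$--$\overline{V_m}$ distances are unchanged under the left-marking redistribution, and then show each pair $(v,e_i)$ with $v\in\overline{V_m}$ contributes exactly $w^*_i$ before summing. Your write-up is in fact more detailed than the paper's (which defers the cancellation to the proof of Lemma~\ref{construction23}), and the triangle-inequality lower bound you add at the end is a nice bonus not present in the paper.
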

\begin{proof}
Let $ w^{\prime}: E \rightarrow \mathbb{R}_{\geq 0}$ be the weight redistribution that marks the left neighbouring edge (if any) of each edge in $E_m$ (Figure \ref{fig15}-(c)). That $w^{\prime}$ is optimal follows directly from Lemma \ref{construction23}. We now prove the error associated with $w^{\prime}$.

Since the edges in $E_m$ induce a matching on $P_n$, $|\overline{V_m}|= n-2|E_m|=n-2k$. Recall from the proof of Lemma \ref{construction23} that in $w^{\prime}$, there exists no error between two vertices $v_1, v_2 \in \overline{V_m}$. Let us fix some $e^*_k\in E_m$. Using the proof of Lemma \ref{construction23}, we know that each vertex $v_i \in \overline{V_m}$ induces an error of $w^*_k$ with exactly one endpoint of $e^*_k$ (and no error with the other endpoint). Summing over all vertices $v_i \in \overline{V_m}$, we get that each edge $e^*_k \in E_m$ accumulates a total of $(n-2k)w^*_k$ in error. Summing again over all edges $e^*_k \in E_m$ yields the desired bound.

\end{proof}

\section{Graph Compression for Trees}
\label{treescompressed}
In this section, we study the problem of distance-preserving graph compression for weighted trees. Precisely, we study a relevant problem, referred to as \textit{the marking problem}, for a tree $T=(V,E)$, $|V|=n$, and weight function $w: E \xrightarrow[]{} \mathbb{R}_{\geq 0}$.

The remainder of this section is organized as follows. In Section \ref{sing}, we formally define the marking problem. The adaptation of the error function (Eq. (\ref{hehe})) to the marking problem is thoroughly explained in Section \ref{formul}. 
As a warm-up, we study a special case of the marking problem in Section \ref{equal}, after which we generalize the results in Section \ref{varying} and present a linear-time algorithm for solving the marking problem in Algorithm \ref{treesalg2}. As the final component of this section, we thoroughly study the difference between the marking problem (Definition \ref{marking}) and the fractional marking problem (Definition \ref{fracdef}) in Section \ref{fract}.
\subsection{The Marking Problem for a Single Edge}
\label{sing}
As seen in Section \ref{paths}, for merging a single edge in a weighted path, marking one of the neighbouring edges produces the optimal amount of error. An important question is how to generalize this result to solve the same problem for weighted trees. 
We formally state the marking problem as: 
\begin{definition}
    \textbf{The Marking Problem for Weighted Trees:} Given a contracted edge $e^*$ in a weighted tree $T$, what subset of the neighbouring edges of $e^*$ should we mark such that the error value of Eq. (\ref{hehe}) is minimized over all such possible subsets?
\label{marking}
\end{definition}
An example of the marking problem is depicted in Figure \ref{fig16}-(a), where edge $e^{*}$ with weight $w^{*}$ is contracted.  As shown in Figure \ref{fig16}-(b), in the marking problem, the goal is to mark a subset of the neighbouring edges of $e^*$, by setting the new weight of each marked edge $e_i$ to $w^{\prime}(e_i)=w(e_i) +\epsilon_i, \epsilon_i \in \{0, w^{*}\}$, in a way that minimizes the error function of Eq. (\ref{hehe}) over all such possible subsets. Note that the fractional case (when the weight of each marked edge $e_i$ is set to $w^{\prime}(e_i)=w(e_i) +\epsilon_i, \epsilon_i \in [0, w^{*}]$) is thoroughly studied in Section \ref{fract}.

In the tree of Figure \ref{fig16}-(a), $e^*$ has four neighbouring edges, namely $e_1=(v_1, v_3)$, $e_2=(v_1,v_4)$, $e_3=(v_2, v_5)$, and $e_4=(v_2, v_6)$. Different subsets of these neighbouring edges can be marked, for instance, in Figure \ref{fig17}-(a), $\{e_1, e_2\}$ is marked. In the remainder of this section, we may refer to each of these marked subsets as a \textit{marking} for simplicity. For example, in Figure \ref{fig17}-(c), $\{e_1, e_3\}$ is a marking. An optimal marking is one that minimizes the error function of Eq. (\ref{hehe}) over all possible markings.

Since for merging an edge in a weighted path marking one of the neighbouring edges gives the optimal amount of error, our intuition tells us that in a weighted tree, we have to mark all neighbouring edges on one side of the contracted edge $e^*$. As we shall show later, this intuition, though not completely correct, is optimal for specific kinds of input. To study the marking problem, we first present some definitions and observations using Figure \ref{fig16} and Figure \ref{fig17} as our running examples. We assume the tree is laid out in the plane and $e^*$ (the edge to be merged) is horizontal. This assumption will simplify the description of our results.
        \begin{figure*}[h]
    \centering
        \subfloat[Before merging $e^*$]{{\includegraphics[width=170pt]{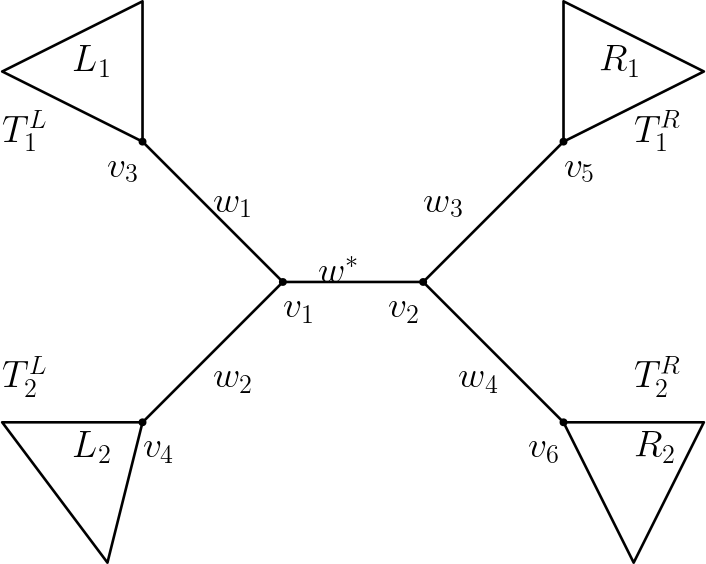}}}
        \hspace{2 cm}
        \subfloat[After merging $e^*$]{{\includegraphics[width=150pt]{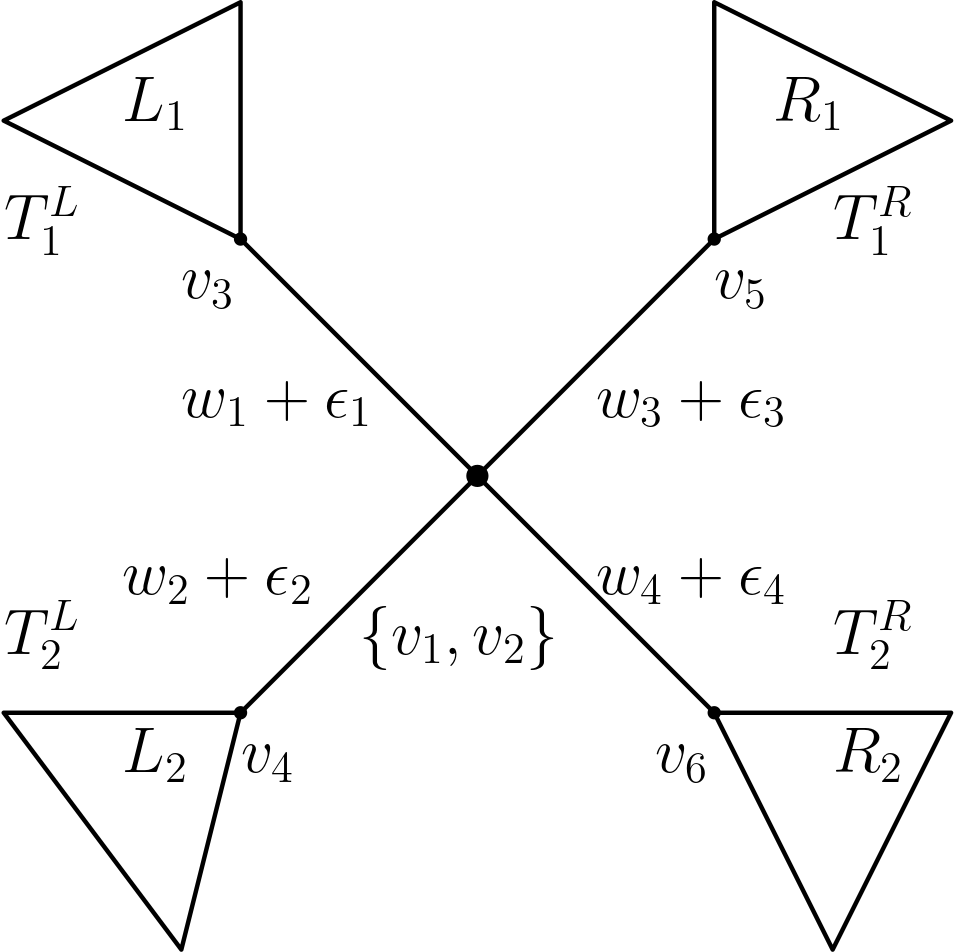}}}\

                \qquad \hspace{2 cm}
    \qquad
    \caption{The figure used in Section \ref{sing} for defining the marking problem. We denote by $\treeleft, i \in \{1,2 \}$, and $\treeright, j \in \{1,2\}$ the subtree rooted at the $i$-th edge to the left and the $j$-th edge to the right respectively. Moreover, $L_i=|\{v|v\in \treeleft\}| $ and  $R_j=|\{v|v\in \treeright\}| $ denote the number of vertices in each subtree. }
    \label{fig16}
\end{figure*}
\begin{definition}
    Let $T=(V, E) $ be a weighted tree with non-negative weights, and let $e^*=(v_1, v_2)$ be the merged edge with weight $w^*$, $V_m=\{v_1, v_2\}$, and $\overline{V_m}=V-V_m$. We denote by $\leffs$ the number of subtrees to the left of $v_1$ and by $\riis$ the number of subtrees to the right of $v_2$. More formally, let $E^{\prime}=E- e^*$. We have: 
    \begin{linenomath*}$$V_L= \{u| (u,v_1) \in E^{\prime}\}, \leffs =|V_L|$$\end{linenomath*}
    \begin{linenomath*}$$V_R= \{w| (v_2,w) \in E^{\prime}\}, \riis= |V_R|$$\end{linenomath*}
    \label{weightedtree}
\end{definition}
For instance, in the tree of Figure \ref{fig16}, we have $V_L=\{v_3, v_4\}$ and $V_R=\{v_5, v_6\}$ and therefore $\leffs=\riis=2$.\\
Given $e^*=(v_1, v_2)$ in $T$, $T - \{v_1, v_2\}$ is a forest $\mathcal{F}$, the components of which are used in our analyses and defined as follows:
\begin{definition}
 Let $T$, $e^*=(v_1, v_2)$, $V_L$ and $V_R$ be as defined in Definition \ref{weightedtree}. Let $\mathcal{F}$ be the forest $T- \{v_1, v_2\}$. Furthermore, assume that the connected components of $\mathcal{F}$ are rooted at the vertices of $V_L$ or $V_R$, and let $C_L$ and $C_R$ be the sets of components of $\mathcal{F}$ rooted at the vertices of $V_L$ and $V_R$ respectively. Then, we denote by $\treeleft, i \in \{1, \dots, \leffs \}$ the $i$-th member of $C_L$, and by $\treeright, j \in \{1,\dots, \riis\}$ the $j$-th member of $C_R$, given some arbitrary ordering on the members of $C_L$ and $C_R$.
 \label{weightedtree2}
\end{definition}
In the tree of Figure \ref{fig16}, $\leffs=2$, and $C_L$ has two members (the subtrees rooted at $v_3$ and $v_4$). Given some arbitrary ordering on the members of $C_L$, $T^L_1$ is the subtree rooted at $v_3$. 
\\
We also formally define the cardinality of the subtrees of Definition \ref{weightedtree2} as follows: 
\begin{definition}
    Let $\treeleft, i \in \{1, \dots, \leffs \}$ and  $\treeright, j \in \{1,\dots, \riis\}$ be as defined in Definition \ref{weightedtree2}. We have $L_i=|\{v|v\in \treeleft\}| $ and  $R_j=|\{v|v\in \treeright\}|$. We refer to $L_i$ as the cardinality of the $i$-th edge on the left and $R_j$ as the cardinality of the $j$-th edge on the right.
\end{definition}
A few examples of marking the edges of Figure \ref{fig16} are provided in Figure \ref{fig17}-(a) to Figure \ref{fig17}-(c). In Figure \ref{fig17}-(a) and Figure \ref{fig17}-(b), all edges on one side of $e^*$ are marked, and in Figure \ref{fig17}-(c), a subset of edges from both sides is marked. Marking an edge could both increase and decrease the total amount of error. Before proceeding with the remainder of this section, we note the following lemma to justify our focus on minimizing the error between all pairs of vertices in $\overline{V_m}$.

\begin{lemma}
(See Figure \ref{fig16}) Let $e^*=(v_1,v_2)$ be the single merged edge in a weighted tree $T=(V, E)$, and let $\overline{V_m}= V- \{v_1, v_2\}$. Then, as long as every neighbouring edge of $e^{*}$ is either marked or unmarked, the error between some vertex $u \in \overline{V_m}$  and the vertices in $\{v_1, v_2\}$ is minimized.  
\label{depressed}
\end{lemma}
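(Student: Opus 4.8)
The plan is to fix an arbitrary vertex $u \in \overline{V_m}$ and show that, under the hypothesis that every neighbouring edge of $e^*$ is marked or unmarked, the joint contribution of the pairs $(u,v_1)$ and $(u,v_2)$ to the error of Eq.~(\ref{hehe}) equals $w^*$ no matter which marking is chosen; since Lemma \ref{l12} shows $w^*$ is a lower bound for this contribution over \emph{all} weight choices, this is exactly the minimum. By the left--right symmetry of Figure \ref{fig16} I may assume $u$ lies in the left forest, say $u \in \treeleft$ for some $i$, whose root is joined to $v_1$ by the neighbouring edge $e_i$.

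First I would record the two relevant distances in the original tree. Since $T$ is a tree, the $u$--$v_1$ path is unique; it runs up through $\treeleft$ and then along $e_i$, and it meets no neighbouring edge of $e^*$ other than $e_i$. The $u$--$v_2$ path is this same path extended by $e^*$. Writing $a = d_T(u,v_1)$ we thus get $d_T(u,v_2) = a + w^*$.

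Next I would pass to the compressed tree $T'$, in which $v_1$ and $v_2$ are identified into a supernode $s$, so $d_{T'}(u,v_1) = d_{T'}(u,v_2) =: b$. The unique $u$--$s$ path still ascends $\treeleft$ and then uses $e_i$; marking alters only neighbouring edges of $e^*$, and among those only $e_i$ lies on this path, so the interior distances within $\treeleft$ are untouched. Hence $b = a$ if $e_i$ is unmarked and $b = a + w^*$ if $e_i$ is marked, and in either case $a \le b \le a + w^*$ because $w^* \ge 0$. The contribution of $u$ to the error against $\{v_1,v_2\}$ is therefore $|a-b| + |(a+w^*)-b|$, which by Lemma \ref{l12} (taking $x = b$, $A = a$, $B = w^*$) is at least $w^*$, with equality precisely when $a \le b \le a + w^*$. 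Since that always holds, the contribution equals $w^*$ regardless of the marking, hence is minimized. The case $u \in \treeright$ is identical with $v_1$ and $v_2$ interchanged, and summing the (constant) contribution $w^*$ over all $u \in \overline{V_m}$ completes the argument.

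There is no serious obstacle here; the only point needing care is the verification that $b$ can only take the values $a$ and $a + w^*$ — that is, that marking or unmarking the \emph{other} neighbouring edges of $e^*$ leaves $d_{T'}(u,s)$ unchanged. This is precisely where the ``marked or unmarked'' hypothesis is essential: for a general altered redistribution $b$ could fall outside $[a, a+w^*]$, and by Lemma \ref{l12} the contribution would then strictly exceed $w^*$.
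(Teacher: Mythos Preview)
Your proposal is correct and follows essentially the same approach as the paper: fix a vertex $u$ in one of the neighbouring subtrees, observe that its contribution to the error against $\{v_1,v_2\}$ has the form $|\epsilon|+|w^*-\epsilon|$ (in your notation, $|a-b|+|(a+w^*)-b|$), and invoke Lemma \ref{l12} to conclude this is at least $w^*$ with equality exactly when the relevant neighbouring edge is marked or unmarked. Your write-up is slightly more explicit than the paper's in checking that only the single edge $e_i$ on the $u$--$s$ path is affected by the marking, but the argument is the same.
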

\begin{proof}
    This lemma is a direct result of Lemma \ref{l12} and Theorem \ref{l3aval}. Let us fix some vertex $u \in T_2^{L}$ (see Figure \ref{fig16}-(b)), the error between $u$ and the endpoints of $e^*$, $v_1$ and $v_2$, can be formulated as:
    \begin{linenomath*}\begin{align*}
        |\Delta E|^{\prime}=&\underbrace{|w_2-(w_2+\epsilon_2)|}_{\text{between } u \text{ and }v_1}+\underbrace{|w_2+w^*-(w_2+\epsilon_2)|}_{\text{between } u \text{ and }v_2 }=|\epsilon_2|+|w^*-\epsilon_2|= |\epsilon_2|+|\epsilon_2- w^*|
    \end{align*}\end{linenomath*}
    Using Lemma \ref{l12}, we have $|\Delta E|^{\prime} \geq w^*$, and $|\Delta E|^{\prime} = w^*$ for $0\leq \epsilon_2 \leq w^*$. Therefore, when $(v_1, v_4)$ is either marked or unmarked, we have $\epsilon_2 \in \{0, w^*\}$, which satisfies the desired conditions. This analysis applies to all nodes $u \in \overline{V_m}$, thus the lemma follows.  
\end{proof}
  
In the remainder of this section, we therefore only focus on minimizing the error between all pairs of vertices $u_1, u_2 \in \overline{V_m}$, because by the definition of the marking problem (Definition \ref{marking}), the conditions of Lemma \ref{depressed} are automatically satisfied. 
\subsection{Formulating The Error}
\label{formul}
        \begin{figure*}[h]
    \centering
        \subfloat[Both edges on the left are marked.]{{\includegraphics[width=150pt]{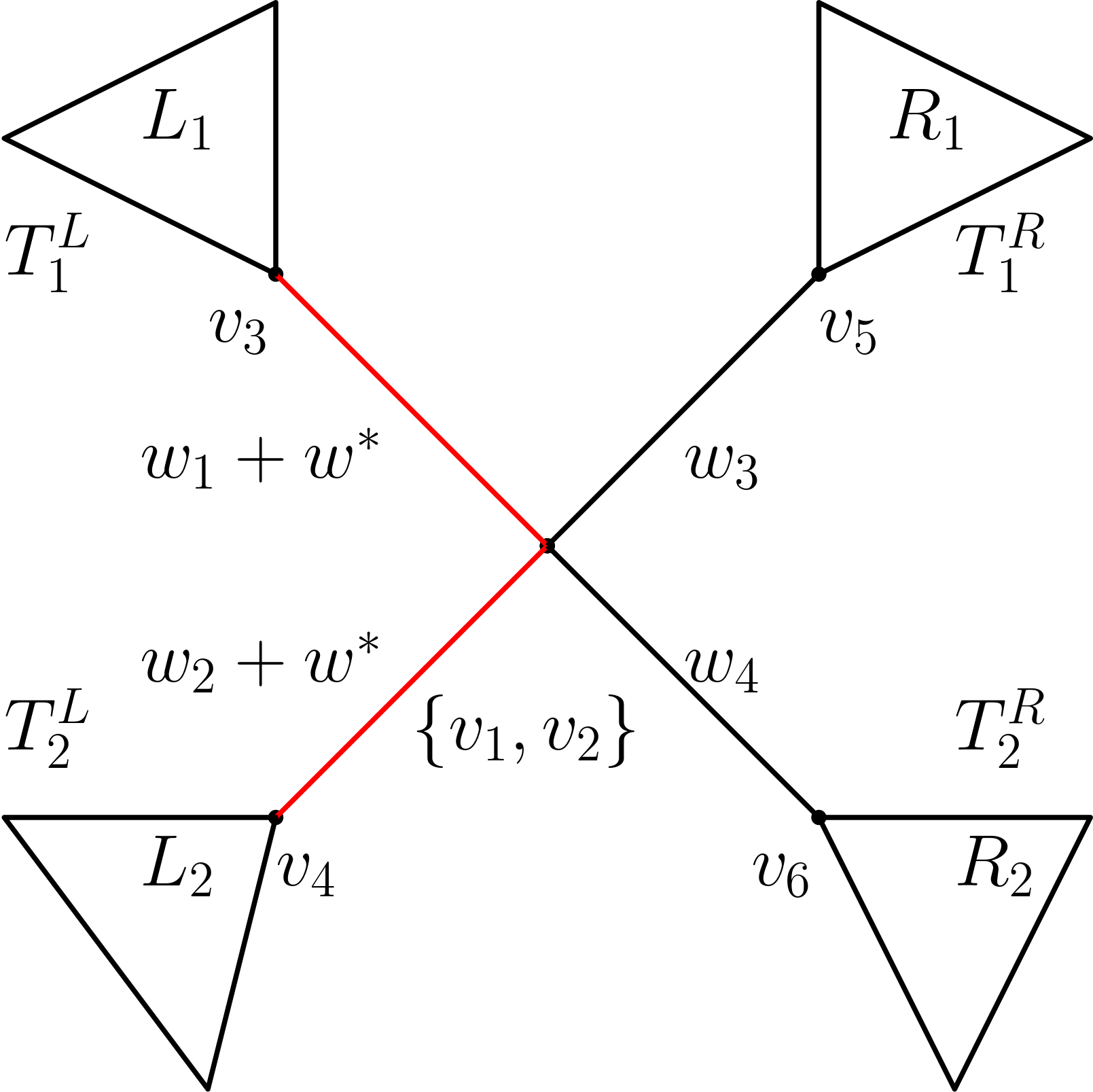}}}
        \hspace{2 cm}
        \subfloat[Both edges on the right are marked. ]{{\includegraphics[width=150pt]{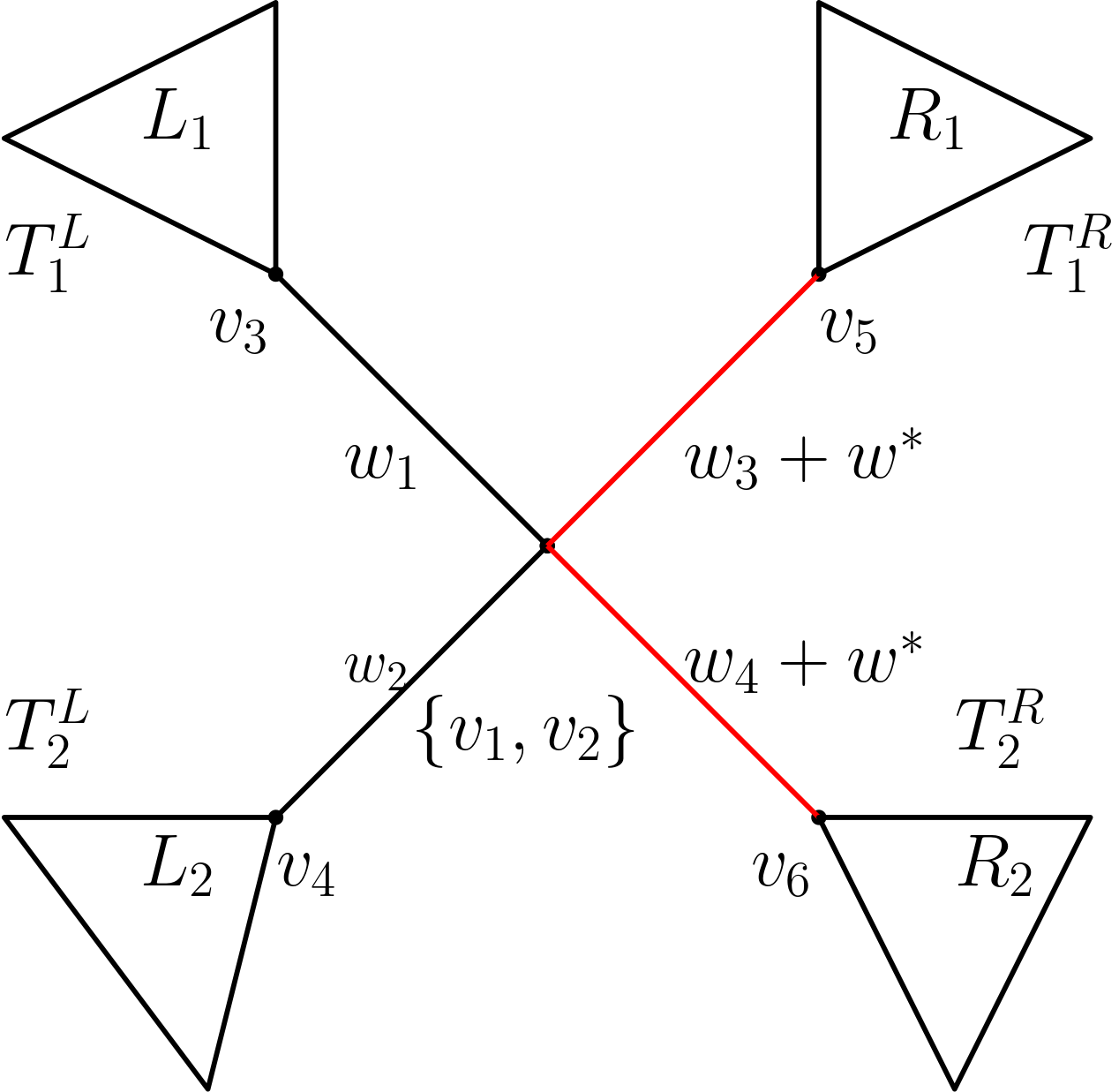}}}\

         \hspace{2 cm}\subfloat[A subset of edges from both sides is marked.]{{\includegraphics[width=150pt]{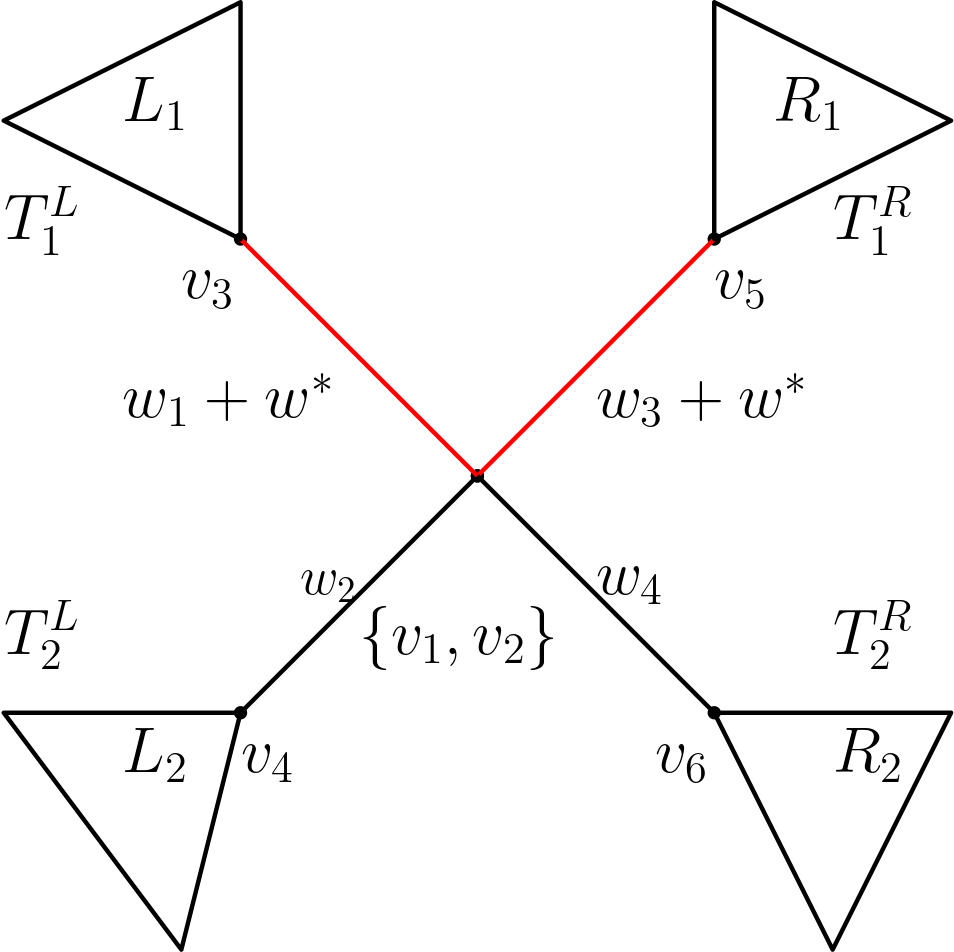}}}
                \qquad \hspace{2 cm}
    \qquad
    \caption{The figure used in Section \ref{formul} for formulating the marking error, the marked edges are highlighted in red. We denote by $\treeleft, i \in \{1,2 \}$, and $\treeright, j \in \{1,2\}$ the subtree rooted at the $i$-th edge to the left and the $j$-th edge to the right respectively. Moreover, $L_i=|\{v|v\in \treeleft\}| $ and  $R_j=|\{v|v\in \treeright\}| $ denote the number of vertices in each subtree.  }
    \label{fig17}
\end{figure*}
This section formally explains how marking a set of edges affects the error function. Using Figure \ref{fig17}, we first present some examples, which we generalize later in Observation \ref{obi}. Throughout this section, we may sometimes refer to this error as \textit{units of error}, where each unit is equal to $w^*$. 
 \begin{example}
 The error between $v_3$ and $v_4$ in Figure \ref{fig17}-(a) is equal to $|w_1 +w^* +w_2 +w^*- w_1 -w_2|=2w^{*}$. In the original graph (Figure \ref{fig16}-(a)), $e^{*}$ does not appear on the unique path between $v_3$ and $v_4$, while in the modified graph (Figure \ref{fig17}-(a)), the weight of $e^*$ appears twice. In the marking of Figure \ref{fig17}-(a), the total amount of error between all pairs of vertices $u_1 \in T_1^L, u_2 \in T_2^L$ is $L_1 \times L_2 \times 2w^*$. 
 \label{obs1}
 \end{example}

 \begin{example}
The error between $v_3$ and $v_5$ in Figure \ref{fig17}-(c) is $|w_1+w^*+w_3+w^*-w_1-w^*-w_3|=w^*$. Because in the original graph (Figure \ref{fig16}-(a)), $e^*$ appears only once on the unique path from $v_3$ to $v_5$, while in the modified graph (Figure \ref{fig17}-(c)), the weight of $e^*$ appears twice. The total amount of error between all pairs of vertices $u_1 \in T_1^L, u_2 \in T_1^R$ is $L_1 \times R_1 \times w^*$.
\label{obs3}
 \end{example}
 \begin{example}
In Figure \ref{fig17}-(c), the error between $v_5$ and $v_6$ is $|w_3+w^*+w_4-w_3-w_4|=w^*$. The total amount of error between all pairs of vertices $u_1 \in T_1^R, u_2 \in T_2^R$ is $R_1 \times R_2 \times w^*$. 
 \label{obs4}
 \end{example}
\begin{example}
In Figure \ref{fig17}-(c), the total amount of error between all pairs of vertices $u_1 \in T_1^L, u_2 \in T_2^L$ is $L_1 \times L_2 \times w^*$.
 \label{obs5}
\end{example}
\begin{example}
     In Figure \ref{fig17}-(a), the error between $v_3$ and $v_5$ is equal to $|w_1+w^*+w_3-w_1-w^*- w_3|=0$. The length of the unique path between $v_3$ and $v_5$ does not change compared with Figure \ref{fig16}-(a).
      \label{obs6}
\end{example}

\begin{observation}
 Between the vertices of two edges (vertices belonging to the subtree rooted at that edge) adjacent to the endpoints of $e^*$, there might exist some error. We classify this observation into the following cases:
 \label{obi}
\end{observation}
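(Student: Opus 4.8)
The plan is to reduce the observation to a short enumeration: one case for each combination of (a) whether the two subtrees hang off the same endpoint of $e^*$ or off opposite endpoints, and (b) how many of the two edges joining those subtrees to $e^*$ are marked. Pairs of vertices in which one endpoint is $v_1$ or $v_2$ are already handled by Lemma \ref{depressed}, so I would fix two distinct components $T_a, T_b$ of the forest $T-\{v_1,v_2\}$, let $e_a,e_b$ be the edges attaching them to an endpoint of $e^*$, pick $u_1 \in T_a$ and $u_2 \in T_b$, and track the unique $u_1$--$u_2$ path under contraction of $e^*$ and the chosen marking.

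First I would record the structural fact that makes the enumeration finite and the per-pair error well defined: the $u_1$--$u_2$ path in $T$ splits into the portion interior to $T_a$, the edge $e_a$, possibly $e^*$ itself, the edge $e_b$, and the portion interior to $T_b$. Marking alters only neighbouring edges of $e^*$, and of those only $e_a$ and $e_b$ can lie on this path, while contraction merely deletes $e^*$; hence the interior portions are untouched and the difference between the $T$-length and the $T'$-length of the path does not depend on which $u_1,u_2$ we chose. It therefore suffices to compute that difference once per case; summing over all pairs then multiplies it by $L_aL_b$, $L_aR_b$, or $R_aR_b$ as appropriate.

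Then I would run the cases. If $T_a,T_b$ lie on the same side, $e^*$ is absent from the path, so its $T'$-length exceeds its $T$-length by exactly $(\text{number of marked edges among } e_a, e_b)\cdot w^*$, giving per-pair error $0$, $w^*$, or $2w^*$ (cf.\ Examples \ref{obs5}, \ref{obs4}, and \ref{obs1}) and hence totals $0$, $L_aL_b\,w^*$, $2L_aL_b\,w^*$, with the analogous statement on the right using the $R$'s. If $T_a,T_b$ lie on opposite sides, $e^*$ contributes $w^*$ to the $T$-length but nothing to the $T'$-length, whereas the marked edges among $e_a,e_b$ contribute $(\text{number marked})\cdot w^*$ to the $T'$-length; a one-line sign check (or an appeal to Lemma \ref{l12}) shows the net change has magnitude $w^*$ when $0$ or $2$ of them are marked and magnitude $0$ when exactly one is, matching Examples \ref{obs6} and \ref{obs3}, with totals $L_aR_b\,w^*$, $0$, $L_aR_b\,w^*$.

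The main obstacle is not depth but the uniformity bookkeeping of the second paragraph: I must argue carefully that no marked edge other than $e_a$ or $e_b$ can sit on the $u_1$--$u_2$ path --- otherwise the per-pair error would vary with $u_1,u_2$ and the clean product formulas would fail --- and that the shortest path coincides with the unique tree path both before and after contraction, which holds since $T$ has nonnegative weights and $T'$ is again a tree. With these points pinned down, each of the six subcases collapses to a single arithmetic line and the classification in Observation \ref{obi} follows.
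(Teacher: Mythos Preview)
Your proposal is correct and follows essentially the same case-enumeration approach as the paper, which justifies Observation~\ref{obi} via the worked Examples~\ref{obs1}--\ref{obs6} rather than a standalone proof. Your treatment is slightly more systematic: you make explicit the uniformity argument (that only $e_a$ and $e_b$ among the marked edges can lie on the $u_1$--$u_2$ path, so the per-pair error is constant over $T_a\times T_b$), whereas the paper leaves this implicit in the examples; but the substance is the same.
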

\begin{enumerate}
    \item Let $T_i^L$ and $T_j^L$ be the subtrees adjacent to two distinct marked edges on the left. Then, the total amount of error between all pairs of vertices $u_1 \in T_i^L, u_2 \in T_j^L$ is $L_i \times L_j \times 2w^*$ (see Example \ref{obs1}).
   \item Let $T_i^R$ and $T_j^R$ be the subtrees adjacent to two marked edges on the right. Then, the total amount of error between all pairs of vertices $u_1 \in T_i^R, u_2 \in T_j^R$ is $R_i \times R_j \times 2w^*$.
    \item Let $T_i^L$ and $T_j^R$ be the subtrees adjacent to two marked edges on the left and right respectively. Then, the total amount of error between all pairs of vertices $u_1 \in T_i^L, u_2 \in T_j^R$ is $L_i \times R_j \times w^*$ (see Example \ref{obs3}).
    \item Let $T_i^R$ and $T_j ^R$ be the subtrees adjacent to a marked edge and an unmarked edge on the right respectively. Then, the total amount of error between all pairs of vertices $u_1 \in T_i^R, u_2 \in T_j^R$ is $R_i \times R_j \times w^*$ (see Example \ref{obs4}).
    \item Let $T_i^L$ and $T_j ^L$ be the subtrees adjacent to a marked edge and an unmarked edge on the left respectively. Then, the total amount of error between all pairs of vertices $u_1 \in T_i^L, u_2 \in T_j^L$ is $L_i \times L_j \times w^*$ (see Example \ref{obs5}).
    \item Let $T_i^L$ be the subtree adjacent to a marked edge on the left, and $T_j^R$ be the subtree adjacent to an unmarked edge on the right. Then, the total amount of error between all pairs of vertices $u_1 \in T_i^L, u_2 \in T_j^R$ is equal to zero (see Example \ref{obs6}).
    \item Let $T_i^L$ be the subtree adjacent to an unmarked edge on the left, and $T_j^R$ be the subtree adjacent to a marked edge on the right. Then, the total amount of error between all pairs of vertices $u_1 \in T_i^L, u_2 \in T_j^R$ is equal to zero.
\end{enumerate} 
\subsection{Equal-Sized Subtrees}
\label{equal}
 We now investigate a special case where each subtree on the left has $\nl$ vertices and each subtree on the right has $\nr$ vertices, i.e. $L_i = n_L, \; 1\leq i \leq \leffs$, and $R_i=\nr, \; 1\leq i \leq \riis$. Recall that every merged edge has two sides, left and right, one of which is designated as the \textit{preferable side}. A given side is preferable if it produces a smaller amount of error when fully marked compared to its fully-marked counterpart. For example, if the left side is preferable, we have: 
\begin{linenomath*}\begin{equation}
\label{prefer}
\nll \times \leffs (\leffs -1)  \leq \nrr \times \riis (\riis -1)
\end{equation}\end{linenomath*}
The above inequality compares the error between the marking with the left side fully marked and the right side fully unmarked (Figure \ref{fig17}-(a)), and the opposite marking with the right side fully marked and the left side fully unmarked (Figure \ref{fig17}-(b)). In the first marking, there exists no error  between the left and the right sides (Observation \ref{obs}, Case 6), but there are $\leffs \choose 2$ distinct pairs of marked edges on the left, each inducing an error of $\nl \times \nl \times 2w^*$ (Observation \ref{obs}, Case 1). Therefore, the total amount of error for the first marking is equal to ${\leffs \choose {2}} \times n_L \times n_L \times 2w^*= \nll \times \leffs(\leffs -1 ) \times w^*$. The other marking can be analyzed analogously. Note that in the remainder of this section, we drop $w^*$ from each error term, and each error term counts the error units, where each unit is equal to $w^*$. Therefore, all quantities are implicitly multiplied by $w^*$ in the remainder of this section.

The following lemma states that, for a contracted edge $e^*$ that has equal-sized subtrees on each side, the optimal solution is obtained by marking all edges on the preferable side of $e^*$ and leaving the other side completely unmarked.
\begin{lemma}
\label{aymadareto}
Given a merged edge $\merged$ (in a weighted tree) with two sides left and right, such that the subtrees on each side have equal sizes, the optimal marking is obtained if one side (the preferable side) is fully marked and the other side is fully unmarked.
\end{lemma}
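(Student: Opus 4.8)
The plan is to reduce the problem to minimizing an explicit bilinear function of two integer variables over a box, and then invoke the fact that such a function attains its minimum at a corner. Fix $\merged=(v_1,v_2)$ with weight $w^*$. By Definition~\ref{marking} a marking is a choice of a subset of the $\leffs+\riis$ edges neighbouring $\merged$, and by Lemma~\ref{depressed} the error involving $v_1$ and $v_2$ is already optimal for every such marking, so it suffices to minimize the error among pairs of vertices in $\overline{V_m}$. Because the left subtrees all have $\nl$ vertices and the right subtrees all have $\nr$ vertices, the total error depends on a marking only through the pair $(a,b)$, where $a\in\{0,\dots,\leffs\}$ and $b\in\{0,\dots,\riis\}$ are the numbers of marked edges on the left and on the right respectively. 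The goal becomes: minimize this error over all $(a,b)$, and show the minimum is attained at $(\leffs,0)$ or $(0,\riis)$, whichever is smaller.

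First I would write the error in closed form. For $u,v\in\overline{V_m}$ lying in two distinct subtrees, the contribution (in units of $w^*$) is $|p-q|$, where $p\in\{0,1\}$ counts the occurrences of $\merged$ on the $u$--$v$ path in $T$ (so $p=1$ exactly when $u$ and $v$ lie on opposite sides of $\merged$) and $q$ counts how many of the two subtree-edges on that path are marked; vertices in a common subtree contribute nothing. Adding up the cases of Observation~\ref{obi}, together with the remaining case of a vertex in an unmarked left subtree paired with a vertex in an unmarked right subtree (which contributes one unit, since the lone occurrence of $\merged$ on their path disappears after the contraction), gives
\[
\mathcal{E}(a,b)=\nll\,a(\leffs-1)+\nrr\,b(\riis-1)+\nl\nr\bigl(ab+(\leffs-a)(\riis-b)\bigr),
\]
where the three groups of terms count the left--left, right--right, and left--right vertex pairs, and $\mathcal{E}$ is understood to be multiplied by $w^*$ as elsewhere in this section. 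Expanding the product, the only degree-two monomial is $2\,\nl\nr\,ab$; hence for each fixed $b$ the map $a\mapsto\mathcal{E}(a,b)$ is affine, and symmetrically for each fixed $a$.

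It follows that $\mathcal{E}$ attains its minimum over the box at one of its four corners (fix $b$, minimize the affine function of $a$ at $a\in\{0,\leffs\}$, then minimize the resulting affine function of $b$ at $b\in\{0,\riis\}$; integrality causes no trouble since an affine function is monotone on consecutive integers). Evaluating, $\mathcal{E}(0,0)=\nl\nr\,\leffs\riis$, $\mathcal{E}(\leffs,0)=\nll\,\leffs(\leffs-1)$, $\mathcal{E}(0,\riis)=\nrr\,\riis(\riis-1)$, and $\mathcal{E}(\leffs,\riis)=\mathcal{E}(\leffs,0)+\mathcal{E}(0,\riis)+\nl\nr\,\leffs\riis\ge\max\{\mathcal{E}(\leffs,0),\mathcal{E}(0,\riis)\}$, so $(\leffs,\riis)$ is never better. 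Assuming without loss of generality that the left side is preferable, i.e.\ $\nll\,\leffs(\leffs-1)\le\nrr\,\riis(\riis-1)$ by Eq.~(\ref{prefer}), I would then note $\bigl(\nl(\leffs-1)\bigr)^2\le\nll\,\leffs(\leffs-1)\le\nrr\,\riis(\riis-1)\le(\nr\riis)^2$, whence $\nl(\leffs-1)\le\nr\riis$ and therefore $\mathcal{E}(0,0)=\nl\nr\,\leffs\riis\ge\nll\,\leffs(\leffs-1)=\mathcal{E}(\leffs,0)$ (the cases $\leffs\le 1$ or $\riis\le 1$ being immediate, as then one of $\mathcal{E}(\leffs,0),\mathcal{E}(0,\riis)$ is zero). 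So the minimum of $\mathcal{E}$ over all markings equals $\min\{\mathcal{E}(\leffs,0),\mathcal{E}(0,\riis)\}$, realised by fully marking the preferable side and leaving the other side untouched.

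The step I expect to be the main obstacle is getting the closed form $\mathcal{E}(a,b)$ exactly right — keeping straight the factor-of-$2$ versus factor-of-$1$ multiplicities across all of Observation~\ref{obi}'s cases and not overlooking the unmarked-left/unmarked-right pairs. Once the formula is written down, bilinearity reduces everything to four corner evaluations, and the only inequality with any content is the square-root step ruling out $(0,0)$.
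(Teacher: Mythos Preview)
Your argument is correct and takes a genuinely different route from the paper. The paper proves the lemma by defining two local operations, $\mle$ and $\unmr$, computing the change in error $\Delta(\cdot)$ for each, and then arguing through several case distinctions that from any state $(i,j)$ one can reach $(\leffs,0)$ by a sequence of these operations without increasing the error; the final case requires a somewhat delicate bookkeeping argument where the error is first allowed to increase and the increase is later cancelled. You instead collapse the problem into a single explicit formula $\mathcal{E}(a,b)$, observe that it is affine in each variable separately (the only mixed term being $2\nl\nr\,ab$), and conclude by a two-step ``fix one coordinate, optimise the other at an endpoint'' argument that the minimum occurs at one of the four corners. Your approach is shorter and structurally cleaner: once the closed form is verified against Observation~\ref{obi} (including the unmarked--unmarked cross pairs, which you correctly include), bilinearity does all the work, and the only nontrivial inequality left is the square-root step ruling out $(0,0)$. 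The paper's operational argument, on the other hand, foreshadows the $\Delta(\mle),\Delta(\unmr),\Delta(\mr),\Delta(\umle)$ machinery that is reused verbatim in the varying-size case (Lemma~\ref{lemma11} and Lemma~\ref{lemma13}), so it buys some economy later at the cost of a longer proof here.
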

\begin{proof}
By contradiction. This lemma assumes each subtree on the left and right side has $\nl$ and $\nr$ vertices respectively, i.e. $L_i = n_L, \; 1\leq i \leq \leffs$, and $R_i=\nr, \; 1\leq i \leq \riis$. Without loss of generality, we assume the left side is preferable throughout this proof. Therefore, we have: 
\begin{linenomath*}$$\nll \times \leffs (\leffs -1)  \leq \nrr \times \riis (\riis -1)$$\end{linenomath*}
Let $i$ and $j$ denote the number of marked edges on the left and right respectively. We define two functions, $\text{MARK\_LEFT}$, which marks one of the edges on the left, and $\text{UNMARK\_RIGHT}$, which unmarks one edge on the right. We will show that for all values $i< \leffs$ or $j>0$, one can achieve smaller error values by applying a series of $\text{MARK\_LEFT}$'s and $\text{UNMARK\_RIGHT}$'s and ending up at $i=\leffs$ and $j=0$, as desired. For a function $f \in \mathcal{F}= \{\text{MARK\_LEFT},\text{UNMARK\_RIGHT}\}$, we define $\Delta(f)$ as the amount of change in the error value after applying $f$ to the tree. Since we are interested in decreasing the error value using the functions in $\mathcal{F}$, in this proof we will look for conditions under which $\Delta(\text{MARK\_LEFT}) \leq 0$ and $\Delta(\text{UNMARK\_RIGHT}) \leq 0$.

We begin by investigating $\mle$. Note that this function sets $i\xleftarrow{} i+1$ and $j \xleftarrow{} j$. We observe the following:
\begin{enumerate}
    \item Because we are marking a new edge, the total amount of error between the marked edges on the left changes by: 
    \begin{linenomath*}$$\nll\times 2 \bigg({i+1 \choose 2} -{i \choose 2} \bigg)= \nll \times 2i$$\end{linenomath*}
    \item The total amount of error between the unmarked edges and the marked ones on the left changes by:
    \begin{linenomath*}$$\nll\times ((i+1)(\leffs-i-1)-i(\leffs-i))=\nll\times (i\leffs-i^2-i+\leffs-i-1-i\leffs+i^2)=\nll\times(\leffs -2i-1)$$\end{linenomath*}
    \item The total amount of error between the marked edges on the left and right changes by:
    \begin{linenomath*}$$\nl\nr((i+1)j-ij)=\nl\nr \times j$$\end{linenomath*}
    \item The total amount of error between the unmarked edges on the left and right changes by: 
    \begin{linenomath*}$$\nl \nr \times ((\leffs-i-1)(\riis -j)-(\leffs-i)(\riis-j))=\nl \nr \times (\leffs \riis -\leffs j-i \riis +ij-\riis+j-\leffs \riis +\leffs j+i \riis -ij)=\nl \nr \times (j-\riis)$$\end{linenomath*}

\end{enumerate}
Therefore, $\Delta (\mle)$ is equal to: 
\begin{linenomath*}$$\Delta (\mle)=\nll\times (2i + \leffs -2i -1)+ \nl \nr (j+j-\riis)= \nll\times (\leffs  -1)+ \nl \nr (2j-\riis)$$\end{linenomath*}
Since we are looking for conditions under which $\Delta (\mle) \leq 0$, we have: 
\begin{linenomath*}$$
\begin{array}{cc}
    \Delta (\mle)\leq 0 \xrightarrow[]{} \nll\times (\leffs  -1)+ \nl \nr (2j-\riis) &\leq 0\\
    
\end{array}
$$\end{linenomath*}
Therefore, 
\begin{linenomath*}\begin{equation}
    \Delta (\mle)\leq 0 \text{ if } \nl\times (\leffs  -1) \leq \nr (\riis-2j) \xrightarrow[]{\text{Rearranging the terms} } j  \leq \frac{ \riis }{2}+ \frac{\nl(1-\leffs)}{2 \nr} 
    \label{mark}
\end{equation}\end{linenomath*}
Similar reasoning can be used for $\unmr$. This function sets $i\xleftarrow{} i$ and $j \xleftarrow{} j-1$. We have:
\begin{enumerate}
    \item The total amount of error between the marked edges on the right changes by: 
    \begin{linenomath*}$$\nrr\times 2 \big({j-1 \choose 2} -{j \choose 2} \big)= \nrr \times (-2(j-1))$$\end{linenomath*}
    \item The total amount of error between the unmarked edges and the marked ones on the right changes by:
    \begin{linenomath*}$$\nrr\times ((j-1)(\riis -j +1)-j(\riis-j))=\nrr\times (j\riis-j^2+j-\riis+j-1-j\riis +j^2)=\nrr\times(2j-\riis-1)$$\end{linenomath*}
    \item The total amount of error between the marked edges on the left and right changes by:
    \begin{linenomath*}$$\nl\nr(i(j-1)-ij)=\nl\nr \times (-i)$$\end{linenomath*}
    \item The total amount of error between the unmarked edges on the left and right changes by: 
    \begin{linenomath*}$$\nl \nr \times ((\leffs-i)(\riis -j+1)-(\leffs-i)(\riis-j))=\nl \nr \times (\leffs \riis -\leffs j+\leffs- i\riis +ij-i-\leffs \riis +\leffs j+i \riis -ij)=\nl \nr \times (\leffs -i )$$\end{linenomath*}
\end{enumerate}
Thus, we have: 
\begin{linenomath*}$$\Delta(\unmr)=\nrr(-2(j-1)+2j-\riis-1)+\nl \nr(\leffs -i -i )=\nrr(1-\riis)+\nl \nr(\leffs -2i )$$\end{linenomath*}
and
\begin{linenomath*}\begin{equation}
    \Delta (\unmr)\leq 0 \text{ if }  \nl (\leffs -2i )\leq \nr(\riis-1) \xrightarrow[]{\text{Rearranging the terms} } i  \geq \frac{ \leffs}{2}+ \frac{\nr(1-\riis)}{2 \nl} 
    \label{unmark}
\end{equation}\end{linenomath*}
We conclude the proof by stating that whenever $i< \leffs$ or $j>0$, one can achieve smaller error values by applying a series of $\text{MARK\_LEFT}$'s and $\text{UNMARK\_RIGHT}$'s and ending up at $i=\leffs$ and $j=0$. When $j  \leq \frac{ \riis }{2}+ \frac{\nl(1-\leffs)}{2 \nr} $, Eq. (\ref{mark}) is satisfied.  Therefore, we repeatedly apply $\mle$ until $i=\leffs$, at which point Eq. (\ref{unmark}) is satisfied and we repeatedly apply $\unmr$ until $j=0$, as desired.
Now suppose $j  > \frac{ \riis }{2}+ \frac{\nl(1-\leffs)}{2 \nr}$  edges are marked on the right side. If $i  \geq \frac{ \leffs}{2}+ \frac{\nr(1-\riis)}{2 \nl}$, Eq. (\ref{unmark}) is satisfied, which allows us to repeatedly apply $\unmr$ until $j=0$, at which point Eq. (\ref{mark}) is satisfied and we repeatedly apply $\mle$ until $i=\leffs$, as desired.

Assume $i  < \frac{ \leffs}{2}+ \frac{\nr(1-\riis)}{2 \nl}$ and $j  > \frac{ \riis }{2}+ \frac{\nl(1-\leffs)}{2 \nr} $, and both Eq. (\ref{mark}) and Eq. (\ref{unmark}) are unsatisfied. We first apply $\frac{ \leffs}{2}+ \frac{\nr(1-\riis)}{2 \nl}  -i= \frac{\leffs \nl +\nr(1-\riis)}{2 \nl}  -i$ $\mle$'s, increasing the error by $(\frac{\leffs \nl +\nr(1-\riis)}{2 \nl}  -i) (\nll\times (\leffs  -1)+ \nl \nr (2j-\riis))$, at which point $i=\frac{\leffs \nl +\nr(1-\riis)}{2 \nl} 
$ and $\Delta(\unmr)=0$. Therefore, we set $j\xleftarrow[]{}0$  without changing the error (since $\Delta(\unmr)=0$), and then we apply $\leffs - (\frac{\leffs \nl +\nr(1-\riis)}{2 \nl})=\frac{\leffs \nl -\nr(1-\riis)}{2 \nl}$ $\mle$'s until $i=\leffs$, as desired. We now show that this sequence of $\mle$'s and $\unmr$'s results in an error value no worse than that of the original one:

\begin{linenomath*}\begin{align*}
    \Delta =& (\underbrace{\frac{\leffs \nl +\nr(1-\riis)}{2 \nl}  -i}_{\leq \frac{\leffs \nl +\nr(1-\riis)}{2 \nl} }) (\underbrace{\nll\times (\leffs  -1)+ \nl \nr (2j-\riis)}_{\leq \nll\times (\leffs  -1)+ \nl \nr (\riis )})+ j \underbrace{(\nrr(1-\riis)+\nl \nr(\leffs -2 \times \frac{\leffs \nl +\nr(1-\riis)}{2 \nl} )}_{=0})  \\
   + &\frac{\leffs \nl -\nr(1-\riis)}{2 \nl} (\nll\times (\leffs  -1)+ \nl \nr (-\riis))   \\ \leq& (\frac{\leffs \nl +\nr(1-\riis)}{2 \nl})( \nll\times (\leffs  -1)+ \nl \nr (\riis )) + \frac{\leffs \nl -\nr(1-\riis)}{2 \nl} (\nll\times (\leffs  -1)+ \nl \nr (-\riis))  \\
    =&\nll \times \leffs (\leffs -1) - \nrr \times \riis (\riis -1)\\
    \leq &0  
\end{align*}\end{linenomath*}
and we arrive at $i=\leffs$ and $ j=0$ while obtaining a smaller error value.
\end{proof}
In the next section, we generalize Lemma \ref{aymadareto} to the case in which different subtrees can have varying sizes. 
\subsection{Varying-Size Subtrees}
\label{varying}

      \begin{figure*}[h]
    \centering
        \subfloat[A full marking of the edges on the left with error count 32.]{{\includegraphics[width=140pt]{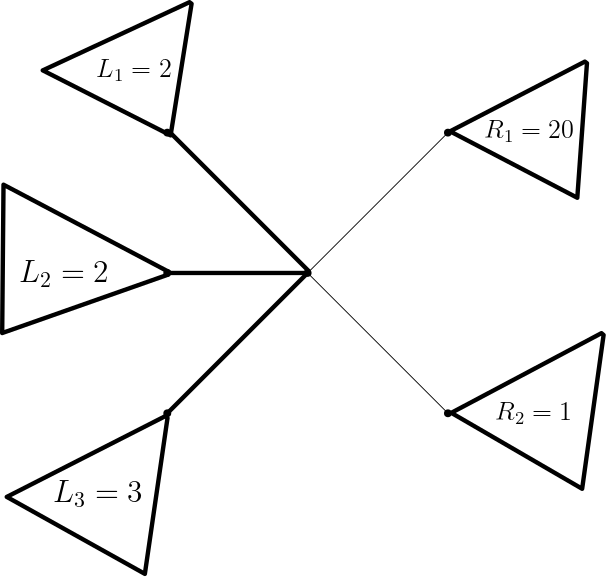}}}
        \hspace{2 cm}
        \subfloat[A full marking of the edges on the right with error count 40.]{{\includegraphics[width=140pt]{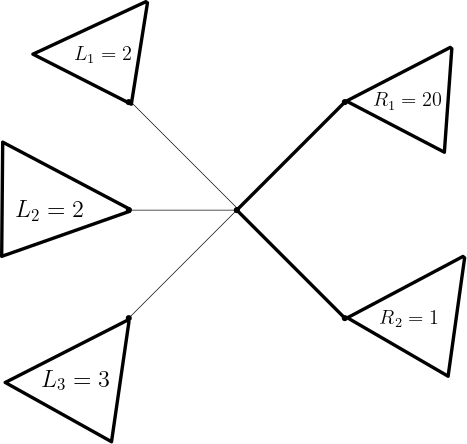}}}\
        \qquad \hspace{2 cm}
        \subfloat[A marking with edges marked on both sides with error count 75.]{{\includegraphics[width=140pt]{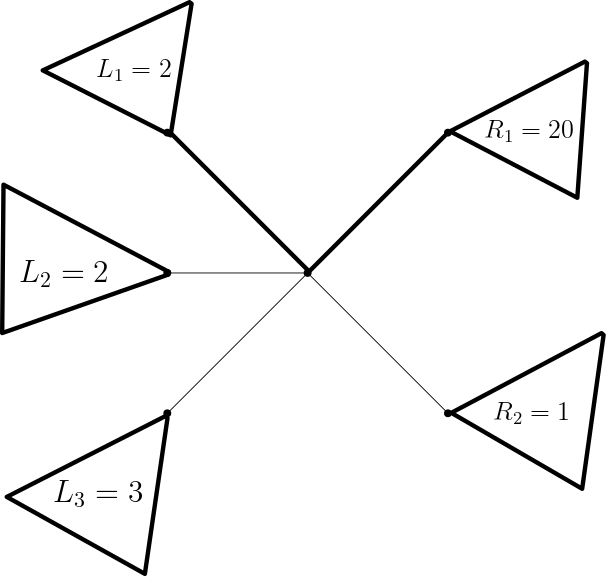}}}
                \qquad \hspace{2 cm}
        \subfloat[The optimal solution with error count 27.]{{\includegraphics[width=140pt]{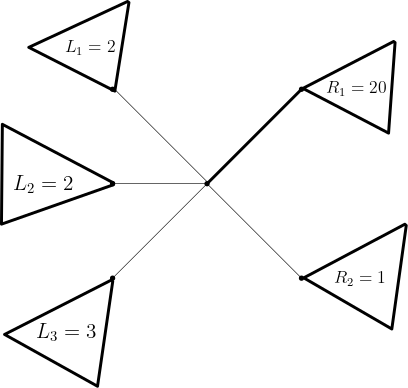}}}
    \qquad
    \caption{An example of tree compression in which the edges on each side have different-sized subtrees. The optimal solution does not have a full marking on any side. However, the optimal solution only has marked edges on one side (this is always the case as shown in Lemma \ref{lemma11}). }
    \label{fig6}
\end{figure*}
As a generalization of Section \ref{equal}, now assume the $i$-th subtree on the left ($1 \leq i \leq \leffs$) has $\leff_i$ nodes, and the $j$-th subtree on the right ($1 \leq j \leq \riis $) is of size $\rii_j$. We observe when each side has subtrees of different sizes, marking all edges on one side does not necessarily produce the optimal error. An example is depicted in Figure \ref{fig6}, where marking only one edge on the right produces the optimal amount of error. 
Although marking all edges on one side does not necessarily produce the optimal error, we observe that no optimal solution has markings on both sides, as the following lemma states. Similar to Section \ref{equal}, we remove $w^*$ from all calculations and expressions in this section. Therefore, all calculations in this section are implicitly multiplied by $w^*$.
\begin{lemma}
\label{lemma11}
Given a merged edge $\merged$ (in a weighted tree) with two sides left and right, no optimal marking has marked edges on both sides.

\end{lemma}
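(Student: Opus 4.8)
The plan is to argue by contradiction: suppose $M$ is an optimal marking that marks at least one neighbouring edge on the left of $e^{*}$ and at least one on the right, and compare $M$ with the two markings obtained from it by erasing all marks on a single side, showing that $M$ cannot be weakly better than both.

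First I would put the error of an arbitrary marking into closed form. Let $P=\sum_{i=1}^{\leffs}\leff_i$ and $Q=\sum_{j=1}^{\riis}\rii_j$ be the total number of vertices lying in the left subtrees and in the right subtrees respectively, and for a fixed marking let $a$ and $c$ be the combined sizes of the marked left and right subtrees and $A_2=\sum \leff_i^{2}$, $C_2=\sum \rii_j^{2}$ the corresponding sums of squares (each sum running over the marked subtrees only). By Lemma~\ref{depressed} it suffices to account for pairs of vertices inside $\overline{V_m}$. Summing the per-pair contributions over all pairs of subtrees — Cases~1 and~2 of Observation~\ref{obi} give $(a^{2}-A_2)+(c^{2}-C_2)$, Cases~4 and~5 give $c(Q-c)+a(P-a)$, Case~3 gives $ac$, Cases~6 and~7 give $0$, and a pair with one vertex in an unmarked left subtree and one in an unmarked right subtree loses $e^{*}$ from its path and so contributes $(P-a)(Q-c)$ — yields, in units of $w^{*}$,
\begin{linenomath*}$$\error(M)=PQ+(a-c)(P-Q)+2ac-A_2-C_2 .$$\end{linenomath*}

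Next I would introduce $M'$, obtained from $M$ by unmarking every marked right edge (so $c$ and $C_2$ become $0$ while $a$ and $A_2$ are unchanged), and $M''$, obtained by unmarking every marked left edge; both are valid markings, each one-sided. The formula gives $\error(M)-\error(M')=c(2a-P+Q)-C_2$ and $\error(M)-\error(M'')=a(2c+P-Q)-A_2$. Optimality of $M$ forces both differences to be $\le 0$, i.e. $c(2a-P+Q)\le C_2$ and $a(2c+P-Q)\le A_2$. Since each marked subtree contains at least its root, $a\ge 1$ and $c\ge 1$, and since a sum of squares of nonnegative reals never exceeds the square of their sum, $C_2\le c^{2}$ and $A_2\le a^{2}$. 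Hence $c(2a-P+Q)\le c^{2}$ and $a(2c+P-Q)\le a^{2}$; dividing by $c$ and by $a$ respectively (the case of a nonpositive left-hand side being immediate since $a,c>0$) gives $2a-P+Q\le c$ and $2c+P-Q\le a$. Adding these two inequalities yields $2(a+c)\le a+c$, that is $a+c\le 0$, contradicting $a,c\ge 1$. Hence no optimal marking can have marked edges on both sides.

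The delicate part I anticipate is not the final contradiction but arranging for it. One must get the closed form exactly right — in particular the quadratic corrections $A_2,C_2$ from Cases~1--2 of Observation~\ref{obi} and the cross term $(P-a)(Q-c)$ between oppositely placed unmarked subtrees, all of which are easy to miss but are precisely what make the estimate work. The other subtlety is that collapsing to one side need not, on its own, lower the error, so both $M'$ and $M''$ are genuinely needed; the only nontrivial inequalities invoked, $C_2\le c^{2}$ and $A_2\le a^{2}$, are exactly what turn the two optimality conditions into linear inequalities whose sum is absurd. (This is consistent with the stronger conclusion of Lemma~\ref{aymadareto} in the equal-size case, where one side should in fact be \emph{fully} marked.)
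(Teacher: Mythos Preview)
Your argument is correct. The closed form $\error(M)=PQ+(a-c)(P-Q)+2ac-A_2-C_2$ checks out (the cross term $(P-a)(Q-c)$ from the unmarked--unmarked pair across $e^{*}$ is exactly the piece Observation~\ref{obi} omits, and you handled it), and the final step is clean: dividing $c(2a-P+Q)\le C_2\le c^{2}$ and $a(2c+P-Q)\le A_2\le a^{2}$ by $c,a>0$ and adding gives $a+c\le 0$, a contradiction.

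Your route differs genuinely from the paper's. The paper never writes a global closed form; instead it computes the marginal change $\Delta(\unmr)$ from unmarking a \emph{single} right edge and shows, under the WLOG assumption $\sumrightu\ge\sumleftu$, that this change is strictly negative --- one local improvement already contradicts optimality. Your proof is symmetric and global: you compare $M$ simultaneously with the two one-sided collapses $M'$ and $M''$, and the key algebraic move is the estimate $A_2\le a^{2}$, $C_2\le c^{2}$, which linearises the two optimality inequalities so that their sum is absurd. What the paper's approach buys is reusable machinery --- the same $\Delta(\mle)$, $\Delta(\unmr)$ formulas drive the characterisation of the optimal one-sided marking in Lemma~\ref{lemma13} --- while your approach buys a shorter, more self-contained contradiction and a closed-form error expression that could itself be reused.
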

\begin{proof}
    By contradiction. We assume there exists such an optimal marking, and we strictly improve its error by unmarking everything on one of the two sides (thus obtaining a contradiction). Expanding the proof of Lemma \ref{aymadareto}, we define four operations, $\mle$, $\umle$, $\mr$, and $\unmr$ for unmarking and marking edges on both ends. For a function \begin{linenomath*}$$f \in \mathcal{F}= \{\mle, \umle, \mr,  \unmr \}$$\end{linenomath*} we define $\Delta(f)$ as the amount of change in the error value after applying $f$ to the tree. Let $\sumleft =\sum_{i=1}^{\leffs}L_i$ and $\sumright =\sum_{i=1}^{\riis}R_i$ denote the total sum of all edge cardinalities on the left and right sides respectively. Furthermore, let $\sumleftm$, $\sumleftu$, $\sumrightm$, and $\sumrightu$ denote the sum of the cardinalities of the marked and unmarked edges on the left and right sides respectively. Note that $\sumleft= \sumleftu + \sumleftm$ and $\sumright=\sumrightu + \sumrightm$.

    First, we calculate $\Delta(\unmr)$ and derive $\Delta(\umle)$ by symmetry. Assume we are unmarking the $i$-th edge on the right $e_i$ with cardinality $R_i$. We break the change in the error value down into four parts as follows: 
    \begin{enumerate}
    \item The total amount of error between the marked edges on the right changes by: 
    \begin{linenomath*}$$-2 \times R_i \times (\sumrightm - R_i)$$\end{linenomath*}
    because between two marked edges on the right, there exist two units of error (equal to twice the weight of the merged edge $\merged$). Therefore, unmarking $e_i$ relieves some of this error.
    \item The total amount of error between the unmarked and the marked edges on the right changes by:
    \begin{linenomath*}$$-R_i \times (\sumrightu) + R_i \times (\sumrightm -R_i)$$\end{linenomath*}
    because between a marked and an unmarked edge on the right, there exists one unit of error and unmarking $e_i$ relieves some error with other unmarked edges (the first part of the expression), making $e_i$ an unmarked edge itself (the second part of the expression).
    \item The total amount of error between $e_i$ and the marked edges on the left changes by:
    \begin{linenomath*}$$-R_i \times \sumleftm$$\end{linenomath*}
    because between two marked edges on the right and the left, there exists one unit of error and unmarking $e_i$ relieves some of this error.
    \item The total amount of error between $e_i$ and the unmarked edges on the left changes by: 
    \begin{linenomath*}$$R_i \times \sumleftu$$\end{linenomath*}
\end{enumerate}
Summing all four parts together, we get: 
\begin{linenomath*}\begin{equation}
\Delta(\unmr)= R_i \times \bigg(- (\sumrightm - R_i) - \sumrightu -  \sumleftm + \sumleftu \bigg) 
\label{delta1}
\end{equation}\end{linenomath*}
By symmetry, we also have: 
\begin{linenomath*}\begin{equation}
\Delta(\umle)= L_i \times \bigg(- (\sumleftm - L_i) - \sumleftu -  \sumrightm + \sumrightu \bigg) 
\label{delta2}
\end{equation}\end{linenomath*}
Next, we calculate $\Delta(\mle)$ and derive $\Delta(\mr)$ by symmetry. Assume we are marking the $i$-th edge on the left $e_i$ with cardinality $L_i$. We break the change in the error value into four parts: 
            \begin{enumerate}
    \item The total amount of error between the marked edges on the left changes by: 
    \begin{linenomath*}$$2 \times L_i \times (\sumleftm)$$\end{linenomath*}
    because between two marked edges on the left, there exist two units of error (equal to twice the weight of the merged edge $\merged$). Therefore, marking $e_i$ introduces some error between $e_i$ and all other marked edges on the left. 
    \item The total amount of error between the unmarked and the marked edges on the left changes by:
    \begin{linenomath*}$$-L_i \times (\sumleftm) + L_i \times (\sumleftu -L_i)$$\end{linenomath*}
    because between a marked and an unmarked edge on the left, there exists one unit of error, and marking $e_i$ relieves some error with all other marked edges on the left (the first part of the expression), making $e_i$ a marked edge itself (the second part of the expression).
    \item The total amount of error between $e_i$ and the marked edges on the right changes by:
    \begin{linenomath*}$$+L_i \times \sumrightm$$\end{linenomath*}
    because between two marked edges on the right and the left, there exists one unit of error, thus marking $e_i$ introduces some error between $e_i$ and all other marked edges on the right.
    \item The total amount of error between $e_i$ and the unmarked edges on the right changes by: 
    \begin{linenomath*}$$-L_i \times \sumrightu$$\end{linenomath*}
    because between two unmarked edges on the left and the right, there exists one unit of error, and marking $e_i$ relieves some of this error. 
\end{enumerate}
Summing all four parts together, we get: 
\begin{linenomath*}\begin{equation}
\Delta(\mle)= L_i \times \bigg(\sumleftm + (\sumleftu -L_i)+ \sumrightm - \sumrightu\bigg) 
\label{delta3}
\end{equation}\end{linenomath*}
By symmetry, we also have: 
\begin{linenomath*}\begin{equation}
\Delta(\mr)=  R_i \times \bigg(\sumrightm + (\sumrightu -R_i)+ \sumleftm - \sumleftu\bigg) 
\label{delta4}
\end{equation}\end{linenomath*}
Now, we can complete the proof. For the sake of contradiction, assume that there exists an optimal marking $\optimalmarking$ 
with edges marked on both sides. Therefore, we have $\sumleftm 
>0$ and $\sumrightm >0 $. Without loss of generality, assume 
$\sumrightu \geq \sumleftu$ (see Figure \ref{fig7}-(a)). Using 
Eq. (\ref{delta1}), we can unmark any edge on the right, say the 
$i$-th one $e_i$ connected to $R_i$ vertices, such that the 
change in the error value is equal to: 
\begin{linenomath*}$$
\begin{array}{cc}
    \Delta &=R_i \times \bigg(\underbrace{- (\sumrightm - R_i)}_{<0} - \sumrightu 
    \underbrace{-  \sumleftm}_{<0} + \sumleftu \bigg)
   <R_i (\underbrace{-\sumrightu +\sumleftu}_{\leq 0} )< 0
\end{array}
$$\end{linenomath*}
and we obtain a strictly better 
marking by unmarking $e_i$; 
therefore, the original marking 
could not have been optimal. After 
unmarking $e_i$, we again have 
$\sumrightu > \sumleftu$ and we can 
keep unmarking all edges on the 
right until the right side is fully 
unmarked and we have a strictly 
better marking than 
$\optimalmarking$ (see Figure 
\ref{fig7}).
Note that the other case 
($\sumrightu < \sumleftu$) can be 
handled symmetrically by fully 
unmarking the left side and 
repeatedly applying Eq. 
(\ref{delta2}).
\end{proof}
      \begin{figure*}[h]
    \centering
        \subfloat[An arbitrary marking with edges marked on both sides, such that $\sumrightu \geq \sumleftu$, as used in the proof of Lemma \ref{lemma11}.]{{\includegraphics[width=220pt]{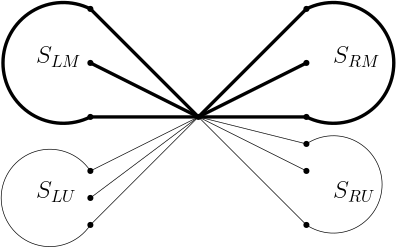}}}
        \subfloat[A strictly better marking than the one shown in (a), in which one side is fully unmarked. ]{{\includegraphics[width=220pt]{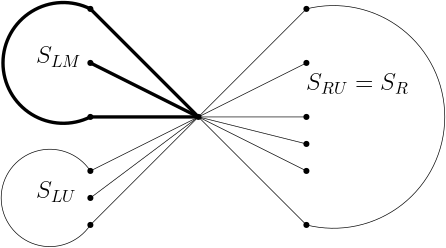}}}\

    \caption{The example tree used in the proof of Lemma \ref{lemma11}. (a) An arbitrary marking in which edges from both sides are marked, and more vertices are connected to the unmarked edges on the right ($\sumrightu \geq \sumleftu$). (b) A strictly better marking than (a) in which the heavier side is fully unmarked, as described in Lemma 
    \ref{lemma11}.}
    \label{fig7}
\end{figure*}
\subsubsection{Partial Markings}
\label{partial}
In Lemma \ref{lemma11}, we observed that no optimal marking has edges marked on both sides. In this section, we introduce the concept of partial markings, used to form optimal marking after merging a given edge $\merged$. A partial left (respectively right) marking, denoted by $\markingleft$ (respectively $\markingright$), is a marking with all edges on the right (respectively on the left) unmarked, and a subset of the edges on the left (respectively on the right) marked. We call a partial marking \textit{optimal} if its error count is less than any other partial marking for its respective side. Let $\markingleft^*$ and $\markingright^*$ denote the optimal partial left and right markings respectively, the following lemma is easy to prove.
\begin{lemma}
    After merging edge $\merged$ in a weighted tree with non-negative weights, the optimal marking $\optimalmarking$ is either $\markingleft^*$ or $\markingright^*$, depending on which one produces a smaller amount of error.
    \label{lemma12}
\end{lemma}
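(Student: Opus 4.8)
The plan is to obtain Lemma \ref{lemma12} essentially for free from Lemma \ref{lemma11} together with the definitions of $\markingleft^*$ and $\markingright^*$. First I would observe that $\merged$ has only finitely many neighbouring edges, so there are finitely many markings; hence an optimal marking $\optimalmarking$ exists, and the optimal partial left marking $\markingleft^*$ and optimal partial right marking $\markingright^*$ are well defined. Fix one optimal marking $\optimalmarking$ and write $\error(\cdot)$ for the error count of a marking.

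Next I would apply Lemma \ref{lemma11}: since $\optimalmarking$ is optimal, it has no marked edges on both sides of $\merged$. Therefore one of three situations occurs: (i) every marked edge of $\optimalmarking$ lies on the left, so $\optimalmarking$ is a partial left marking; (ii) every marked edge lies on the right, so $\optimalmarking$ is a partial right marking; or (iii) $\optimalmarking$ marks no edge at all, in which case it is simultaneously a degenerate partial left marking and partial right marking.

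In case (i), the definition of $\markingleft^*$ gives $\error(\markingleft^*) \leq \error(\optimalmarking)$ because $\optimalmarking$ is one of the partial left markings over which $\markingleft^*$ is minimal; conversely $\markingleft^*$ is itself a valid marking, so optimality of $\optimalmarking$ gives $\error(\optimalmarking) \leq \error(\markingleft^*)$. Hence $\error(\markingleft^*) = \error(\optimalmarking)$, so $\markingleft^*$ is an optimal marking. Case (ii) is symmetric, yielding $\error(\markingright^*) = \error(\optimalmarking)$, and case (iii) follows from either argument since the empty marking is both a partial left and a partial right marking. In all cases $\min\{\error(\markingleft^*), \error(\markingright^*)\} \leq \error(\optimalmarking)$, while both $\markingleft^*$ and $\markingright^*$ are themselves markings, so $\error(\optimalmarking) \leq \min\{\error(\markingleft^*), \error(\markingright^*)\}$ as well. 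Consequently the optimal error equals $\min\{\error(\markingleft^*), \error(\markingright^*)\}$ and is attained by whichever of $\markingleft^*$ and $\markingright^*$ realises this minimum, which is exactly the claim.

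I do not anticipate a genuine obstacle: all the real work is already in Lemma \ref{lemma11}. The only points needing a touch of care are the bookkeeping of the degenerate empty marking and stating up front (via finiteness) that $\optimalmarking$, $\markingleft^*$ and $\markingright^*$ all exist, so that the two-sided inequality argument is not vacuous; I would dispatch both in a single opening sentence.
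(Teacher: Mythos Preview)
Your proposal is correct and follows essentially the same approach as the paper, which simply writes ``Immediate from Lemma~\ref{lemma11}.'' You have merely spelled out the straightforward bookkeeping (existence via finiteness, the empty-marking degenerate case, and the two-sided inequality) that the paper leaves implicit.
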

\begin{proof}
    Immediate from Lemma \ref{lemma11}.
\end{proof}
Applying the results of Lemma \ref{lemma12}, we can find the optimal marking $\optimalmarking$ by finding the optimal partial markings $\markingleft^*$ and $\markingright^*$, comparing their respective error values, and choosing the one with the smaller error value as the optimal marking. The question is how to find the optimal partial markings, and it is answered in the following lemma.

\begin{algorithm}
\caption{Graph Compression Algorithm for Trees}
\label{treesalg2}
\begin{algorithmic}[1]

\Procedure{TREE\_COMPRESSION\_SINGLE\_EDGE}{}    
\State \textbf{Input:} $T=(V,E)$ (A tree with $n$ vertices), an edge $\merged=(u,v)$ to be merged, the error function $\mathcal{E}(.)$
\State \textbf{Output:} A marking of edges $\optimalmarking$ with the optimal amount of error
\State Find all edges $E_L=\{(u,w)| (u,w) \in E, w \neq v\}$
\State $\leffs \xleftarrow[]{} |E_L|$, such that each edge $e_i$ in $E_L$ is connected to a subtree of size $L_i$ for all $i=\{1,\dots, \leffs\}$
\State Find all edges $E_R=\{(v,w)| (v,w) \in E, w \neq u\}$
\State $\riis \xleftarrow[]{} |E_R|$, such that each edge $e_i$ in $E_R$ is connected to a subtree of size $R_i$ for all $i=\{1,\dots, \riis\}$
\State $\sumleft \xleftarrow{} \sum_{\forall e_i \in E_L} L_i$
\State $\sumright \xleftarrow{} \sum_{\forall e_i \in E_R} R_i$
\State Remove $\merged$ from $T$ and merge its endpoints
\State $\markingleft^{*} \xleftarrow[]{} \emptyset$, $\markingright^{*}\xleftarrow[]{} \emptyset$
\For{ each $e_i \in E_L$}
 \If{$\sumleft -\sumright \leq L_i$}
\State $\markingleft^{*} \xleftarrow[]{} \markingleft^{*} \cup \{e_{i}\}$
\EndIf
\EndFor
\For{ each $e_i \in E_R$}
 \If{$\sumright -\sumleft \leq R_i$}
\State $\markingright^{*} \xleftarrow[]{} \markingright^{*} \cup \{e_{i}\}$
\EndIf
\EndFor
\State $\optimalmarking \xleftarrow[]{} \operatorname{argmin} (\mathcal{E}(\markingleft^{*}), \mathcal{E}(\markingright^{*}))$
\State \textbf{Return} $\optimalmarking$
\EndProcedure
\end{algorithmic}
\end{algorithm}

\begin{lemma}
The optimal partial marking $\markingleft^*$ consists of all edges $e_i$ (adjacent to $L_i$ vertices) such that
\begin{linenomath*}\begin{equation}
\sumleft -\sumright \leq L_i
    \label{construction1}
\end{equation}\end{linenomath*}
Similarly, the optimal partial marking $\markingright^*$ consists of all edges $e_i$ (adjacent to $R_i$ vertices) such that 
\begin{linenomath*}\begin{equation}
\sumright -\sumleft \leq R_i
    \label{construction2}
\end{equation}\end{linenomath*}
\label{lemma13}
\end{lemma}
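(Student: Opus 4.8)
The plan is to reuse the change-in-error expressions established in the proof of Lemma~\ref{lemma11} and specialise them to \emph{partial} markings, where they simplify dramatically. First I would note that a partial left marking has, by definition, $\sumrightm = 0$ and $\sumrightu = \sumright$. Substituting this into Eq.~(\ref{delta3}) collapses the marginal cost of marking the $i$-th left edge to $\Delta(\mle) = L_i(\sumleft - \sumright - L_i)$, and substituting into Eq.~(\ref{delta2}) gives $\Delta(\umle) = L_i\bigl(L_i - (\sumleft - \sumright)\bigr) = -\Delta(\mle)$. The decisive feature — and the step I expect to do the real work — is that both quantities depend only on $L_i$ and on the fixed global sums $\sumleft, \sumright$, and never on which other left edges are currently marked; thus the marking decisions on the left side decouple entirely.

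Given this, I would finish in either of two equivalent ways. The direct route: because marking $e_i$ always changes the error by $L_i(\sumleft - \sumright - L_i)$ irrespective of the rest of the marking, the error of an arbitrary partial left marking $\markingleft$ telescopes to $\mathcal{E}(\emptyset) + \sum_{e_i \in \markingleft} L_i(\sumleft - \sumright - L_i)$, where $\mathcal{E}(\emptyset)$ is the error of the all-unmarked configuration. This separable sum is minimised precisely by including every edge whose summand is non-positive, i.e.\ (since each subtree is non-empty, $L_i \geq 1$) every edge with $\sumleft - \sumright \leq L_i$, which is Eq.~(\ref{construction1}). The exchange route, which avoids writing the telescoped sum at all: from any partial left marking, repeatedly apply $\mle$ to an unmarked edge with $L_i > \sumleft - \sumright$ (strictly lowering the error, as $\Delta(\mle) < 0$) and $\umle$ to a marked edge with $L_i < \sumleft - \sumright$ (again strictly lowering the error); the order is immaterial by the decoupling observation, and the process halts exactly at $\markingleft^* = \{e_i : \sumleft - \sumright \leq L_i\}$, certifying its optimality.

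The statement about $\markingright^*$ (Eq.~(\ref{construction2})) then follows verbatim by the symmetric argument: a partial right marking has $\sumleftm = 0$ and $\sumleftu = \sumleft$, so Eq.~(\ref{delta4}) and Eq.~(\ref{delta1}) yield $\Delta(\mr) = R_i(\sumright - \sumleft - R_i)$ and $\Delta(\unmr) = -\Delta(\mr)$, and the same reasoning picks out exactly the edges with $\sumright - \sumleft \leq R_i$.

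Beyond the decoupling observation, the only point that needs care is the boundary case $\sumleft - \sumright = L_i$: such edges contribute $0$ either way, so they may be included or excluded freely, which is why the statement can use the weak inequality in Eq.~(\ref{construction1}) and why there may be several optimal partial markings. Everything else is bookkeeping on top of the formulas already proved in Lemma~\ref{lemma11}.
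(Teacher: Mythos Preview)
Your proposal is correct and follows essentially the same approach as the paper: both specialise the $\Delta(\mle)$ and $\Delta(\umle)$ formulas from Lemma~\ref{lemma11} to the partial-marking setting (where $\sumrightm=0$, $\sumrightu=\sumright$), obtain the simplified condition $\sumleft-\sumright\leq L_i$, and then argue both directions of the characterisation before invoking symmetry for $\markingright^*$. Your explicit emphasis on the decoupling (that the marginal cost depends only on $L_i$ and the global sums, not on the current marking) and the telescoped-sum formulation are a slightly cleaner packaging, but the underlying argument is the same.
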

\begin{proof}
    We first prove Eq. (\ref{construction1}) and derive Eq. (\ref{construction2}) by symmetry. Suppose we are trying to construct an optimal partial marking for the left side. We can do so by keeping the right side unmarked and marking edges on the left until the error can no longer be improved. Recall Eq. (\ref{delta3}) from the proof of Lemma \ref{lemma11}, we have to keep marking all edges $e_i$ (with cardinality $L_i$) until the error can no longer be improved, the change in the error value at each step is equal to:
    \begin{linenomath*}$$\Delta(\mle)= L_i \times \bigg(\sumleftm + (\sumleftu -L_i)+ \sumrightm - \sumrightu\bigg) $$\end{linenomath*}
    At each step, to get an improvement, we must have $\Delta(\mle) \leq 0 $: 
    \begin{linenomath*}$$(\sumleftm + (\sumleftu -L_i)+ \sumrightm - \sumrightu)\leq 0 \xrightarrow[]{\sumleftm+\sumleftu=\sumleft}\sumleft -L_i+ \sumrightm - \sumrightu\leq 0$$\end{linenomath*}
    However, since we are calculating a partial marking for the left side, we know by definition that the right side has to remain fully unmarked at all times, so we have $\sumrightu=\sumright$ and $\sumrightm=0$. Inserting these values in the above equation we get the following inequality for edges that \textit{improve} the partial left marking: 
    \begin{linenomath*}$$\sumleft -L_i - \sumright\leq 0\xrightarrow[]{}\sumleft - \sumright\leq L_i$$\end{linenomath*}
    Then, we can deduce that if an edge $e_i$ on the left satisfies $S_L-S_R \leq L_i$, it must be marked in $\markingleft^*$. Conversely, if an edge on the left $e_i$ is marked in $\markingleft^*$, it must satisfy $S_L-S_R \leq L_i$. To see why, assume $\markingleft^*$ includes an edge $e_i$ with $S_L-S_R > L_i$. Then, we can improve $\markingleft^*$ by unmarking $e_i$ (see Eq. (\ref{delta2})):
    \begin{linenomath*}$$\Delta(\umle)= L_i \times \bigg(- (\sumleftm - L_i) - \sumleftu -  \sumrightm + \sumrightu \bigg) =L_i\times \bigg( L_i -S_L+S_R\bigg) <0$$\end{linenomath*}
    which contradicts the optimality of $\markingleft^*$ and proves Eq. (\ref{construction1}). The other inequality (Eq. (\ref{construction2})) can be proven analogously by applying Eq. (\ref{delta4}).
\end{proof}
We present our linear-time algorithm for finding the optimal marking after merging an edge $e^*$ in Algorithm \ref{treesalg2}.
\begin{example}
    As an example, let us demonstrate how Algorithm \ref{treesalg2} finds the optimal marking for the tree of Figure \ref{fig6}. The optimal marking $\markingleft^*$ consists of all edges on the left, because:
\begin{itemize}
    \item $ 7-21\leq 2=L_1$
    \item $ 7-21\leq 2=L_2$
    \item $ 7-21\leq 3=L_3$
\end{itemize}
On the other hand, the optimal marking $\markingright^*$ consists of only one edge on the right, because: 
\begin{itemize}
    \item $21-7 \leq 20 = R_1$
    \item $21-7 > 1 =R_2$
\end{itemize}
Moreover, because $\markingright^*$ has a better error count than $\markingleft^*$, Algorithm \ref{treesalg2} returns $\markingright^*$ as the overall optimal marking $\optimalmarking$ which is the correct answer as depicted in Figure \ref{fig6}.
\end{example}
Now, we summarize our result in the following theorem.
\begin{theorem}
    Algorithm \ref{treesalg2} computes the optimal marking for a merged edge $\merged$ in $\mathcal{O}(|V|)$ time.
    \label{varyingthem}
\end{theorem}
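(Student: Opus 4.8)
The plan is to split the statement into two claims --- that Algorithm~\ref{treesalg2} returns an optimal marking, and that it runs in $\mathcal{O}(|V|)$ time --- and to dispatch the correctness half almost entirely by invoking the lemmas of Sections~\ref{varying}--\ref{partial}, then to audit the algorithm line by line for the time bound.

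\textbf{Correctness.} The two \textbf{for} loops build $\markingleft^{*}$ as the set of left edges $e_i$ with $\sumleft-\sumright\le L_i$ and $\markingright^{*}$ as the set of right edges $e_i$ with $\sumright-\sumleft\le R_i$; by construction $\markingleft^{*}$ leaves every right edge unmarked and $\markingright^{*}$ leaves every left edge unmarked, so each is a partial marking in the sense of Section~\ref{partial}. By Lemma~\ref{lemma13} these are exactly the optimal partial left and right markings. Lemma~\ref{lemma12} (itself a corollary of Lemma~\ref{lemma11}) then says that the overall optimal marking $\optimalmarking$ equals whichever of the two has the smaller error --- precisely the output of the final $\operatorname{argmin}$. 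It remains only to observe that the ``marking error'' minimized throughout Sections~\ref{equal}--\ref{partial} (the error between pairs of vertices in $\overline{V_m}$) is indeed the quantity one wants: by Lemma~\ref{depressed}, since every neighbour of $\merged$ is left either marked or unmarked, the error between $\overline{V_m}$ and $\{v_1,v_2\}$ is simultaneously at its minimum, so an optimizer of the $\overline{V_m}$-pair error optimizes the full error of Eq.~(\ref{hehe}). A one-line remark covers the degenerate cases $E_L=\emptyset$ or $E_R=\emptyset$, where the corresponding partial marking is empty and the argument is unchanged.

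\textbf{Running time.} I would go through Algorithm~\ref{treesalg2} line by line. Forming $E_L$ and $E_R$ costs $\mathcal{O}(\deg(u)+\deg(v))\subseteq\mathcal{O}(|V|)$; obtaining all subtree cardinalities $L_i,R_j$ is a single depth-first traversal of $T$ from the merged endpoint, i.e.\ $\mathcal{O}(|V|+|E|)=\mathcal{O}(|V|)$ since $T$ is a tree; computing $\sumleft,\sumright$ and running the two loops is $\mathcal{O}(\leffs+\riis)\subseteq\mathcal{O}(|V|)$ with $\mathcal{O}(1)$ work per iteration. The one step needing a separate argument is the evaluation of $\mathcal{E}(\markingleft^{*})$ and $\mathcal{E}(\markingright^{*})$, since a priori Eq.~(\ref{hehe}) is a sum over $\Theta(|V|^{2})$ vertex pairs. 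Here I invoke the case analysis of Observation~\ref{obi}, together with the unmarked-left/unmarked-right contribution isolated in the proof of Lemma~\ref{lemma11}: for a partial left marking the error collapses to
\begin{linenomath*}\begin{equation*}
\mathcal{E}(\markingleft^{*})\;=\;w^{*}\Bigl(\sumleftm^{2}-\sum_{e_i\in\markingleft^{*}} L_i^{2}+\sumleftm\sumleftu+\sumleftu\sumright\Bigr),
\end{equation*}\end{linenomath*}
which is assembled from a constant number of sums, each computable in $\mathcal{O}(\leffs+\riis)$ time (maintain $\sumleftm,\sumleftu$ and $\sum_{e_i\in\markingleft^{*}}L_i^{2}$ while the loop runs), and symmetrically for $\markingright^{*}$. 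Summing over all the lines yields $\mathcal{O}(|V|)$.

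I expect the linear-time evaluation of the error function to be the crux: one must be certain that the pairwise error of Eq.~(\ref{hehe}) really does telescope into a closed form in the $\mathcal{O}(|V|)$ edge-cardinality quantities --- this is exactly where the structural decomposition of Section~\ref{formul} and the $\Delta$-computations of Lemma~\ref{lemma11} carry the load --- and that the required partial sums can be accumulated within the same linear budget. The correctness half, by contrast, is little more than the bookkeeping of stitching together Lemmas~\ref{depressed}, \ref{lemma11}, \ref{lemma12}, and~\ref{lemma13}.
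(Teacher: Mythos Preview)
Your proposal is correct and follows the paper's own route: the paper's proof is the single line ``Immediate from Lemma~\ref{lemma11}, Lemma~\ref{lemma12}, and Lemma~\ref{lemma13},'' which is exactly your correctness argument (with Lemma~\ref{depressed} made explicit). On the running-time half you are actually more careful than the paper, which offers no justification at all for the $\mathcal{O}(|V|)$ bound; your closed-form evaluation of $\mathcal{E}(\markingleft^{*})$ and $\mathcal{E}(\markingright^{*})$ via Observation~\ref{obi} is the right way to fill that gap, and the formula you derive is correct.
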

\begin{proof}
    Immediate from Lemma \ref{lemma11}, Lemma \ref{lemma12}, and Lemma \ref{lemma13}.
\end{proof}

\subsection{Fractional Markings}
\label{fract}
In the previous section, we studied the marking problem under the assumption that each edge could either be fully marked or fully unmarked. In this section, we study a generalized version of the marking problem, called \textit{the fractional marking problem} (to be defined momentarily). We show that Algorithm \ref{treesalg2} does not err by assuming that each edge can either be fully marked or fully unmarked. 
\begin{definition}
With reference to a given merged edge $e^*$ in a graph $G=(V, E)$ with the associated weight function $w: E \Rightarrow \mathbb{R}_{\geq 0}$, and a new weight redistribution function $w^{\prime}: E \rightarrow \mathbb{R}_{\geq 0}$, an edge $e_i$ is said to be fractionally marked if $w^{\prime}\left(e_i\right)=w\left(e_i\right)+c_iw\left(e^*\right)$, $0<c_i<1$.
\label{fractional}
\end{definition}
Each neighbouring edge $e_i$ has thus an assigned $c_i$, which denotes the (possibly fractional) amount by which it is marked. We may sometimes refer to an edge $e_i$ with $c_i=1$ as a fully marked edge. An edge $e_i$ is marked by $\epsilon$ if its corresponding $c_i$ is set to $c^{\prime}_i=c_i + \epsilon$, and it is unmarked by $\epsilon$ if its corresponding $c_i$ is set to $c^{\prime}_i=c_i - \epsilon$.
        \begin{figure*}[h]
    \centering
        \subfloat[One edge from the left side and one from the right are fractionally marked.]{{\includegraphics[width=150pt]{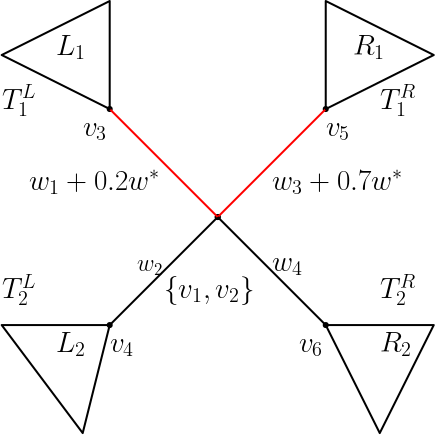}}}
        \hspace{2 cm}
        \subfloat[A succinct representation of (a) in which each fractionally marked edge $e_i$ is shown using its respective $c_i$ (Definition \ref{fractional}).]{{\includegraphics[width=150pt]{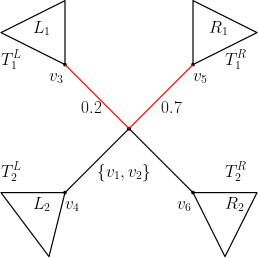}}}\

    \qquad
    \caption{An extension of Figure \ref{fig16} and Figure \ref{fig17} as an example of fractional markings. (a) One edge from the left side and one from the right are fractionally marked. (b) A succinct representation of (a) in which each fractionally marked edge $e_i$ is shown using its respective $c_i$ (Definition \ref{fractional}).  }
    \label{fig18}
\end{figure*}
\begin{definition}
\textbf{The Fractional Marking Problem for Weighted Trees:} Given a contracted edge $e^*$ in a weighted tree $T$ with non-negative weights, 
what subset of the neighbouring edges of $e^*$ should we fully mark or fractionally mark such that the error value of Eq. (\ref{hehe}) is minimized over all such possible subsets?
\label{fracdef}
\end{definition}
Similar to the previous section, we may omit some occurrences of $w^*$ from our calculation for convenience. We borrow our previous running example (Figure \ref{fig16} and Figure \ref{fig17}) and extend it to present an example of fractional markings in Figure \ref{fig18}. Figure \ref{fig18}-(a) depicts the tree of Figure \ref{fig16} with two edges fractionally marked. Figure \ref{fig18}-(b) illustrates a succinct representation of Figure \ref{fig18}-(a), where each fractionally marked edge $e_i$ is shown using its respective $c_i$ (Definition \ref{fractional}) and the weights of the unmarked edges are omitted. We use this succinct version often in the remainder of this section.

As a warm-up, we first present a property of any optimal marking that has at least one fractionally marked edge.
          \begin{figure*}[h]
    \centering
        \subfloat[]{{\includegraphics[width=150pt]{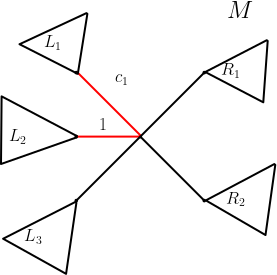}}}
        \hspace{2 cm}
        \subfloat[ ]{{\includegraphics[width=150pt]{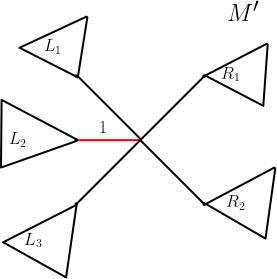}}}\

         \hspace{2 cm}\subfloat[]{{\includegraphics[width=150pt]{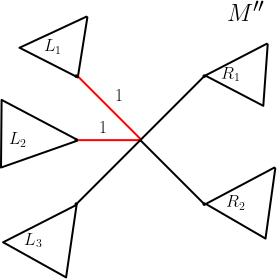}}}
                \qquad \hspace{2 cm}
    \qquad
    \caption{The example used in the proof of Lemma \ref{necessity}: (a) A hypothetical (for the sake of contradiction) optimal marking $M$ with a fractionally marked edge $e_1$ with marking value $c_1$. (b) Another marking $M^{\prime}$ resulting from unmarking $e_1$ in $M$. (c) A new marking $M^{\prime \prime}$ strictly better than $M$ ($\mathcal{E}(M^{\prime \prime }) < \mathcal{E}(M)$), yielding a contradiction and proving Lemma \ref{necessity}.}
    \label{fig19}
\end{figure*}
\begin{lemma}
Let $M$ be an optimal marking for a contracted edge $e^*$ (in a weighted tree) that has at least one fractionally marked edge $e^{\prime}$. Then, $M$ necessarily has marked edges on both sides.
\label{necessity}
\end{lemma}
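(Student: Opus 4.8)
The plan is to argue by contradiction, following the three‑stage construction sketched in Figure~\ref{fig19}. Suppose $M$ is optimal, contains a fractionally marked edge $e'$, and --- contrary to the claim --- all of $M$'s marked edges lie on one side of $e^*$; without loss of generality this is the left side, so $e'$ is a left edge with marking value $c'\in(0,1)$ while every right edge has marking value $0$ in $M$. Let $M'$ be the marking obtained from $M$ by unmarking $e'$ (Figure~\ref{fig19}-(b)). Using the pairwise error decomposition of Observation~\ref{obi}, I would first record that, as long as the entire right side stays unmarked, $\mathcal{E}$ is an affine function of $c'$: each cross‑term contributed by the subtree of $e'$ equals $L_{e'}R_j\,|c'-1| = L_{e'}R_j(1-c')$ and each left--left term involving $e'$ equals $L_{e'}L_i(c'+c_i)$, so $\mathcal{E}(M)=\mathcal{E}(M')+c'\cdot L_{e'}\bigl(S_L-L_{e'}-S_R\bigr)$, where $S_L=\sum_i L_i$ and $S_R=\sum_j R_j$.

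The second step exploits optimality at both endpoints $c'=0$ and $c'=1$. Writing $M''_1$ for $M$ with $e'$ fully marked, both $M'$ and $M''_1$ are legal markings, so optimality of $M$ gives $\mathcal{E}(M)\le\mathcal{E}(M')$ and $\mathcal{E}(M)\le\mathcal{E}(M''_1)$; an affine function of $c'$ that stays at or below its values at $c'=0$ and $c'=1$ at an interior point must be constant, forcing $S_L-L_{e'}-S_R=0$. If the slope $L_{e'}\bigl(S_L-L_{e'}-S_R\bigr)$ were nonzero, one of $M'$, $M''_1$ would be \emph{strictly} better than $M$, contradicting optimality and already proving the lemma (the $M''$ of Figure~\ref{fig19}-(c) being the better of the two). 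Hence the entire difficulty is concentrated in ruling out the degenerate equality $S_L-S_R=L_{e'}$.

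For that remaining case I would show a one‑sided marking still cannot be optimal, this time by building an $M''$ that also marks an edge on the \emph{opposite} side. Since any left‑only marking keeps $\mathcal{E}$ jointly affine in the left marking values, the best left‑only marking is integral and, by Lemma~\ref{lemma13} instantiated with $S_L-S_R=L_{e'}$, marks exactly the left edges of cardinality at least $L_{e'}$; I would then compare its error with the best right‑only marking via Lemma~\ref{lemma12} and Lemma~\ref{lemma13}, using $S_R=S_L-L_{e'}$ together with the cardinality bookkeeping of Observation~\ref{obi}, and exhibit a right edge whose full marking --- combined with re‑marking $e'$ --- strictly decreases $\mathcal{E}$. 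I expect this last step to be the main obstacle: because $\mathcal{E}$ is only piecewise linear, the improving marking is reached by a non‑local change of several marking values at once rather than an infinitesimal perturbation, so the case analysis of Observation~\ref{obi} has to be carried out carefully to certify that the net change in $\mathcal{E}$ is negative.
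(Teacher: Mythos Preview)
Your first two steps recover exactly the paper's argument: the error is affine in the marking value $c'$ of $e'$ when the right side is fully unmarked, with slope $L_{e'}(S_L-L_{e'}-S_R)$, so an interior optimum forces either a strict improvement at one of the endpoints $c'\in\{0,1\}$ (contradiction) or a vanishing slope. The paper simply writes $\mathcal{E}(M)<\mathcal{E}(M')$ as a strict inequality and deduces $X<0$, thereby silently discarding the degenerate case $S_L-L_{e'}=S_R$; you are right to flag it.

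Where your plan breaks down is step three. In the degenerate case the lemma, read literally, is \emph{false}, so no argument can salvage it. Take $L_1=3$, $L_2=8$ on the left and eight right subtrees each of size $1$; then $S_L=11$, $S_R=8$, and $S_L-L_1=S_R$, so the slope in $c_1$ vanishes. With the right side unmarked the total error is $24(c_1+c_2)+24(1-c_1)+64(1-c_2)=88-40c_2$, independent of $c_1$ and minimised at $c_2=1$ with value $48$. The best right-only marking (all eight right edges marked) gives $88-32=56$, and by Lemma~\ref{lemma11} no integral mixed marking beats $48$. Hence $M=(c_1=\tfrac12,\;c_2=1,\;\text{right unmarked})$ is a genuine optimum with a fractionally marked edge and \emph{no} marked edge on the right. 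Your proposed strictly-improving two-sided $M''$ does not exist here, so the ``main obstacle'' you anticipated is not merely delicate --- it is insurmountable.

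The resolution is not to fight the degenerate case but to observe that it is harmless for the only downstream use of the lemma (Lemma~\ref{cons1}): when the slope vanishes you may round $c'$ to $0$ or $1$ without changing $\mathcal{E}$, obtaining an optimal marking with one fewer fractional edge. The paper's proof is effectively the non-degenerate half of your argument; your additional third step should be replaced by this rounding remark rather than an attempted strict improvement.
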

\begin{proof}
    By contradiction. Suppose $M$ is an optimal marking with a 
    fractionally marked edge $e^{\prime}$, and suppose $M$ is a 
    partial left or right marking (Section \ref{partial}) with 
    marked edges only on the left or right respectively. Without 
    loss of generality, assume $M$ is a partial left marking 
    with a fractionally marked edge $e^{\prime}=e_1$. As 
    depicted in Figure \ref{fig19}-(a), assume $e_1$ has 
    cardinality $L_1$ and marking value $c_1$ (Definition 
    \ref{fractional}). We can obtain another marking 
    $M^{\prime}$ by unmarking $e_1$ (Figure \ref{fig19}-(b)). 
    Let $\mathcal{E}$ be the error function, then $\mathcal{E}
    (M)=\mathcal{E}(M^{\prime}) +\Delta_1 (\mle) $ and 
    $\mathcal{E}(M) <  \mathcal{E} (M^{\prime})$ because $M$ is 
    an optimal marking.  Therefore, $\Delta _1(\mle) < 0$ when 
    marking $e_1$ back in $M^{\prime}$. We now formulate 
    $\Delta_1(\mle)$ when marking $e_1$ in $M^{\prime}$ by 
    $c_1$. 
    
    \begin{linenomath*}$$\Delta_{1}(\mle)=c_1 \times (X)$$\end{linenomath*}
    where $X= L_1 \times \bigg(\sumleftm + (\sumleftu -L_1)+ 
    \sumrightm - \sumrightu\bigg)= L_1 \times \bigg(\sumleftm + 
    (\sumleftu -L_1) - S_R\bigg)$ because $\sumrightm=0$ and 
    $S_R= \sumrightu$. However, because $\Delta_1 (\mle)<0$, we 
    have that $X<0$ and we can fully mark $e_1$ in $M^{\prime}$ 
    to get another marking $M^{\prime \prime}$ (Figure 
    \ref{fig19}-(c)). The amount of error change of this mark 
    operation is equal to: 
    
    \begin{linenomath*}$$\Delta_{2}(\mle)= (c_1) \times X + (1-c_1) \times X$$\end{linenomath*}
    Noting that $\mathcal{E}(M)=\mathcal{E}(M^{\prime}) 
    +\Delta_1 (\mle) $ we have: 
    
    \begin{linenomath*}$$\mathcal{E}(M^{\prime \prime})= \mathcal{E}(M^\prime) + 
    \Delta_2(\mle) =\underbrace{\mathcal{E}(M^{\prime})+(c_1) 
    \times X}_{=\mathcal{E}(M^{\prime}) 
    +\Delta_1(\mle)=\mathcal{E}(M)} + (1-c_1) \times X< 
    \mathcal{E}(M)$$\end{linenomath*}
    Therefore $M^{\prime \prime }$ is a strictly better marking 
    than $M$, contradicting our assumption that $M$ is an 
    optimal marking.
\end{proof}

Lemma \ref{necessity} states that an optimal marking with fractionally marked edges cannot be a partial left or right marking (as defined in Section \ref{partial}). 

We now present the main result of this section. 
\begin{lemma}
    Let $M$ be an optimal marking for a contracted edge $e^*$ (in a weighted tree) that contains both fractionally and fully marked edges. Then, $M$ can be transformed into another optimal marking $M^{\prime}$ that contains no fractionally marked edges. 
    \label{cons1}
\end{lemma}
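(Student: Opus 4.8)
The plan is to read the error off as a convex, piecewise–linear function of the marking coefficients, and to show that such a function is minimized at an \emph{integral} point.

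First I would make the tallies of Section~\ref{formul} / Observation~\ref{obi} explicit as a function of the coefficients. Write $c_i\in[0,1]$ for the coefficient of the $i$-th left edge and $\gamma_j\in[0,1]$ for the $j$-th right edge (so $c_i=1$ is "fully marked", $c_i=0$ "unmarked", and $c_i\in(0,1)$ "fractionally marked"). Between $T_i^L$ and $T_{i'}^L$ the merged edge appears $c_i+c_{i'}$ extra times, between $T_i^L$ and $T_j^R$ it goes from appearing once to appearing $c_i+\gamma_j$ times, etc.; interpolating Observation~\ref{obi} gives
\begin{equation*}
\mathcal{E}=w^{*}\Big(\textstyle\sum_{i<i'}L_iL_{i'}(c_i+c_{i'})+\sum_{j<j'}R_jR_{j'}(\gamma_j+\gamma_{j'})+\sum_{i}\sum_{j}L_iR_j\,|c_i+\gamma_j-1|\Big)+\kappa ,
\end{equation*}
where $\kappa$ collects the error involving the merged vertices; by the computation in Lemma~\ref{depressed} ($|\epsilon_i|+|\epsilon_i-w^{*}|=w^{*}$ for $0\le\epsilon_i\le w^{*}$) this $\kappa$ does not depend on $(c,\gamma)$ at all. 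Hence $\mathcal{E}$ is a sum of linear terms and absolute values of affine forms, i.e.\ a convex piecewise–linear function on the cube $[0,1]^{\leffs+\riis}$.

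Consequently the set $\mathcal{O}$ of optimal markings is a nonempty polytope, and since $\mathcal{O}$ is convex it suffices to exhibit one integral point of $\mathcal{O}$: any vertex will do, and the segment from $M$ to that vertex lies inside $\mathcal{O}$, which realizes the claimed transformation. So I would prove: \emph{every vertex of $\mathcal{O}$ is integral}. Suppose not, and let $x^{*}$ be a vertex with a fractional coordinate, say the left coefficient $c_k\in(0,1)$ (the other side is symmetric). Restrict $\mathcal{E}$ to the line $x^{*}+t\,e_k$; it is convex PL in $t$ with a minimum at the interior point $t=0$. If it is affine near $t=0$ it is \emph{constant} near $t=0$, so $x^{*}\pm\delta e_k\in\mathcal{O}$ for small $\delta$, contradicting that $x^{*}$ is a vertex. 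Otherwise $t=0$ is a breakpoint of this restriction, which can only happen if $\gamma_j=1-c_k$ for some right edge $j$; since $1-c_k\in(0,1)$ this is a fractionally marked right edge — exactly the situation of Lemma~\ref{necessity}.

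The decisive step handles this breakpoint case with a \emph{paired} move. Let $I=\{\,i:c_i=c_k\,\}$ and $J=\{\,j:\gamma_j=1-c_k\,\}$, both nonempty, and move along $d=\sum_{i\in I}e_i-\sum_{j\in J}e_j$. For every "linked" pair $(i,j)\in I\times J$ the sum $c_i+\gamma_j$ stays equal to $1$, so those cross–terms stay at the constant value $0$; and I would check that no other term $|c_i+\gamma_j-1|$ has its breakpoint at $t=0$ — for $i\in I,\ j\notin J$ one has $c_i+\gamma_j\neq1$, for $i\notin I,\ j\in J$ likewise, and the remaining terms are untouched. Hence $\mathcal{E}$ is affine near $t=0$ along $d$, so (interior minimum) constant there, and $x^{*}\pm\delta d\subseteq\mathcal{O}$ is a nondegenerate segment — again contradicting vertexhood. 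Therefore $x^{*}$ is integral; taking $M'$ to be any vertex of $\mathcal{O}$ gives an optimal marking with no fractionally marked edge, and $\mathcal{E}(M')=\mathcal{E}(M)$ since $M$ is optimal too.

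I expect the breakpoint case to be the one real obstacle: naïvely one just rounds the single fractional edge $e_k$ up to $w^{*}$ or down to $0$, and this genuinely can raise the error — precisely when $e_k$ is "tied" through $c_k+\gamma_j=1$ to a fractional edge on the opposite side; the tandem move above is what neutralizes it, and the only delicate point is the bookkeeping that the tandem move creates no new kink. (A more hands-on proof in the style of Lemma~\ref{lemma11} is also possible: run the operators $\mle,\umle,\mr,\unmr$ at fractional rates, use optimality of $M$ to conclude each marginal $\Delta(\cdot)\ge 0$, slide a fractional edge to $\{0,w^{*}\}$ at constant error, and in the tied case slide the tied pair in opposite directions. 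The hypothesis that $M$ additionally has a fully marked edge is not needed for the polytope argument, but is the natural setting in which such rounding is performed.)
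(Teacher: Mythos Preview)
Your proof is correct and takes a genuinely different route from the paper's.

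The paper proceeds by a three-case hands-on analysis: Case~1 handles pairs $e_1$ (left), $e_2$ (right) with $c_1+c_2>1$ by fractionally unmarking the one on the ``heavier'' side; Case~2-1 equalizes unequal coefficients on the same side; and Case~2-2, where all left coefficients equal some $\epsilon_1$ and all right coefficients equal some $\epsilon_2$ with $\epsilon_1+\epsilon_2\le 1$, is compared directly against the integral optimum $\min(\mathcal{E}(\markingleft^*),\mathcal{E}(\markingright^*))$ of Lemma~\ref{lemma13}. Each case is an explicit local move in the spirit of Lemma~\ref{lemma11}.

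You instead write the error once and for all as a convex piecewise-linear function on the cube $[0,1]^{\leffs+\riis}$ and argue that the optimal face $\mathcal{O}$ has only integral vertices. Your single-coordinate move handles exactly the paper's ``non-tied'' situations, and your paired direction $d=\sum_{i\in I}e_i-\sum_{j\in J}e_j$ is the clean abstraction of the paper's Case~1/Case~2-2 phenomenon where a fractional left coefficient is locked to a fractional right one through $c_i+\gamma_j=1$; your check that $d$ hits no kink at $t=0$ is precisely the bookkeeping the paper spreads across its cases. What your approach buys is brevity and a structural explanation (the only obstruction to rounding is a left--right tie, and ties can be slid in tandem); what the paper's approach buys is that it never leaves the vocabulary of Section~\ref{varying} and makes the connection to $\markingleft^*,\markingright^*$ explicit. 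Your closing remark is also correct: the hypothesis that $M$ contain a fully marked edge is not used in either argument.
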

\begin{proof}
    We assume $M$ is an optimal marking that contains fractionally marked edges. We consider several possible cases and for each case, we present a transformation technique that does not worsen the marking with reference to the error function (Eq. (\ref{hehe})). The repeated application of these transformations converts $M$ into another marking $M^{\prime}$ with no fractionally marked edges.

    Let $M$ be an optimal marking that contains at least one fractionally marked edge. Using Lemma \ref{necessity}, we may assume $M$ contains marked edges on both sides. Throughout this proof, we use $0<c_i\leq 1$ to denote the marking value for an edge $e_i$, such that $e_i$ is fractionally marked if $0<c_i< 1$ (see Definition \ref{fractional}).
    \begin{itemize}
        \item \textbf{Case 1:} $M$ contains two marked edges $e_1$ (with cardinality $L_1$) and $e_2$ (with cardinality $R_1$) on the left and right respectively such that $c_1 + c_2 > 1$ (Figure \ref{fig20}-(a)).
        \end{itemize}
        Let $c_1 + c_2 =1+ \epsilon$ for $\epsilon >0$. We show 
        that fractionally unmarking either $e_1$ or $e_2$ by 
        $\epsilon$ does not worsen the error, and this change 
        transforms $M$ into another optimal marking $M^{\prime}$ 
        in which $c^{\prime}_1+c^{\prime}_2=1$. 
        We generalize the proof of Lemma \ref{lemma11} and 
        define $\sumleft =\sum_{i=1}^{\leffs}L_i$ and $\sumright 
        =\sum_{i=1}^{\riis}R_i$ as the total sum of all edge 
        cardinalities on the left and right sides respectively. 
        Let $\sumleft^{\prime}= \sumleft- L_1$ and 
        $\sumright^{\prime}= \sumright- R_1$ and without loss of 
        generality, assume $\sumright^{\prime} \geq 
        \sumleft^{\prime}$. We unmark $e_2$ by $\epsilon$, 
        setting $c^{\prime}_2=c_2 -\epsilon$ (Figure \ref{fig20}-
        (b)). Note that we must necessarily have $c_2 \geq 
        \epsilon$ because otherwise $c_1+ c_2 < 1 + \epsilon= 
        c_1+c_2$, which is a contradiction. We now show this 
        operation does not worsen the marking. Between the 
        vertices of $e_2$ and the vertices of all other edges on 
        the right side, the error is reduced by $-\epsilon 
        \times w^*$. Now, let $e_j \neq e_1$ be some edge on the 
        left. The error between the vertices of $e_2$ and $e_j$ 
        is increased by at most $\epsilon \times w^*$. If $e_j$ 
        has marking value $c_j$, then this operation may 
        increase the error between the vertices of $e_2$ and 
        $e_j$ by $|w^*- (c_2 \times w^* +c_j \times w^* -
        \epsilon \times w^*)| -|w^*- (c_2 \times w^* +c_j \times 
        w^* )| \leq \epsilon \times w^*$ using Corollary 
        \ref{obs}.
        Therefore, this unmark operation changes the error by: 
                \begin{linenomath*}$$
\begin{array}{cc}
    \Delta_1(\unmr) \leq R_1 \times \epsilon \times w^* \times \bigg( - \sumright^{\prime} \underbrace{-  L_1}_{<0} + \sumleft^{\prime}\bigg)  & <  
   R_1 \times \epsilon \times w^* \times (\underbrace{-\sumright^{\prime} +\sumleft^{\prime}}_{\leq 0} )\leq  0
\end{array}
$$\end{linenomath*}
      \begin{figure*}[h]
    \centering
        \subfloat[]{{\includegraphics[width=200pt]{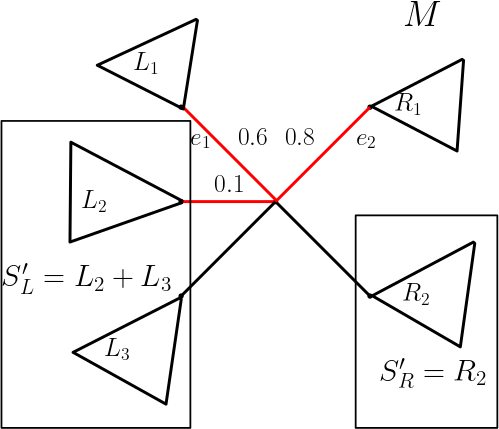}}}
        \hspace{2 cm}
        \subfloat[]{{\includegraphics[width=200pt]{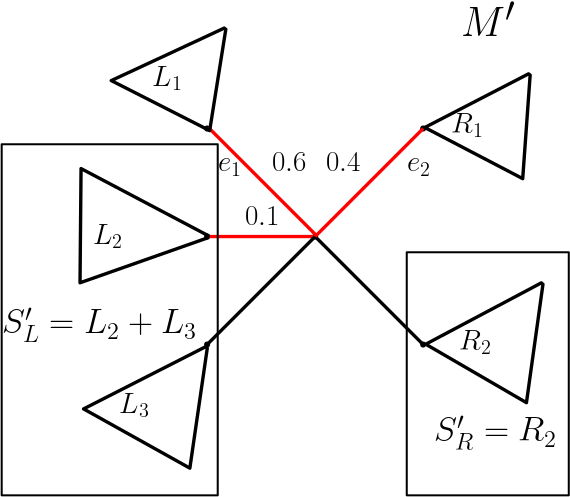}}}\

    \qquad
    \caption{Case 1 in the proof of Lemma \ref{cons1}: (a) There exist two edges $e_1$ and $e_2$ such that $c_1+ c_2 =0.6 + 0.8= 1.4 > 1$, $S_R^{\prime}= R_2 \geq S_L^{\prime}= L_2+L_3$. (b) Another marking $M^{\prime}$ where $e_2$ is unmarked by $\epsilon=0.4$ such that $\mathcal{E}(M^{\prime}) \leq \mathcal{E}(M)$.}
    \label{fig20}
\end{figure*}

      \begin{figure*}[h]
    \centering
        \subfloat[]{{\includegraphics[width=150pt]{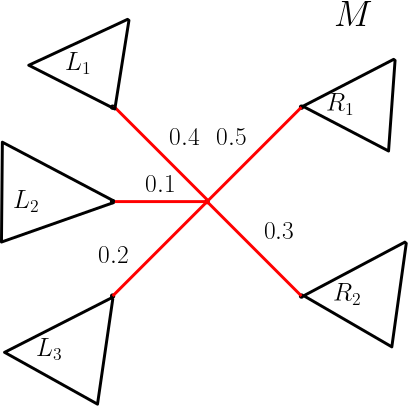}}}
        \hspace{2 cm}
        \subfloat[]{{\includegraphics[width=150pt]{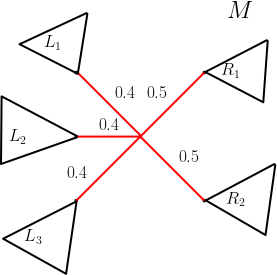}}}\

    \qquad
    \caption{Case 2 in the proof of Lemma \ref{cons1}: (a) Case 2-1: For all edges $e_1, e_2$ on opposite sides $c_1+c_2 \leq 1$, and there exist two edges $e_i, e_j$ on one side with $c_i \neq c_j$ (for instance $0.4$ and $0.1$ on the left) (b) Case 2-2: For all edges $e_1, e_2$ on opposite sides $c_1+c_2 \leq 1$, for all edges $e_i$ on the left $c_i=\epsilon_1=0.4$, and for all edges $e_j$ on the right $c_j=\epsilon_2=0.5$.}
    \label{fig21}
\end{figure*}
\begin{itemize}
            \item  \textbf{Case 2:} For all pairs of marked edges $e_1$ and $e_2$ on the left and right respectively $c_1 +c_2 \leq 1$ (Figure \ref{fig21}).
\end{itemize}
We consider two subcases: 
\begin{itemize}
    \item \textbf{Case 2-1:} There exist two edges $e_i$ and $e_j$ on one side (left or right) such that $c_i \neq c_j$ (Figure \ref{fig21}-(a)). 
\end{itemize}
Figure \ref{fig21}-(a) is an example of such marking $M$ in 
which $c_1+ c_2 \leq 1 $ for any two edges $e_1$ and $e_2$ on 
opposite sides and at least two edges $e_i$ and $e_j$ on the 
left with $c_i \neq c_j$. Without loss of generality, we may 
assume $c_i < c_j$. Then, we can set $c^{\prime}_i= c_j$ without 
increasing the error. Due to the properties of this subcase, we 
may get another marking $M^{\prime}$ by unmarking $e_i$ (setting 
$c^{\prime}_i=0$) and then marking it by $c_j$ to get a third 
marking $M^{\prime \prime }$ with $\mathcal{E}(M^{\prime \prime 
}) \leq \mathcal{E} (M)$. Similar to the proof of Lemma 
\ref{necessity}, we have: 
        \begin{linenomath*}$$\mathcal{E}(M^{\prime \prime})=\underbrace{\mathcal{E}
(M^{\prime}) + c_1 \times X }_{=\mathcal{E}(M)}+ (c_2- c_1) 
\times X$$ \end{linenomath*}
\begin{itemize}
    \item \textbf{Case 2-2:} For all marked edges $e_i$ (with 
    $c_i >0$) on the left $c_i=\epsilon_1$, and for all marked 
    edges $e_j$ (with $c_j>0$) on the right $c_j =\epsilon_2$ 
    ($\epsilon_1 + \epsilon_2 \leq 1$) (Figure \ref{fig21}-(b)). 
\end{itemize}
For this case, we simply show that the error associated with the 
optimal partial marking (Lemma \ref{lemma13}) is a lower bound 
on $\mathcal{E}(M)$, or $\operatorname{min}(\mathcal{E}
(\markingright^*), \mathcal{E}(\markingleft^*))\leq \mathcal{E}
(M)$. Without loss of generality, we assume that 
$\operatorname{min}(\mathcal{E}(\markingright^*), \mathcal{E}
(\markingleft^*))= \mathcal{E}(\markingleft^*)$. \\ 
Let $E_L$ and $E_R$ be the set of marked edges in 
$\markingleft^*$ and $\markingright^*$ respectively. From Lemma 
\ref{lemma13}, we know that for each $e_i \in E_L$, $S_L-
S_R \leq L_i$ and for each $e_i \in E_R$, $S_R-S_L \leq 
R_i$. We show that the set of marked edges in $M$ is precisely 
equal to $E_L \cup E_R$. Let $e_i$ be any marked (with reference 
to $M$) edge on the left, and let $e_j$ be any marked edge on 
the right. Because $c_i +c_j \leq 1$, unmarking $e_i$ by 
$\epsilon\leq c_i$ increases the error between the vertices of 
$e_i$ and $e_j$ by $\epsilon \times w^*$:

        \begin{linenomath*}$$\underbrace{|w^*-(c_i\times w^* + c_j\times w^* -\epsilon 
\times w^*)|}_{=w^*-(c_i \times w^*+ c_j \times w^* -\epsilon 
\times w^*) \text{ because }c_i+c_j \leq 1}- \underbrace{|w^*-
(c_i\times w^* + c_j \times w^*)|}_{=w^*-(c_i \times w^*+ c_j 
\times w^*) \text{ because }c_i+c_j \leq 1} = \epsilon \times 
w^*$$\end{linenomath*}

Furthermore, unmarking $e_i$ by $\epsilon$ decreases the error 
between the vertices of $e_i$ and the vertices of all other 
edges on the left by $-\epsilon \times w^*$. Therefore, 
unmarking $e_i$ by $\epsilon$ changes $\mathcal{E}(M)$ by: 

        \begin{linenomath*}$$\Delta_1 (\umle)= L_i \times \epsilon \times w^* \times (-(S_L 
- L_i) + S_R )=L_i \times \epsilon \times w^* \times (-S_L + L_i 
+ S_R )$$        \end{linenomath*}

Because $M$ is an optimal marking, $\Delta_1 (\umle) \geq 0$ 
and: 
        \begin{linenomath*}$$-S_L + L_i + S_R \geq  0 \xrightarrow{}L_i \geq  S_L - S_R$$        \end{linenomath*}
Conversely, we may assume that any edge $e_i$ on the left satisfying $L_i \geq  S_L - S_R$ is marked in $M$; because 
otherwise, we could improve $M$ by marking $e_i$\footnote{The 
proof for this claim is almost identical to the one provided in 
the proof of Lemma \ref{lemma13}. Here, we omitted the details 
to avoid repetition.}. Similar reasoning can be applied to any 
marked edge $e_i$ on the right.
We now conclude the proof. First, note that:
        \begin{linenomath*}$$\mathcal{E}(\markingleft^*)= \underbrace{\mathcal{E}
(M_0)}_{\text{The error associated with the empty marking}} 
+\underbrace{\sum_{e_i \in E_L} L_i  \times w^* \times (S_L-L_i-
S_R)}_{\text{The sum of all } \Delta(\mle) \text{'s that 
transform }M_0 \text{ into }\markingleft^*}$$        \end{linenomath*}
and 
        \begin{linenomath*}$$\mathcal{E}(\markingright^*)= \underbrace{\mathcal{E}
(M_0)}_{\text{The error associated with the empty marking}} 
+\underbrace{\sum_{e_i \in E_R} R_i  \times w^* \times (S_R-R_i-
S_L)}_{\text{The sum of all } \Delta(\mr) \text{'s that 
transform }M_0 \text{ into }\markingright^*}$$        \end{linenomath*}

where $M_0$ is a trivial marking with no marked edges. 

On the other hand, $M$ can be constructed by first marking all 
edges $e_i$ in $E_L$ by $\epsilon_1$ and then marking all edges in $E_R$ by $\epsilon_2$.
        \begin{linenomath*}
\begin{align*}
\mathcal{E}(M)=& \underbrace{\mathcal{E}(M_0)}_{\text{The error associated with the empty marking}} +\underbrace{\epsilon_1 
\times \sum_{e_i \in E_L} L_i  \times w^* \times (S_L-L_i-
S_R)}_{\text{The sum of all } \Delta(\mle) \text{'s by 
}\epsilon_1 }\\
+& \underbrace{\epsilon_2 \times \sum_{e_i \in E_R} R_i\times w^* \times (S_R-R_i-S_L)}_{\text{The sum of all }\Delta(\mr) \text{'s by }\epsilon_2 }
\end{align*} \end{linenomath*}
From our assumption, $\mathcal{E}(\markingleft^*) \leq 
\mathcal{E}(\markingright^*)$. We have $S_L-L_i-S_R \leq 0 $ for 
all $e_i \in E_L$ and $S_R-R_i-S_L \leq 0$ for all $e_i \in 
E_R$. We get: 
        \begin{linenomath*}
$$\mathcal{E}(\markingleft^*) \leq \mathcal{E}
(\markingright^*)\xrightarrow[]{} \sum_{e_i \in E_L}L_i\times w^* \times (S_L-L_i-S_R) \leq \sum_{e_i \in E_R} R_i \times w^* \times (S_R-R_i-S_L)$$        \end{linenomath*}

On the other hand: 

\begin{align}
\mathcal{E}(M)&= \mathcal{E}(M_0) + \epsilon_1 \times \sum_{e_i 
\in E_L} L_i  \times w^* \times (S_L-L_i-S_R)+ \epsilon_2 \times 
\sum_{e_i \in E_R} R_i  \times w^* \times (S_R-R_i-S_L)
\nonumber \\
&\geq 
\mathcal{E}(M_0) + \epsilon_1 \times \sum_{e_i \in E_L} L_i\times w^* \times (S_L-L_i-S_R)+ \epsilon_2 \times \sum_{e_i \in 
E_L} L_i  \times w^* \times (S_L-L_i-S_R) \nonumber \\
 &=  \mathcal{E}(M_0) + (\epsilon_1 +\epsilon_2) \times 
 \sum_{e_i \in E_L} L_i  \times w^* \times (S_L-L_i-S_R) 
 \xrightarrow[S_L -L_i -S_R \leq 0]{\epsilon_1 +\epsilon_2\leq 
 1} \nonumber \\
 &\geq
\mathcal{E}(M_0) +  \sum_{e_i \in E_L} L_i  \times w^* \times (S_L-L_i-S_R) =\mathcal{E}(\markingleft^*) \nonumber \\
&\xrightarrow[]{} \mathcal{E}(M) \geq \mathcal{E}(\markingleft^*) 
\end{align}

Thus, $\operatorname{min}(\mathcal{E}(\markingleft^*), 
\mathcal{E}(\markingright^*))$ is a lower bound on the error of 
any such marking. \\

We now conclude the proof by stating that any optimal marking 
with fractionally marked edges can be transformed into another 
optimal marking with no fractionally marked edges.  Let $M$ be 
any such marking. If $M$ satisfies the conditions of Case 1, we 
repeatedly apply the transformation method of Case 1 until it 
satisfies the conditions of Case 2-1. We then repeatedly apply 
the construction method of Case 2-1 until $M$ satisfies the 
conditions of Case 2-2. Finally, if $M$ satisfies the conditions 
of Case 2-2, we have already shown that $\mathcal{E}(M)$ is 
lower bounded by the optimal partial marking of Lemma 
\ref{lemma13} which has no fractionally marked edges.

\end{proof}
\section{Conclusion and Open Problems}
In this paper, we studied the problem of distance-preserving 
graph compression for weighted paths and trees. We first 
presented a brief literature review of some related work in this 
domain, noting that one particular aspect of the problem is 
understudied. More specifically, there has been little attention 
in the literature to the problem of optimally compressing a 
given set of edges. To address this, we presented optimal 
algorithms for compressing any set of $k$ edges in a weighted 
path and for optimally compressing a single edge in a weighted 
tree. We tackled the problems in an incremental order of 
difficulty. For weighted paths, we first solved the problem of 
optimally compressing a single edge, then we generalized it to 
any set of $k$ independent edges. Finally, we provided an 
optimal approach to compressing any contiguous subset of edges 
in a weighted path.\\
We then generalized our scope to weighted trees, where we 
studied the problem of optimally compressing a single edge. To 
this end, we first studied the easier case in which the subtrees 
of both sides of the merged edge had equal sizes. Finally, we 
generalized our results to the case in which different subtrees 
were of different sizes.\\ 
We now note some potential avenues of future studies:
\begin{openproblem}
    Can we solve the distance-preserving graph compression problem for general graphs in polynomial time?
\end{openproblem}
The above problem would indeed be a natural extension of this 
paper. The complexity of the weight redistribution problem for 
general graphs is still unknown. However, it appears that the 
related problem of finding the contracted edges is unlikely to 
be solved in polynomial time. Bernstein \textit{et al.} 
\cite{bernstein} showed that $\operatorname{CONTRACTION}$ 
(defined in Section 1) is NP-hard even if the underlying graph 
is just a weighted cycle. In a graph with cycles, some vertices 
are connected via multiple paths. Therefore, after merging a 
single edge, several shortest paths that traverse that edge may 
need to be rerouted using completely different edges, making the 
analysis much more difficult.
\begin{openproblem}
How could we find an optimal redistribution strategy that also 
minimizes the error between all pairs of vertices in $V_m$?
\end{openproblem}
Note that even if some weight redistribution minimized the error 
between two nodes in different supernodes, it would still be non-trivial to do the same for two vertices that are placed in a 
single supernode. Obviously, a trivial solution would be to 
store the shortest path weights between the vertices in one 
supernode as separate table entries. However, such an approach 
would defeat the whole purpose of graph compression, which is to 
reduce memory requirements. 
\begin{openproblem}
    For the optimal weight redistribution problem, are there any 
    better cost models (error functions)?
\end{openproblem}
As stated in Section \ref{prel}, in this paper we defined the 
error function as the sum of
the absolute differences of the shortest path lengths between different pairs of nodes before and after redistributing the 
weights. However, exploring alternative cost functions that 
better capture the distance-based similarity between a modified 
graph and its original version can open up exciting research 
avenues. Investigating whether there exist other cost functions 
that provide a more accurate measure of closeness between graphs 
can lead to valuable research opportunities.
\label{conclusion}
\bibliographystyle{plain}
\bibliography{main.bib}

\begin{thebibliography}{10}

\bibitem{networks}
Ittai Abraham, Daniel Delling, Andrew~V Goldberg, and Renato~F Werneck.
\newblock A hub-based labeling algorithm for shortest paths in road networks.
\newblock In {\em Experimental Algorithms: 10th International Symposium, SEA
  2011, Kolimpari, Chania, Crete, Greece, May 5-7, 2011. Proceedings 10}, pages
  230--241. Springer, 2011.

\bibitem{transportation}
Yasir Arfat, Sugimiyanto Suma, Rashid Mehmood, and Aiiad Albeshri.
\newblock Parallel shortest path big data graph computations of {US} road
  network using apache spark: survey, architecture, and evaluation.
\newblock {\em Smart Infrastructure and Applications: Foundations for Smarter
  Cities and Societies}, pages 185--214, 2020.

\bibitem{chromatic}
Cristina Bazgan, C{\'e}dric Bentz, Christophe Picouleau, and Bernard Ries.
\newblock Blockers for the stability number and the chromatic number.
\newblock {\em Graphs and Combinatorics}, 31:73--90, 2015.

\bibitem{indset}
Cristina Bazgan, Sonia Toubaline, and Zsolt Tuza.
\newblock The most vital nodes with respect to independent set and vertex
  cover.
\newblock {\em Discrete Applied Mathematics}, 159(17):1933--1946, 2011.

\bibitem{bernstein}
Aaron Bernstein, Karl D{\"a}ubel, Yann Disser, Max Klimm, Torsten M{\"u}tze,
  and Frieder Smolny.
\newblock Distance-preserving graph contractions.
\newblock {\em SIAM journal on Discrete Mathematics}, 33(3):1607--1636, 2019.

\bibitem{biedl}
Therese~C Biedl, Bro{\v{n}}a Brejov{\'a}, and Tom{\'a}{\v{s}} Vina{\v{r}}.
\newblock Simplifying flow networks.
\newblock In {\em Mathematical Foundations of Computer Science 2000: 25th
  International Symposium, MFCS 2000 Bratislava, Slovakia, August 28--September
  1, 2000 Proceedings}, pages 192--201. Springer, 2001.

\bibitem{scalability}
Avery Ching, Sergey Edunov, Maja Kabiljo, Dionysios Logothetis, and Sambavi
  Muthukrishnan.
\newblock One trillion edges: Graph processing at facebook-scale.
\newblock {\em Proceedings of the VLDB Endowment}, 8(12):1804--1815, 2015.

\bibitem{indset2}
Marie-Christine Costa, Dominique de~Werra, and Christophe Picouleau.
\newblock Minimum d-blockers and d-transversals in graphs.
\newblock {\em Journal of Combinatorial Optimization}, 22:857--872, 2011.

\bibitem{indset3}
{\"O}znur~Ya{\c{s}}ar Diner, Dani{\"e}l Paulusma, Christophe Picouleau, and
  Bernard Ries.
\newblock Contraction and deletion blockers for perfect graphs and h-free
  graphs.
\newblock {\em Theoretical computer science}, 746:49--72, 2018.

\bibitem{modification}
Fedor~V Fomin, Saket Saurabh, and Neeldhara Misra.
\newblock Graph modification problems: A modern perspective.
\newblock In {\em Frontiers in Algorithmics: 9th International Workshop, FAW
  2015, Guilin, China, July 3-5, 2015, Proceedings 9}, pages 3--6. Springer,
  2015.

\bibitem{galby2021reducing}
Esther Galby, Paloma~T Lima, and Bernard Ries.
\newblock Reducing the domination number of graphs via edge contractions and
  vertex deletions.
\newblock {\em Discrete Mathematics}, 344(1):112169, 2021.

\bibitem{dominate}
Esther Galby, Felix Mann, and Bernard Ries.
\newblock Blocking total dominating sets via edge contractions.
\newblock {\em Theoretical Computer Science}, 877:18--35, 2021.

\bibitem{kargar2011keyword}
Mehdi Kargar and Aijun An.
\newblock Keyword search in graphs: finding r-cliques.
\newblock {\em Proceedings of the VLDB Endowment}, 4(10):681--692, 2011.

\bibitem{diameter}
Eun~Jung Kim, Martin Milani{\v{c}}, J{\'e}r{\^o}me Monnot, and Christophe
  Picouleau.
\newblock Complexity and algorithms for constant diameter augmentation
  problems.
\newblock {\em Theoretical Computer Science}, 904:15--26, 2022.

\bibitem{nphard}
John~M Lewis and Mihalis Yannakakis.
\newblock The node-deletion problem for hereditary properties is np-complete.
\newblock {\em Journal of Computer and System Sciences}, 20(2):219--230, 1980.

\bibitem{networks2}
Jianxin Li, Taotao Cai, Ke~Deng, Xinjue Wang, Timos Sellis, and Feng Xia.
\newblock Community-diversified influence maximization in social networks.
\newblock {\em Information Systems}, 92:101522, 2020.

\bibitem{scalability3}
Qi~Li, Kai Zou, Deyu Kong, Huhao Guan, and Xike Xie.
\newblock Gpugraphx: A {GPU}-aided distributed graph processing system.
\newblock In {\em Web Information Systems Engineering--WISE 2021: 22nd
  International Conference on Web Information Systems Engineering, WISE 2021,
  Melbourne, VIC, Australia, October 26--29, 2021, Proceedings, Part II 22},
  pages 501--509. Springer, 2021.

\bibitem{reachability}
Yuzhi Liang, Yukun Wang, Kai Lei, Min Yang, Ziyu Lyu, et~al.
\newblock Reachability preserving compression for dynamic graph.
\newblock {\em Information Sciences}, 520:232--249, 2020.

\bibitem{contraction}
Paloma~T Lima, Vinicius~F dos Santos, Ignasi Sau, and U{\'e}verton~S Souza.
\newblock Reducing graph transversals via edge contractions.
\newblock {\em Journal of Computer and System Sciences}, 120:62--74, 2021.

\bibitem{scalability2}
Heng Lin, Xiaowei Zhu, Bowen Yu, Xiongchao Tang, Wei Xue, Wenguang Chen, Lufei
  Zhang, Torsten Hoefler, Xiaosong Ma, Xin Liu, et~al.
\newblock Shentu: processing multi-trillion edge graphs on millions of cores in
  seconds.
\newblock In {\em SC18: International Conference for High Performance
  Computing, Networking, Storage and Analysis}, pages 706--716. IEEE, 2018.

\bibitem{summarization}
Yike Liu, Tara Safavi, Abhilash Dighe, and Danai Koutra.
\newblock Graph summarization methods and applications: A survey.
\newblock {\em ACM computing surveys (CSUR)}, 51(3):1--34, 2018.

\bibitem{maximumclique}
Foad Mahdavi~Pajouh, Vladimir Boginski, and Eduardo~L Pasiliao.
\newblock Minimum vertex blocker clique problem.
\newblock {\em Networks}, 64(1):48--64, 2014.

\bibitem{unweighted}
Ning Ruan, Ruoming Jin, and Yan Huang.
\newblock Distance preserving graph simplification.
\newblock In {\em 2011 IEEE 11th International Conference on Data Mining},
  pages 1200--1205. IEEE, 2011.

\bibitem{shrink}
Amin Sadri, Flora~D Salim, Yongli Ren, Masoomeh Zameni, Jeffrey Chan, and Timos
  Sellis.
\newblock Shrink: Distance preserving graph compression.
\newblock {\em Information Systems}, 69:180--193, 2017.

\bibitem{application1}
Olaf Sporns.
\newblock Graph theory methods: applications in brain networks.
\newblock {\em Dialogues in clinical neuroscience}, 2022.

\bibitem{unweighted2}
Hannu Toivonen, Fang Zhou, Aleksi Hartikainen, and Atte Hinkka.
\newblock Compression of weighted graphs.
\newblock In {\em Proceedings of the 17th ACM SIGKDD international conference
  on Knowledge discovery and data mining}, pages 965--973, 2011.

\bibitem{application2}
Nenad Trinajstic.
\newblock {\em Chemical graph theory}.
\newblock CRC press, 2018.

\bibitem{nphard2}
Toshimasa Watanabe, Tadashi Ae, and Akira Nakamura.
\newblock On the {NP}-{H}ardness of edge-deletion and-contraction problems.
\newblock {\em Discrete Applied Mathematics}, 6(1):63--78, 1983.

\bibitem{zhou2010network}
Fang Zhou, Sebastien Malher, and Hannu Toivonen.
\newblock Network simplification with minimal loss of connectivity.
\newblock In {\em 2010 IEEE international conference on data mining}, pages
  659--668. IEEE, 2010.

\bibitem{transportation2}
Chun~Jiang Zhu, Kam-Yiu Lam, and Song Han.
\newblock Approximate path searching for supporting shortest path queries on
  road networks.
\newblock {\em Information Sciences}, 325:409--428, 2015.

\end{thebibliography}
\end{document}